\documentclass[11pt]{article} %%%%%%%%%%%%%%%%%%%%%%%%%%%%%%%%%%%%%%%%%%%%%%%%%

\usepackage{textcomp, amssymb}
\usepackage{anysize} 
\usepackage{amsfonts}
\usepackage{amsmath}
\usepackage{dsfont}
\usepackage{MnSymbol}
\usepackage{mathtools}
\usepackage{graphicx}
\usepackage{epsfig}
\usepackage{epstopdf}
\usepackage{tikz}
\usepackage{amsthm}
\usepackage{ifpdf}
\usepackage{thm-restate}
\usepackage{stmaryrd}
\usepackage{cleveref}
\usepackage{subcaption}
\usepackage{longtable}

\usetikzlibrary{calc,decorations.pathmorphing,shapes,graphs,positioning,automata,angles}

\newcounter{sarrow}

\newcounter{sarrow1}
\newcommand\xnrsquigarrow[1]{%
\stepcounter{sarrow1}%
\mathrel{\begin{tikzpicture}[baseline= {( $ (current bounding box.south) + (0,-0.5ex) $ )}]
\node[inner sep=.5ex] (\thesarrow) {$\scriptstyle #1$};
\path[draw,<-,decorate,
  decoration={zigzag,amplitude=0.7pt,segment length=1.2mm,pre=lineto,pre length=4pt}]
    (\thesarrow1.south east) -- (\thesarrow1.south west);
    $\slashedarrowfill@\relbar\relbar/$
    \end{tikzpicture}}%
}

\makeatletter
\def\slashedarrowfill@#1#2#3#4#5{%
  $\m@th\thickmuskip0mu\medmuskip\thickmuskip\thinmuskip\thickmuskip
   \relax#5#1\mkern-7mu%
   \cleaders\hbox{$#5\mkern-2mu#2\mkern-2mu$}\hfill
   \mathclap{#3}\mathclap{#2}%
   \cleaders\hbox{$#5\mkern-2mu#2\mkern-2mu$}\hfill
   \mkern-7mu#4$%
}
\def\rightslashedarrowfillb@{%
  \slashedarrowfill@\relbar\relbar/\rightarrow}
\newcommand\xnrightarrow[2][]{%
  \ext@arrow 0055{\rightslashedarrowfillb@}{#1}{#2}}

\def\rightslashedarrowfille@{%
  \slashedarrowfill@\relbar\relbar/\twoheadrightarrow}
\newcommand\xntworightarrow[2][]{%
  \ext@arrow 0055{\rightslashedarrowfille@}{#1}{#2}}

\def\rightslashedarrowfillg@{%
  \slashedarrowfill@\relbar\relbar{\raisebox{.12em}{}}\twoheadrightarrow}
\newcommand\xtworightarrow[2][]{%
  \ext@arrow 0055{\rightslashedarrowfillg@}{#1}{#2}}

\def\rightslashedarrowfillx@{%
  \slashedarrowfill@\Relbar\Relbar/\rightrightarrows}
\newcommand\xnTworightarrow[2][]{%
  \ext@arrow 0055{\rightslashedarrowfillx@}{#1}{#2}}

\def\rightslashedarrowfilly@{%
  \slashedarrowfill@\Relbar\Relbar{\raisebox{.12em}{}}\rightrightarrows}
\newcommand\xTworightarrow[2][]{%
  \ext@arrow 0055{\rightslashedarrowfilly@}{#1}{#2}}

\pgfdeclareshape{slash underlined}
{
  \inheritsavedanchors[from=rectangle] % this is nearly a circle
  \inheritanchorborder[from=rectangle]
  \inheritanchor[from=rectangle]{north}
  \inheritanchor[from=rectangle]{north west}
  \inheritanchor[from=rectangle]{north east}
  \inheritanchor[from=rectangle]{center}
  \inheritanchor[from=rectangle]{west}
  \inheritanchor[from=rectangle]{east}
  \inheritanchor[from=rectangle]{mid}
  \inheritanchor[from=rectangle]{mid west}
  \inheritanchor[from=rectangle]{mid east}
  \inheritanchor[from=rectangle]{base}
  \inheritanchor[from=rectangle]{base west}
  \inheritanchor[from=rectangle]{base east}
  \inheritanchor[from=rectangle]{south}
  \inheritanchor[from=rectangle]{south west}
  \inheritanchor[from=rectangle]{south east}
  \inheritanchorborder[from=rectangle]
  \foregroundpath{
    \southwest \pgf@xa=\pgf@x \pgf@ya=\pgf@y
    \northeast \pgf@xb=\pgf@x \pgf@yb=\pgf@y
    \pgf@xc=\pgf@xa
    \advance\pgf@xc by .5\pgf@xb
    \pgf@yc=\pgf@ya 
    \advance\pgf@xc by -1.3pt
    \advance\pgf@yc by -1.8pt
    \pgfpathmoveto{\pgfqpoint{\pgf@xc}{\pgf@yc}}
    \advance\pgf@xc by  2.6pt
    \advance\pgf@yc by  3.6pt
    \pgfpathlineto{\pgfqpoint{\pgf@xc}{\pgf@yc}}
    \pgfpathmoveto{\pgfqpoint{\pgf@xa}{\pgf@ya}}
    \pgfpathlineto{\pgfqpoint{\pgf@xb}{\pgf@ya}}
 }
}
\tikzset{nomorepostaction/.code=\let\tikz@postactions\pgfutil@empty}
\newcommand{\runhead}[1]{\par\medskip\noindent\textbf{#1}}

\newcommand*{\rmbrace}{|\mskip-4mu\}}
\newcommand*{\lmbrace}{\{\mskip-4mu|}
\newcommand*{\mset}[1]{\lmbrace#1\rmbrace}
\newcommand{\pb}{\underline{\leftrightarrow}}
\newcommand{\pbb}{\underline{\leftrightarrow}_{b}}
\newcommand{\pbrb}{\underline{\leftrightarrow}_{rb}}
\newtheorem{theorem}{Theorem}[section]
\newtheorem{definition}[theorem]{Definition}

\newtheorem{lemma}[theorem]{Lemma}

\begin{document}
%%% tile page %%%%%%%%%%%%%%%%%%%%%%%%%%%%%%%%%%%%%%%%%%%%%%%%%%%%%%%%%%%%%%%%%

\begin{titlepage}
\thispagestyle{empty}

\hrule
\begin{center}
{\bf\LARGE An Algebra of Transformers\\}
%\vspace{0.3cm}
%
%{\bf --- Parallelism as a Fundamental Nature\\}
\vspace{0.5cm}
--- Yong Wang, Zhiyan Wang, Hexiao Kang ---

\vspace{2cm}
%\begin{figure}[!htbp]
% \centering
% \includegraphics[width=1.0\textwidth]{images/cover_medium.eps}
%\end{figure}

\end{center}
\end{titlepage}

\newpage %%%%%%%%%%%%%%%%%%%%%%%%%%%%%%%%%%%%%%%%%%%%%%%%%%%%%%%%%%%%%%%%%%%%%%

\setcounter{page}{1}\pagenumbering{roman}

\tableofcontents

\newpage
\setcounter{page}{1}\pagenumbering{arabic}

        \section{Introduction}\label{intro} 

Because of the parallelizability of transformers \cite{AIAYN}, transformers are becoming the mainstream techniques in large language models (LLMs). On the theoretic analyses on transforms, they mainly focus on the computability analysis, such the Turing-completeness of unbounded precision transformers \cite{AIC}, the advantage of parallelizability of unbounded precision transformers \cite{PT1} and the limitations of bounded precision transformers \cite{PT2}.

This paper give transformers an algebra for reasoning, which relies heavily on the functional decomposition results of transformers \cite{TTCB}, we follow its notations and definitions of transformers.

We write $[n]$ for the set $\{1,\cdots,n\}$, $X^*$ for the set of sequences over set $X$, $X^+$ for the set of non-empty sequences over $X$, string $\Sigma^*$ as $w=w_1\cdots w_n$ for a finite alphabet $\Sigma$, $\mathbf{0}$ for zero vector or matrix, $\mathbf{I}$ for identity matrix, $\mathbf{1}$ for a vector or matrix whose entries are all $1$'s. For any function $f:\mathbb{R}\rightarrow\mathbb{R}$ and any vector $\mathbf{x}=[x_i]_{i=1}^{d}\in\mathbb{R}^{d}$, $f$ is extended coordinate-wise: \[f(\mathbf{x})=f\left(\left[\begin{array}{c}
     x_1\\
     x_2\\
     \vdots\\
     x_d
  \end{array}\right]\right)=\left[\begin{array}{c}
     f(x_1)\\
     f(x_2)\\
     \vdots\\
     f(x_d)
  \end{array}\right]\]
  
We write $f:X^{+}\xrightarrow{lp}Y^{+}$ for a function from $X^{+}$ to $Y^{+}$, which is length-preserving. And we write $f.\mathsf{a}$ for a parameter $a$ of $f$.

\begin{definition}[Transformers]
A transformer is a function $\mathsf{tf}:\Sigma^{+}\xrightarrow{lp}\mathcal{Y}^{+}$, $\mathsf{tf}(x_1\cdots x_n)=y_1\cdots y_n$, where $\mathbf{z}^{(0)}_{i}=\mathsf{tf}.\mathsf{we}(x_{i})+\mathsf{tf}.\mathsf{pe}_{n}(i)$, $(\mathbf{z}^{(1)}_{1},\cdots,\mathbf{z}^{(1)}_{n})=\mathsf{tf}.\mathsf{tl}^{(1)}(\mathbf{z}^{(0)}_{1},\cdots,\mathbf{z}^{(0)}_{n})$, $\cdots$, $(\mathbf{z}^{(L)}_{1},\cdots,\mathbf{z}^{(L)}_{n})=\mathsf{tf}.\mathsf{tl}^{(L)}(\mathbf{z}^{(L-1)}_{1},\cdots,\mathbf{z}^{(L-1)}_{n})$, $\mathbf{y}_{i}=\mathsf{tf}.\mathsf{out}(\mathbf{z}^{(L)}_{i})$. Where $\mathsf{tf}.\mathsf{we}$ and $\mathsf{tf}.\mathsf{pe}_{n}$ are word embedding function and position embedding function, $\mathsf{tf}.\mathsf{tl}^{(\ell)}$ are transformer layers, $\mathsf{tf}.\mathsf{out}$ is an unembedding function with some output space $\mathcal{Y}$. For the details of these functions, please refer to \cite{TTCB}.
\end{definition}

The decomposition of transformers coming from \cite{TTCB} is shown in \cref{DoT}. The idea of decomposition of a transformer is dividing a transformer into a set of atomic constructions, which are operations of vectors or matrices. For the details of these constructions, please refer to \cite{TTCB}.

\begin{center}
    \begin{longtable}{|c|c|c|}
      \caption{Decomposition of Transformers}\\
      \hline Names &Functions& Types\\
      \hline
      \endfirsthead
      \multicolumn{3}{c}{\bfseries continuing: Decomposition of Transformers} \\
      \hline Names &Functions& Types\\
      \hline
      \endhead
      \hline
      \multicolumn{3}{r}{to be continued\ldots} \\
      \endfoot
      \hline
      \multicolumn{3}{r}{end of Decomposition of Transformers} \\
      \endlastfoot
      $\alpha_{1}$ & word embedding & $\Sigma\rightarrow \mathbb{R}^{d}$\\
      $\alpha_{2}$ & position embedding & $\mathbb{N}\rightarrow\mathbb{R}^{d}$\\
      $\alpha_{3}$ & identity & $\mathbb{R}^{d}\rightarrow\mathbb{R}^{d}$\\
      $\alpha_{4}$ & min/max & $\mathbb{R}^{2}\rightarrow\mathbb{R}$\\
      $\alpha_{5}$ & add/subtract & $\mathbb{R}^{2}\rightarrow\mathbb{R}$\\
      $\alpha_{6}$ & multiply by $c$ & $\mathbb{R}\rightarrow\mathbb{R}$\\
      $\alpha_{7}$ & multiply $xy$ & $\mathbb{R}^{2}\rightarrow\mathbb{R}$\\
      $\alpha_{8}$ & comparators & $\mathbb{R}\rightarrow\{0,1\}$\\
      $\alpha_{9}$ & boolean functions & $\{0,1\}^{m}\rightarrow\{0,1\}$\\
      $\alpha_{10}$ & continuous piecewise linear functions & $\mathbb{R}^{d}\rightarrow\mathbb{R}$\\
      $\alpha_{11}$ & cancel residual & $\mathbb{R}^{d}\rightarrow\mathbb{R}^{d}$\\
      $\alpha_{12}$ & identity & $(\mathbb{R}^{d})^{+}\rightarrow(\mathbb{R}^{d})^{+}$\\
      $\alpha_{13}$ & uniform average & $(\mathbb{R}^{d})^{+}\rightarrow(\mathbb{R}^{d})^{+}$\\
      $\alpha_{14}$ & first & $(\mathbb{R}^{d})^{+}\rightarrow(\mathbb{R}^{d})^{+}$\\
      $\alpha_{15}$ & index lookup & $(\mathbb{R}^{d})^{+}\rightarrow\mathbb{R}^{+}$\\
      $\alpha_{16}$ & predecessor & $(\mathbb{R}^{d})^{+}\rightarrow(\mathbb{R}^{d})^{+}$\\
      $\alpha_{17}$ & tie-breaking & $--$\\
      $\alpha_{18}$ & multi-head & $--$\\
      $\alpha_{19}$ & unembedding & $--$\\
      $\alpha_{20}$ & layer normalization & $\mathbb{R}^{d}\rightarrow\mathbb{R}^{d}$\\
      $\phi_{1}$ & conditional & $\{0,1\}\times\mathbb{R}^{2}\rightarrow\mathbb{R}$
      \label{DoT}
    \end{longtable}
\end{center}

Note that, the atomic constructions include two kinds: one for atomic actions, the other for guards in programming language. The set of atomic actions is denoted $\mathsf{Act}=\{\alpha_{1}, \alpha_{2}, \alpha_{3}, \alpha_{4}, \alpha_{5}\}\cup\{\alpha_{6}, \alpha_{7},\alpha_{8}, \alpha_{9}, \alpha_{10}, \alpha_{11}, \alpha_{12}, \alpha_{13}, \alpha_{14}, \alpha_{15}, \alpha_{16}, \alpha_{17}, \alpha_{18}, \alpha_{19}, \alpha_{20}\}$ and the set of guards is denoted $\mathsf{Grd}=\{\phi_{1}\}$. In the boolean functions, the truth value $\mathbf{true}$ is interpreted as number $1$ and $\mathbf{false}$ as $0$; in the general vectors or matrices operations, $\mathbf{true}$ is interpreted as $\mathbf{I}$ and $\mathbf{false}$ as $\mathbf{0}$. 
\newpage\section{Preliminaries}\label{pre} 

For self-satisfactory, in this chapter, we introduce the preliminaries on multiset in \cref{ms}, equational logic in \cref{el} and pomset labelled transition systems in \cref{plts}.

\subsection{Multiset}\label{ms}

\begin{definition}[Labelled poset]
A labelled poset is a tuple $\mathbf{u}=\langle S, \leq, \lambda\rangle$, where $S$ is the carrier set, $\leq$ is a partial order on $S$ and $\lambda$ is a labelling function $\lambda:S\rightarrow\Sigma$.

For a labelled poset $\mathbf{u}$, $S_{\mathbf{u}}$, $\leq_{\mathbf{u}}$ and $\lambda_{\mathbf{u}}$ denote the carrier, the partial order and the labelling of $\mathbf{u}$ respectively. The set of labelled posets is denoted $\mathsf{LP}$ and the empty labelled poset is $1$.
\end{definition}

\begin{definition}[Poset morphism]
For posets $\langle S,\leq\rangle$ and $\langle S',\leq'\rangle$ and function $\varphi:S\rightarrow S'$, $f$ is called a poset morphism if for $a_0,a_1\in S$ with $a_0\leq a_1$, then $\varphi(a_0)\leq' \varphi(a_1)$ holds.
\end{definition}

\begin{definition}[Labelled poset isomorphism]
Let $\mathbf{u}=\langle S_1,\leq_1,\lambda_1\rangle$ and $\mathbf{v}=\langle S_2,\leq_2,\lambda_2\rangle$ be labelled posets. A labelled poset morphism $\varphi$ from $\mathbf{u}=\langle S_1,\leq_1,\lambda_1\rangle$ to $\mathbf{v}=\langle S_2,\leq_2,\lambda_2\rangle$ is a poset morphism from $\langle S_1,\leq_1\rangle$ and $\langle S_2,\leq_2\rangle$ with $\lambda_2\circ \varphi=\lambda_1$. Moreover, $\varphi$ is a labelled poset isomorphism if it is a bijection with $\varphi^{-1}$ is a poset isomorphism from $\langle S_2,\leq_2,\lambda_2\rangle$ to $\langle S_1,\leq_1,\lambda_1\rangle$. We say that $\mathbf{u}=\langle S_1,\leq_1,\lambda_1\rangle$ is isomorphic to $\mathbf{v}=\langle S_2,\leq_2,\lambda_2\rangle$ denoted $\langle S_1,\leq_1,\lambda_1\rangle\sim\langle S_2,\leq_2,\lambda_2\rangle$, if there exists a poset isomorphism $\varphi$ between $\langle S_1,\leq_1,\lambda_1\rangle$ and $\langle S_2,\leq_2,\lambda_2\rangle$.
\end{definition}

\begin{definition}[Multiset]
A multiset is a kind of set of objects which may be repetitive denoted $\mset{-}$, such that $\mset{a,b,b}$ is significantly distinguishable from $\mset{a,b}$.
\end{definition}

\begin{definition}[Pomset]
A partially ordered multiset, pomset, is a $\sim$-equivalence class of posets. The set of pomsets is denoted $\mathsf{Pom}$; the empty pomset is denoted $\mathbf{I}$ and the $\sim$-equivalence class of $\mathbf{I}$ is also denoted by $\mathbf{I}$.
\end{definition}

\begin{definition}[Pomset composition in parallel]
Let $U,V\in\mathsf{Pom}$ with $U=[\mathbf{u}]$ and $V=[\mathbf{v}]$. We write $U\parallel V$ for the parallel composition of $U$ and $V$, which is the pomset represented by $\mathbf{u}\parallel\mathbf{v}$, where

\[A_{\mathbf{u}\parallel\mathbf{v}}=A_{\mathbf{u}}\cup A_{\mathbf{v}}
\quad\quad\leq_{\mathbf{u}\parallel\mathbf{v}}=\leq_{\mathbf{u}}\cup\leq_{\mathbf{v}}\]
\end{definition}

\begin{definition}[Pomset composition in sequence]
Let $U,V\in\mathsf{Pom}$ with $U=[\mathbf{u}]$ and $V=[\mathbf{v}]$. We write $U\cdot V$ for the sequential composition of $U$ and $V$, which is the pomset represented by $\mathbf{u}\cdot\mathbf{v}$, where

\[A_{\mathbf{u}\cdot\mathbf{v}}=A_{\mathbf{u}}\cup A_{\mathbf{v}}
\quad\quad\leq_{\mathbf{u}\cdot\mathbf{v}}=\leq_{\mathbf{u}}\cup\leq_{\mathbf{v}}\cup(A_{\mathbf{u}}\times A_{\mathbf{v}})\]
\end{definition}

\begin{definition}[Pomset types]
Let $U\in\mathsf{Pom}$, $U$ is sequential (resp. parallel) if there exist non-empty pomsets $U_1$ and $U_2$ such that $U=U_1\cdot U_2$ (resp. $U=U_1\parallel U_2$).
\end{definition}

\begin{definition}[Factorization]
Let $U\in\mathsf{Pom}$. (1) When $U=U_1\cdot\cdots \cdot U_i\cdot\cdots \cdot U_n$ with each $U_i$ non-sequential and non-empty, the sequence $U_1,\cdots,U_i,\cdots,U_n$ is called a sequential factorization of $U$. (2) When $U=U_1\parallel\cdots \parallel U_i\parallel\cdots \parallel U_n$ with each $U_i$ non-parallel and non-empty, the multiset $\mset{U_1,\cdots,U_i,\cdots,U_n}$ is called a parallel factorization of $U$.
\end{definition}

\begin{lemma}[Factorization]
Sequential and parallel factorizations exist uniquely.
\end{lemma}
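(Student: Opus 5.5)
Existence comes from an induction on the number of elements, and uniqueness from graph-theoretic characterizations of the factors. Throughout I work with a representative labelled poset $\mathbf{u}$ of $U$; every notion involved is invariant under $\sim$. The operations combine disjoint carriers, so I assume the carriers of $\mathbf{u}$ and $\mathbf{v}$ are disjoint, and read $A$ as the carrier $S$.

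\textbf{Existence.} Induct on $|S_{\mathbf{u}}|$. If $U$ is non-empty and non-sequential, the one-term sequence $U$ is a sequential factorization. Otherwise $U=U_1\cdot U_2$ with both factors non-empty, hence strictly smaller. By induction each factor has a sequential factorization, and I concatenate the two. Concatenation is legitimate because $\cdot$ is associative: both bracketings give the same carrier and the order $\leq_1\cup\leq_2\cup\leq_3$ together with all cross pairs oriented left to right. The parallel case is identical, using associativity and commutativity of $\parallel$. This is also why the result is a multiset: reordering the factors does not change the composite.

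\textbf{Uniqueness, parallel case.} Consider the comparability graph of $\mathbf{u}$, where $a,b$ are adjacent iff $a\leq b$ or $b\leq a$. I claim that the non-parallel non-empty pomsets are exactly those whose comparability graph is connected. If the graph is disconnected, any union of components versus the rest splits $\mathbf{u}$ as $\mathbf{u}_1\parallel\mathbf{u}_2$. Conversely, in $\mathbf{u}_1\parallel\mathbf{u}_2$ no cross pair is comparable, so the graph is disconnected. Now take any parallel factorization $U=U_1\parallel\cdots\parallel U_n$. Each $U_i$ is a union of components, since nothing crosses between factors, and it is connected, so it is exactly one component with the restricted order and labels. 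Hence the multiset of factors is the multiset of (isomorphism classes of) the restrictions of $\mathbf{u}$ to its connected components, which depends only on $U$.

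\textbf{Uniqueness, sequential case.} This is the main obstacle. Use the incomparability graph instead, where $a\neq b$ are adjacent iff they are incomparable. I claim $U$ is sequential iff this graph is disconnected, and that the factors of any sequential factorization are exactly its components, ordered linearly. For the forward direction, in $\mathbf{u}_1\cdot\mathbf{u}_2$ every cross pair is comparable, so there are no cross edges. For the converse, suppose $C$ and $D$ are distinct components and $c\in C$, $d\in D$. Then $c$ and $d$ are comparable. The key step is that the direction is the same for every such pair. Say $c<d$ and $c'\in C$ is adjacent to $c$. Then $c'$ is comparable to $d$, and $d<c'$ would give $c<d<c'$, making $c,c'$ comparable, a contradiction. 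So $c'<d$, and by connectivity the direction propagates across $C$, and symmetrically across $D$. This defines a relation between components that is total and antisymmetric; it is transitive via the order on representatives, so it is a linear order $C_1<\cdots<C_n$. It follows that $\mathbf{u}=\mathbf{u}|_{C_1}\cdot\cdots\cdot\mathbf{u}|_{C_n}$, and each factor is non-sequential because its incomparability graph is connected. Finally, in any sequential factorization each $U_i$ is a union of components, since there are no edges across factors, and it is connected, so it is a single component. The position of $U_i$ in the sequence is forced by the linear order. Hence the sequence is unique.
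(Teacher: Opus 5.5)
Your proof is correct. The paper gives no proof to compare it with: the Factorization lemma is stated in the preliminaries without proof or citation, presumably taken from the pomset literature. Your write-up therefore supplies the missing justification.

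Your route identifies the parallel factors as the connected components of the comparability graph. It identifies the sequential factors as the components of the incomparability graph, linearly ordered by your propagation argument: if $c<d$ and $c'$ is incomparable to $c$, then $c'<d$. This gives more than uniqueness. It names the factors explicitly, and it shows that the number of factors equals the number of components.

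Two small points are worth tidying.

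First, your existence case split covers ``non-empty and non-sequential'' and ``$U=U_1\cdot U_2$ with non-empty factors''. It omits the empty pomset $\mathbf{I}$, which is non-sequential but empty. Its factorization is the empty sequence, or the empty multiset in the parallel case. It is unique because every factor must be non-empty. For the same reason, state your graph characterizations (non-parallel iff comparability graph connected, sequential iff incomparability graph disconnected) only for non-empty pomsets.

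Second, in the uniqueness steps, say explicitly that you fix an isomorphism $\varphi$ from $\mathbf{u}_1\cdot\cdots\cdot\mathbf{u}_n$ (or $\mathbf{u}_1\parallel\cdots\parallel\mathbf{u}_n$), taken on disjoint carriers, onto $\mathbf{u}$. Then work with the blocks $\varphi(S_{\mathbf{u}_i})$. These blocks partition $S_{\mathbf{u}}$ and carry exactly the restricted order. So the graph each block induces in $\mathbf{u}$ is the factor's own comparability or incomparability graph. With this in place, the step ``a union of components that is connected, hence a single component'' is immediate. So is the independence of the resulting sequence or multiset from the chosen representative $\mathbf{u}$.
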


\subsection{Equational Logic}\label{el}

\begin{definition}[Signature]
A signature $\Sigma$ consists of a finite set of function symbols (or operators) $f,g,\cdots$, where each function symbol $f$ has an arity $ar(f)$, being its number of arguments. A function symbol $\alpha,\beta,\gamma,\cdots$ of arity \emph{zero} is called a constant, a function symbol of arity one is called unary, and a function symbol of arity two is called binary.
\end{definition}

\begin{definition}[Term]
Let $\Sigma$ be a signature. The set $\mathbb{T}(\Sigma)$ of (open) terms $s,t,u,\cdots$ over $\Sigma$ is defined as the least set satisfying: (1)each variable is in $\mathbb{T}(\Sigma)$; (2) if $f\in \Sigma$ and $t_1,\cdots,t_{ar(f)}\in\mathbb{T}(\Sigma)$, then $f(t_1,\cdots,t_{ar(f)}\in\mathbb{T}(\Sigma))$. A term is closed if it does not contain free variables. The set of closed terms is denoted by $\mathcal{T}(\Sigma)$.
\end{definition}

\begin{definition}[Substitution]
Let $\Sigma$ be a signature. A substitution is a mapping $\sigma$ from variables to the set $\mathbb{T}(\Sigma)$ of open terms. A substitution extends to a mapping from open terms to open terms: the term $\sigma(t)$ is obtained by replacing occurrences of variables $x$ in t by $\sigma(x)$. A substitution $\sigma$ is closed if $\sigma(x)\in\mathcal{T}(\Sigma)$ for all variables $x$.
\end{definition}

\begin{definition}[Axiomatization]
An axiomatization over a signature $\Sigma$ is a finite set of equations, called axioms, of the form $s=t$ with $s,t\in\mathbb{T}(\Sigma)$.
\end{definition}

\begin{definition}[Equality relation]
An axiomatization over a signature $\Sigma$ induces a binary equality relation $=$ on $\mathbb{T}(\Sigma)$ as follows. (1)(Substitution) If $s=t$ is an axiom and $\sigma$ a substitution, then $\sigma(s)=\sigma(t)$. (2)(Equivalence) The relation $=$ is closed under reflexivity, symmetry, and transitivity. (3)(Context) The relation $=$ is closed under contexts: if $t=u$ and $f$ is a function symbol with $ar(f)>0$, then $f(s_1,\cdots,s_{i-1},t,s_{i+1},\cdots,s_{ar(f)})=f(s_1,\cdots,s_{i-1},u,s_{i+1},\cdots,s_{ar(f)})$.
\end{definition}

\begin{definition}[Model]
Assume an axiomatization $\mathcal{E}$ over a signature $\Sigma$, which induces an equality relation $=$. A model for $\mathcal{E}$ consists of a set $\mathcal{M}$ together with a mapping $\phi: \mathcal{T}(\Sigma)\rightarrow\mathcal{M}$. (1) $(\mathcal{M},\phi)$ is sound for $\mathcal{E}$ if $s=t$ implies $\phi(s)\equiv\phi(t)$ for $s,t\in\mathcal{T}(\Sigma)$; (2) $(\mathcal{M},\phi)$ is complete for $\mathcal{E}$ if $\phi(s)\equiv\phi(t)$ implies $s=t$ for $s,t\in\mathcal{T}(\Sigma)$.
\end{definition}

\begin{definition}[Term rewriting system]
Assume a signature $\Sigma$. A rewrite rule is an expression $s\rightarrow t$ with $s,t\in\mathbb{T}(\Sigma)$, where: (1) the left-hand side $s$ is not a single variable; (2) all variables that occur at the right-hand side $t$ also occur in the left-hand side $s$. A term rewriting system (TRS) is a finite set of rewrite rules.
\end{definition}

\begin{definition}[Rewrite relation]
A TRS over a signature $\Sigma$ induces an one-step rewrite relation $\rightarrow$ on $\mathbb{T}(\Sigma)$ as follows. (1) (Substitution) If $s\rightarrow t$ is a rewrite rule and $\sigma$ a substitution, then $\sigma(s)\rightarrow\sigma(t)$. (2) (Context) The relation $\rightarrow$ is closed under contexts: if $t\rightarrow u$ and f is a function symbol with $ar(f)>0$, then $f(s_1,\cdots,s_{i-1},t,s_{i+1},\cdots,s_{ar(f)})\rightarrow f(s_1,\cdots,s_{i-1},u,s_{i+1},\cdots,s_{ar(f)})$. The rewrite relation $\rightarrow^*$ is the reflexive transitive closure of the one-step rewrite relation $\rightarrow$: (1) if $s\rightarrow t$, then $s\rightarrow^* t$; (2) $t\rightarrow^* t$; (3) if $s\rightarrow^* t$ and $t\rightarrow^* u$, then $s\rightarrow^* u$.
\end{definition}

\begin{definition}[Normal form]
A term is called a normal form for a TRS if it cannot be reduced by any of the rewrite rules.
\end{definition}

\begin{definition}[Termination]
A TRS is terminating if it does not induce infinite reductions $t_0\rightarrow t_1\rightarrow t_2\rightarrow \cdots$.
\end{definition}

\begin{definition}[Weak confluence]
A TRS is weakly confluent if for each pair of one-step reductions $s\rightarrow t_1$ and $s\rightarrow t_2$, there is a term $u$ such that $t_1\rightarrow^* u$ and $t_2\rightarrow^* u$.
\end{definition}

\begin{theorem}[Newman's lemma]
If a TRS is terminating and weakly confluent, then it reduces each term to a unique normal form.
\end{theorem}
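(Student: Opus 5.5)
Newman's lemma is a statement about abstract reduction relations, so I would work with the one-step rewrite relation $\rightarrow$ induced by the TRS as a binary relation on $\mathbb{T}(\Sigma)$. The proof splits into existence and uniqueness of normal forms. Existence follows directly from termination. Start from any term $t_0$; if it is not a normal form, pick some $t_0\rightarrow t_1$, and repeat. This sequence cannot be infinite, so it stops at a term $u$ that admits no reduction. That $u$ is a normal form with $t_0\rightarrow^* u$.

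For uniqueness I would prove the stronger claim that the TRS is confluent. That is, whenever $s\rightarrow^* t_1$ and $s\rightarrow^* t_2$, there is a $u$ with $t_1\rightarrow^* u$ and $t_2\rightarrow^* u$. Given confluence, two normal forms $n_1,n_2$ of the same term would have a common reduct $u$. Since a normal form has no one-step reduction, $n_1\rightarrow^* u$ forces $n_1=u$, and likewise $n_2=u$. Hence $n_1=n_2$.

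The core of the proof is confluence, by well-founded induction. Termination makes the converse of $\rightarrow^+$ (the transitive closure of $\rightarrow$) a well-founded relation on terms. So I may prove ``every $s$ is confluent'' assuming every $s'$ with $s\rightarrow^+ s'$ is confluent. Take $s\rightarrow^* t_1$ and $s\rightarrow^* t_2$. If either reduction has length zero, the other term itself is the common reduct. Otherwise write $s\rightarrow s_1\rightarrow^* t_1$ and $s\rightarrow s_2\rightarrow^* t_2$, and argue in three steps:
\begin{itemize}
\item Weak confluence gives $v$ with $s_1\rightarrow^* v$ and $s_2\rightarrow^* v$.
\item Since $s\rightarrow^+ s_1$, the induction hypothesis makes $s_1$ confluent. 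Applied to $s_1\rightarrow^* t_1$ and $s_1\rightarrow^* v$, it yields $w$ with $t_1\rightarrow^* w$ and $v\rightarrow^* w$.
\item Then $s_2\rightarrow^* v\rightarrow^* w$ and $s_2\rightarrow^* t_2$. Since $s\rightarrow^+ s_2$, the hypothesis for $s_2$ yields $u$ with $w\rightarrow^* u$ and $t_2\rightarrow^* u$.
\end{itemize}
Then $t_1\rightarrow^* w\rightarrow^* u$ and $t_2\rightarrow^* u$, which completes the diagram.

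I expect the main obstacle to be justifying that well-founded induction is legitimate here. Termination as defined in the paper is the absence of infinite chains $t_0\rightarrow t_1\rightarrow\cdots$. Turning this into an induction principle needs a short argument. Suppose some term fails the property. By the induction step, every failing term has a one-step reduct that also fails, because the failure must propagate to some $s_1$ or $s_2$, or further down $\rightarrow^+$. Choosing such reducts repeatedly, using dependent choice, would build an infinite reduction chain, contradicting termination. I would state this principle explicitly before running the diagram-chasing argument. The rest is routine bookkeeping with $\rightarrow^*$, namely its reflexivity and transitivity, which hold by the definition of the rewrite relation.
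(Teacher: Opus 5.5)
Your proof is correct. Existence comes from termination; uniqueness comes from confluence, proved by well-founded induction with the standard diagram chase: weak confluence at $s$, then the induction hypothesis at $s_1$ and at $s_2$. This is the classical argument for Newman's lemma.

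The paper gives no proof of its own. It quotes Newman's lemma in the preliminaries as a known result from term rewriting, so there is no alternative route to compare against, and your argument supplies what the paper takes for granted.

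One small tightening is possible. Your induction hypothesis is only used at the one-step reducts $s_1$ and $s_2$, so you can induct along $\rightarrow$ directly rather than along $\rightarrow^+$. Your chain-building justification then applies verbatim: a non-confluent $s$ must have a non-confluent one-step reduct, namely $s_1$ or $s_2$. The hedge ``or further down $\rightarrow^+$'' can then be dropped.
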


\begin{definition}[Convergence]
A pair of terms $s$ and $t$ is said to be convergent if there exists a term $u$ such that $s\rightarrow^* u$ and $t\rightarrow^* u$.
\end{definition}

Axiomatizations can give rise to TRSs that are not weakly confluent, which can be remedied by Knuth-Bendix completion \cite{KB}. It determines overlaps in left hand sides of rewrite rules, and introduces extra rewrite rules to join the resulting right hand sides, witch are called critical pairs.

\begin{theorem}[Weak confluence]
A TRS is weakly confluent if and only if all its critical pairs are convergent.
\end{theorem}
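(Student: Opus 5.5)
The plan is to prove the two directions separately, using the standard notion of critical pair. Given rules $l_1\rightarrow r_1$ and $l_2\rightarrow r_2$ (with variables renamed apart) and a non-variable position $p$ of $l_1$, suppose $l_1|_p$ and $l_2$ unify with most general unifier $\sigma$. Then the pair $\langle \sigma(r_1),\ \sigma(l_1)[\sigma(r_2)]_p\rangle$ is a critical pair.

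The \emph{only if} direction is immediate. The term $\sigma(l_1)$ rewrites in one step to each component of the critical pair: at the root by $l_1\rightarrow r_1$, and at position $p$ by $l_2\rightarrow r_2$. Weak confluence then gives a common reduct $u$, so the critical pair is convergent.

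For the \emph{if} direction, take one-step reductions $s\rightarrow t_1$ and $s\rightarrow t_2$. Write the first as a contraction of the redex $\sigma_1(l_1)$ at position $p_1$, and the second as a contraction of $\sigma_2(l_2)$ at position $p_2$. We split into three cases according to how $p_1$ and $p_2$ relate.

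\textbf{Disjoint positions.} Neither of $p_1,p_2$ is a prefix of the other. Contracting the other redex in each $t_i$ yields the same term $u$, because the Context clause of the rewrite relation lets us rewrite in independent subterms. So $t_1\rightarrow u$ and $t_2\rightarrow u$.

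\textbf{Nested, variable overlap.} Without loss of generality $p_2=p_1q$. Suppose $q$ passes through, or lies below, a variable position of $l_1$, say an occurrence of the variable $x$. Then the second redex lies inside $\sigma_1(x)$. Define $\sigma_1'$ to agree with $\sigma_1$ except that $\sigma_1'(x)$ is $\sigma_1(x)$ with the inner redex contracted. Contracting that redex in every copy of $\sigma_1(x)$ inside $t_1$ takes $t_1\rightarrow^* s[\sigma_1'(r_1)]_{p_1}$. Likewise $t_2$ first contracts the copies of that redex in the other occurrences of $x$ in $l_1$, reaching $s[\sigma_1'(l_1)]_{p_1}$, and then applies $l_1\rightarrow r_1$ to reach the same term. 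This case needs the TRS requirement that every variable of $r_1$ occurs in $l_1$, so that $\sigma_1'(r_1)$ is well defined via $\sigma_1'$. It also relies on the fact that a rewrite is preserved under substitution.

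\textbf{Nested, critical overlap.} The position $q$ is a non-variable position of $l_1$. Then $\sigma_1\cup\sigma_2$ unifies $l_1|_q$ with $l_2$, so it factors through the most general unifier as $\tau\circ\sigma$. Hence $t_1$ and $t_2$ are $s[\tau(c_1)]_{p_1}$ and $s[\tau(c_2)]_{p_1}$ for the critical pair $\langle c_1,c_2\rangle$. By hypothesis $c_1\rightarrow^* u$ and $c_2\rightarrow^* u$. Closure of $\rightarrow^*$ under substitution (the Substitution clause, applied stepwise) and under context then gives a common reduct $s[\tau(u)]_{p_1}$.

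The main obstacle is the variable-overlap case. A left-linear rule is easy, but when $x$ occurs several times in $l_1$ or $r_1$ we must count the multiple contractions carefully. This is why the statement only promises $\rightarrow^*$, not one-step joinability. The existence of most general unifiers, which the critical-overlap case depends on, I would cite as standard from \cite{KB}.
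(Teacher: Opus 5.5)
The paper gives no proof of this statement. It is quoted in the preliminaries as the standard Critical Pair Lemma, with only a pointer to Knuth--Bendix completion, so there is no argument of the paper's to compare yours against. What you wrote is the standard Knuth--Bendix/Huet proof, and it is correct.

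Your ``only if'' direction is valid because the paper defines the one-step rewrite relation, and hence weak confluence, on open terms $\mathbb{T}(\Sigma)$. That makes $\sigma(l_1)$ an admissible source of a peak. If weak confluence were restricted to closed terms, this direction would fail: take $f(x)\rightarrow a$ and $f(x)\rightarrow x$ over the signature $\{a,f\}$. Your three-way case split for the ``if'' direction is exhaustive, and each case is handled correctly.

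Two bookkeeping remarks, neither a gap:
\begin{itemize}
  \item In the variable-overlap case, $\mathrm{Var}(r_1)\subseteq\mathrm{Var}(l_1)$ is not what makes $\sigma_1'(r_1)$ well defined. Since $\sigma_1'$ is a total substitution, the argument goes through without that condition.
  \item That case actually uses closure under context, in the form: if $\sigma_1(x)\rightarrow\sigma_1'(x)$ then $\sigma_1(t)\rightarrow^*\sigma_1'(t)$, one step per occurrence of $x$ in $t$. Stability of $\rightarrow^*$ under substitution is what you need in the critical-overlap case, to pass from $c_i\rightarrow^* u$ to $\tau(c_i)\rightarrow^*\tau(u)$.
\end{itemize}
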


\begin{definition}[Elimination property]
Let a process algebra with a defined set of basic terms as a subset of the set of closed terms over the process algebra. Then the process algebra has the elimination to basic terms property if for every closed term $s$ of the algebra, there exists a basic term $t$ of the algebra such that the algebra$\vdash s=t$.
\end{definition}

\begin{definition}[Strongly normalizing]
A term $s_0$ is called strongly normalizing if does not an infinite series of reductions beginning in $s_0$.
\end{definition}

\begin{definition}
We write $s>_{lpo} t$ if $s\rightarrow^+ t$ where $\rightarrow^+$ is the transitive closure of the reduction relation defined by the transition rules of an algebra.
\end{definition}

\begin{theorem}[Strong normalization]\label{SN}
Let a term rewriting system (TRS) with finitely many rewriting rules and let $>$ be a well-founded ordering on the signature of the corresponding algebra. If $s>_{lpo} t$ for each rewriting rule $s\rightarrow t$ in the TRS, then the term rewriting system is strongly normalizing.
\end{theorem}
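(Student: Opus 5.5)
The statement to prove is \cref{SN}: a TRS whose finitely many rules all satisfy $s>_{lpo} t$, for a well-founded ordering $>$ on the signature, is strongly normalizing. I would read $>_{lpo}$ here as the lexicographic path ordering induced by the precedence $>$ on function symbols. The paper's definition literally identifies it with $\rightarrow^+$, which would make the statement circular, so I use the standard recursive definition instead. Under that definition, $s>_{lpo}t$ holds when one of the following applies:

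1. some immediate subterm $s_i$ of $s$ satisfies $s_i\geq_{lpo}t$;
2. $s=f(\vec s)$, $t=g(\vec t)$ with $f>g$, and $s>_{lpo}t_j$ for all $j$;
3. $s=f(\vec s)$, $t=f(\vec t)$, the tuple $\vec s$ is lexicographically greater than $\vec t$ under $>_{lpo}$, and $s>_{lpo}t_j$ for all $j$.

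The strategy is the classical one. Show that $>_{lpo}$ is a well-founded reduction order, meaning it is closed under contexts and substitutions. Then every rewrite step $C[\sigma(s)]\rightarrow C[\sigma(t)]$ strictly decreases terms in $>_{lpo}$, so an infinite reduction would yield an infinite descending chain, which is impossible.

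The order of steps is as follows.

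First, I check that $>_{lpo}$ is transitive and irreflexive, by simultaneous induction on the sizes of the terms involved.

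Second, I prove the subterm property, $f(\ldots,s_i,\ldots)>_{lpo}s_i$, which follows directly from the first clause.

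Third, I prove closure under contexts: if $s>_{lpo}t$ then $f(\ldots,s,\ldots)>_{lpo}f(\ldots,t,\ldots)$. This uses the lexicographic clause, together with the subterm property to dominate the other arguments.

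Fourth, I prove closure under substitutions: $s>_{lpo}t$ implies $\sigma(s)>_{lpo}\sigma(t)$, by induction on $|s|+|t|$ and case analysis on which clause applies. The variable case is handled by the subterm property. With these, each one-step rewrite (substitution instance plus context) is strictly decreasing.

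The main obstacle is the well-foundedness of $>_{lpo}$ on terms. This is where the well-foundedness of $>$ on the signature is used. I would follow the standard minimal-bad-sequence argument, or more simply the computability argument.

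I call a term \emph{accessible} if there is no infinite $>_{lpo}$-descending chain starting from it. The claim is that $f(s_1,\ldots,s_n)$ is accessible whenever all the $s_i$ are. I prove this by a nested induction:

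1. an outer induction on $f$ with respect to the well-founded precedence $>$;
2. an inner induction on the tuple $(s_1,\ldots,s_n)$ under the lexicographic extension of $>_{lpo}$ restricted to accessible terms, which is well-founded because the arity $n$ is fixed;
3. an innermost induction on the size of the candidate successor $t$.

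To finish the claim, take any $t$ with $f(\vec s)>_{lpo}t$ and treat the three clauses in turn:

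1. In the first clause, $t$ is at most some $s_i$, hence accessible.
2. In the second clause, the arguments of $t$ are smaller than $f(\vec s)$ and therefore accessible by the size induction, and the head $g$ is smaller than $f$, so the outer induction applies.
3. In the third clause, the arguments of $t$ are again accessible and the tuple is lexicographically smaller, so the inner induction applies.

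Since every successor of $f(\vec s)$ is accessible, so is $f(\vec s)$. Structural induction on terms then shows that all terms are accessible. Variables and constants are the base cases, where constants are handled by the precedence induction.

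Combining the two parts, every rewrite step strictly decreases terms in the well-founded order $>_{lpo}$. Hence no term admits an infinite reduction, and the TRS is strongly normalizing. Finiteness of the rule set is not actually needed beyond ensuring that the hypothesis is checkable rule by rule.
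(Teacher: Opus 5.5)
The paper gives no proof of this theorem. It is quoted in the preliminaries as a known fact from the rewriting literature (the lexicographic path ordering of Kamin and L\'evy), so there is no in-paper argument to compare with. Under the standard recursive reading of $>_{lpo}$, your proof is correct.

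Replacing the paper's definition is necessary, not just convenient, and for a stronger reason than circularity. If $s>_{lpo}t$ literally meant $s\rightarrow^+t$ for the TRS at hand, every rule $s\rightarrow t$ would satisfy the hypothesis trivially, so the statement would be false. The one-rule system $a\rightarrow a$ is a counterexample.

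Your route differs from the classical one (Dershowitz), which shows that $>_{lpo}$ is a simplification ordering and then applies Kruskal's tree theorem. You instead use an accessibility argument:
\begin{itemize}
\item an outer induction on the precedence;
\item an inner induction on tuples of accessible arguments, under the fixed-arity lexicographic extension;
\item an innermost induction on the size of the successor.
\end{itemize}
This is self-contained and uses exactly the stated hypothesis, a well-founded precedence. It also works for infinite signatures with fixed arities. The Kruskal route needs a finite signature (or a well-quasi-ordered precedence); the paper's definition of signature happens to provide finiteness. In exchange, it yields the more general fact that every simplification ordering on such a signature is well-founded.

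One simplification: your first step, proving transitivity and irreflexivity, is not needed. The conclusion only uses two facts. First, every rewrite step lies in $>_{lpo}$, which follows from stability under substitutions and closure under contexts. Second, $>_{lpo}$ admits no infinite descending chain. The lexicographic product of finitely many well-founded relations is well-founded without assuming transitivity. Dropping that step removes the most delicate of the standard LPO lemmas from your proof.
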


\subsection{Pomset Labelled Transition System}\label{plts}

\begin{definition}[Pomset labelled transition system]\label{pomlts}
A pomset labelled transition system (PLTS) is a quadruple $\langle\mathsf{Proc},\mathsf{Act},\{\xrightarrow{U}|U\in\mathsf{Pom}\},\mathsf{Pred}\rangle$, where:

\begin{enumerate}
  \item $\mathsf{Proc}$ is a set of processes, ranged over by $p,q$.
  \item $\mathsf{Act}$ is a set of actions, ranged over by $a,b$.
  \item $\mathsf{Pom}$ is the set of pomsets over $\mathsf{Act}$, ranged over by $U,V$.
  \item $\xrightarrow{U}\subseteq\mathsf{Proc}\times\mathsf{Pom}\times\mathsf{Proc}$ is called a pomset transition for every $U\in\mathsf{Pom}$. We write $p\xrightarrow{U}q$ instead of $(p,q)\in\xrightarrow{U}$, and write $p\xnrightarrow{U}$ if $p\xrightarrow{U}q$ with no state $q$. Intuitively, $p\xrightarrow{U}q$ means that state $p$ can evolve into state $q$ by the execution of pomset $U$. We see that traditional single action transition $p\xrightarrow{a}q$ with $a\in\mathsf{Act}$ is a special case of pomset transition in which the pomset is primitive.
  \item For every $P\in\mathsf{Pred}$, we write $pP$ (resp. $p\neg P$) if state $p$ satisfies (resp. does not satisfy) predicate $P$. Intuitively, $pP$ means that predicate $P$ holds in state $p$.
\end{enumerate}

The binary pomset transitions $p\xrightarrow{U}q$ and unary predicates $pP$ in a PLTS are called transitions.

Note that, by replacing $U\in\mathsf{Pom}$ by $a\in\mathsf{Act}$, we can get the definition of traditional labelled transition system (LTS). And we use $\mathsf{Act}(p)\subseteq\mathsf{Act}$ to denote the set of actions occurring in $p$.
\end{definition}

\begin{definition}[Finiteness conditions on a PLTS]
A PLTS is:

\begin{itemize}
  \item Finitely branching: if for every state $p$ there are only finitely many outgoing pomset transitions $p\xrightarrow{U}q$.
  \item Regular: if it is finitely branching and each state can reach only finitely many other states.
  \item Finite: if it is finitely branching and there is no infinite sequence of pomset transitions $p_0\xrightarrow{U_0}p_1\xrightarrow{U_1}\cdots$.
\end{itemize}
\end{definition}

A pomset transition system specification is a collection of inductive proof rules to derive the pomset transitions over states in $\mathsf{Proc}$.

\begin{definition}[Pomset transition system specification]
Let $\Sigma$ be a signature, and $t,t'\in\mathbb{T}(\Sigma)$. A pomset transition rule $\rho$ is of the form $\frac{H}{\alpha}$, where $H$ is the set of premises with positive premises $t\xrightarrow{U}t'$ and $tP$, and negative premises $t\xnrightarrow{U}$ and $t\neg P$; $\alpha$ is the conclusion with the form $t\xrightarrow{U}t'$ and $tP$, and $U\in\mathsf{Pom}$ and $P$ is a predicate. For the conclusion with the form of $t\xrightarrow{U}t'$, the left-hand side of the conclusion $t$ is called the source of $\rho$ and the right-hand side of the conclusion $t'$ is called the target of $\rho$. A transition rule is closed if it does not contain variables.

A pomset transition system specification (PTSS) is a set of pomset transition rules. A PTSS is positive if its transition rules do not contain negative premises.
\end{definition}

\begin{definition}[Pomset transitions]
Let $p,q\in\mathsf{Proc}$ and $U\in\mathsf{Pom}$, $p\xrightarrow{U} q$ is called a pomset transition from $p$ to $q$.
\end{definition}

\begin{definition}[Pomset bisimulation]
Let $\langle\mathsf{Proc},\mathsf{Act},\{\xrightarrow{U}|U\in\mathsf{Pom}\},\mathsf{Pred}\rangle$ be a PLTS, $S$ a set of states, and $p,q\in\mathsf{Proc}$ and $s,s'\in S$. A pomset bisimulation is a relation $R\subseteq\langle\mathsf{Proc},S\rangle\times\langle\mathsf{Proc},S\rangle$, such that:

\begin{enumerate}
  \item If $\langle p,s\rangle R \langle q,s\rangle$ and $\langle p,s\rangle\xrightarrow{U}\langle p',s'\rangle$, then $\langle q,s\rangle\xrightarrow{U}\langle q',s'\rangle$, with $U\in \mathsf{Pom}$, and $\langle p',s'\rangle R \langle q',s'\rangle$.
  \item If $\langle p,s\rangle R \langle q,s\rangle$ and $\langle q,s\rangle\xrightarrow{U}\langle q',s'\rangle$, then $\langle p,s\rangle\xrightarrow{U}\langle p',s'\rangle$, with $U\in \mathsf{Pom}$, and $\langle p',s'\rangle R \langle q',s'\rangle$.
  \item If $\langle p,s\rangle R \langle q,s\rangle$ and $\langle p,s\rangle P$, then $\langle q,s\rangle P$.
  \item If $\langle p,s\rangle R \langle q,s\rangle$ and $\langle q,s\rangle P$, then $\langle p,s\rangle P$.
\end{enumerate}

We say that $p$, $q$ are pomset bisimilar, written $p\pb q$, if there exists a pomset bisimulation $R$, such that $\langle p,s\rangle R \langle q,s\rangle$.
\end{definition}

\begin{definition}[Congruence]
Let $\Sigma$ be a signature. An equivalence relation $R$ on $\mathcal{T}(\Sigma)$ is a congruence if for each $f\in\Sigma$, if $s_i R t_i$ for $i\in\{1,\cdots,ar(f)\}$, then $f(s_1,\cdots,s_{ar(f)}) R f(t_1,\cdots,t_{ar(f)})$.
\end{definition}

\begin{definition}[Operational conservative extension]
$\Sigma_0$ and $\Sigma_1$ are two signatures, the union of $\Sigma_0$ and $\Sigma_1$, denoted $\Sigma_0\oplus\Sigma_1$, agrees on the arity of the function symbols in the intersection of $\Sigma_0$ and $\Sigma_1$. $T_0$ and $T_1$ are PTSSs over $\Sigma_0$ and $\Sigma_1$, the sum of $T_0$ and $T_1$, denoted $T_0\oplus T_1$, is the PTSS over signature $\Sigma_0\oplus\Sigma_1$ containing the pomset transition rules in $T_0$ and $T_1$. A PTSS $T_0\oplus T_1$ is an operational conservative extension of PTSS $T_0$, if for each closed transition rule $\frac{N}{\alpha}$ such that:

\begin{itemize}
  \item $N$ contains only negative literals.
  \item The left-hand side of $\alpha$ is in $T(\Sigma_0)$.
  \item $T_0\oplus T_1\vdash\frac{N}{\alpha}$.
\end{itemize}

we have that $T_0\vdash\frac{N}{\alpha}$.
\end{definition}

\begin{definition}[Source-dependency]
The source-dependent variables in a pomset transition rule $\rho$ are defined inductively as follows:

\begin{itemize}
  \item All variables in the source of $\rho$ are source-dependent.
  \item If $t\xrightarrow{U}t'$ is a premise of $\rho$ and all variables in $t$ are source-dependent, then all variables in $t'$ are source-dependent.
\end{itemize}

A pomset transition rule is source-dependent if all its variables are.
\end{definition}

\begin{definition}[Freshness]
We say that a term in $\mathbb{T}(\Sigma_0\oplus\Sigma_1)$ is fresh if it contains a function symbol (may be a constant) from $\Sigma_1\setminus\Sigma_0$. Similarly, a predicate symbol in $T_1$ is fresh if it does not occur in $T_0$.
\end{definition}

\begin{theorem}[Operational conservative extension]
Let $T_0$ and $T_1$ be PTSSs over signature $\Sigma_0$ and $\Sigma_1$, respectively. $T_0\oplus T_1$ is an operational conservative extension of $T_0$, if:

\begin{enumerate}
  \item Each $\rho\in T_0$ is source-dependent.
  \item For each $\rho\in T_1$,
       \begin{itemize}
         \item Either the source of $\rho$ is fresh.
         \item Or $\rho$ has a premise of the form $t\xrightarrow{U}t'$ or $tP$, where:
              \begin{itemize}
                \item $t\in\mathbb{T}(\Sigma_0)$.
                \item All variables in $t$ occur in the source of $\rho$.
                \item $t'$, $P$ and every $a\in U$ are fresh.
              \end{itemize}
       \end{itemize}
\end{enumerate}
\end{theorem}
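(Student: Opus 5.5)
We prove the operational conservative extension theorem by induction on the structure of proofs from $T_0\oplus T_1$. Call a closed positive literal ($t\xrightarrow{U}t'$ or $tP$) \emph{$T_0$-based} if its source lies in $\mathcal{T}(\Sigma_0)$. The claim to establish is the following. If $T_0\oplus T_1\vdash\frac{N}{\alpha}$ with $N$ negative and $\alpha$ $T_0$-based, then every rule application in that proof whose conclusion is $T_0$-based uses a rule of $T_0$, and its target and all intermediate terms stay in $\mathcal{T}(\Sigma_0)$. The theorem follows directly from this claim, since the proof tree then is a $T_0$-proof of $\frac{N}{\alpha}$.

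We argue by induction on the depth of the proof tree, analysing the last rule $\rho$ applied under a closed substitution $\sigma$ with $\sigma(\mathrm{source}(\rho))\in\mathcal{T}(\Sigma_0)$.

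\textbf{Case $\rho\in T_0$.} Since $\rho$ is source-dependent, every variable of $\rho$ is reached from the source by a chain of positive premises. We do an inner induction along that chain. Variables in the source are mapped by $\sigma$ to $\Sigma_0$-terms, because the closed source is a $\Sigma_0$-term and the source is a $\Sigma_0$-term. For a premise $t\xrightarrow{U}t'$ whose variables are already known to be $\Sigma_0$-valued, $\sigma(t)$ is a $T_0$-based literal source with a shorter proof. The outer induction then gives that $\sigma(t')\in\mathcal{T}(\Sigma_0)$, so the variables of $t'$ are $\Sigma_0$-valued. Hence every premise is $T_0$-based, the outer hypothesis applies to all of them, and the target of $\rho$ is also a $\Sigma_0$-term.

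\textbf{Case $\rho\in T_1$.} We show this case cannot occur. If the source of $\rho$ is fresh, then its closed instance contains a symbol of $\Sigma_1\setminus\Sigma_0$, which contradicts $\sigma(\mathrm{source})\in\mathcal{T}(\Sigma_0)$. Otherwise $\rho$ has a premise $t\xrightarrow{U}t'$ or $tP$ with $t\in\mathbb{T}(\Sigma_0)$ and all variables of $t$ occurring in the source. Then $\sigma(t)\in\mathcal{T}(\Sigma_0)$, so this premise is $T_0$-based and has a shorter proof. By the induction hypothesis it is derived only by $T_0$ rules, so its target is a $\Sigma_0$-term, its labels are actions of $T_0$, and its predicate occurs in $T_0$. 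This contradicts the freshness of $t'$, of $P$, or of some $a\in U$.

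The premises in $N$ are negative literals and are taken as hypotheses, so they need no derivation.

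The main obstacle is the interplay between the two inductions in the $T_0$ case. Source-dependency only guarantees that variables become $\Sigma_0$-valued \emph{in some order} of premises, so the inner induction must follow the dependency order rather than the listed order of premises. I would also state the outer hypothesis strongly enough to cover targets and labels, and not just rule membership. Otherwise the freshness contradiction in the $T_1$ case has nothing to bite on.
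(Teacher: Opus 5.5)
The paper gives no proof of this statement; it is quoted in the preliminaries as a known result from the literature on conservative extensions of transition system specifications. Your argument is essentially the standard proof of that result, and it is correct. The key steps are induction on the proof tree, with the hypothesis strengthened so that a $T_0$-based positive literal is derived only by $T_0$-rules, has its target in $\mathcal{T}(\Sigma_0)$, and carries labels and predicates that occur in $T_0$.

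Two small tightenings:
\begin{itemize}
\item Phrase your claim as ``every node of the proof tree is $T_0$-based and is the conclusion of a closed instance of a $T_0$-rule.'' This is exactly what your $T_0$ case establishes, since it shows all premises are $T_0$-based, and it is what makes the tree a $T_0$-proof.
\item Your contradiction in the $T_1$ case needs only one of $t'$, $P$, or some $a\in U$ to be fresh. So it works under the usual disjunctive reading of the freshness clause as well as under the conjunctive wording in the statement.
\end{itemize}
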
 
\newpage\section{Basic Algebra for Transformers}\label{bat}

In this section, we will discuss the algebraic laws for the set of atomic actions $\mathsf{Act}$ the set of guards $\mathsf{Grd}$, the deadlock constant $\mathbf{0}$ and the identity $\mathbf{I}$, the set of pomsets over $\mathsf{Act}\cup\{\mathbf{0}\}\cup\{\mathbf{I}\}$ denoted $\mathsf{Pom}$, and the serial composition $\circ$ and the alternative composition $+$, and let $\alpha\in\mathsf{Act}\cup\{\mathbf{0}\}\cup\{\mathbf{I}\}$, $U\in\mathsf{Pom}$. The resulted algebra is called Basic Algebra for True Concurrency, abbreviated BAT.

\subsection{Axiomatic System}

In the following, let $\alpha,\beta,\alpha',\beta'\in \mathsf{Act}$, $\phi,\psi\in\mathsf{Grd}$, and let variables $x,y,z$ range over the set of terms BAT, $p,q,s$ range over the set of closed terms. The set of axioms of BAT consists of the laws given in Table \ref{AxiomsForBAT}. The predicate $test(\phi,s)$ represents that $\phi$ holds in the state $s$, and $test(\mathbf{I},s)$ holds and $test(\mathbf{0},s)$ does not hold. $effect(U,s)\in S$ denotes $s'$ in $s\xrightarrow{U}s'$ for $U\in\mathsf{Pom}$. The predicate weakest precondition $wp(U,\phi)$ denotes that $\forall s,s'\in S, test(\phi,effect(U,s))$ holds.

We extend $\mathsf{Grd}$ to the set of basic guards $G$ with element $\phi,\psi,\cdots$, which is generated by the following formation rules:

\[\phi::=\mathbf{0}~|~\mathbf{I}~|~\neg\phi~|~\psi\in \mathsf{Grd}~|~\phi+\psi~|~\phi\circ\psi\]

The deadlock $\mathbf{0}$ is caused by incomposibility of serial composition, which may be caused by the mismatch of dimensions of vectors or matrices, we define the composibility function of two actions $a,b\in\mathsf{Act}$ as follows: \[\gamma(a,b)=\begin{cases}
                                                                                                                                                                                                                                                    a\circ b,& a,b\mbox{ are composible};\\
                                                                                                                                                                                                                                                    \mathbf{0},& \mbox{otherwise}.
                                                                                                                                                                                                                                                \end{cases}\]
Each serial composition implicitly executes this function.

\begin{center}
    \begin{longtable}{|c|c|}
      \caption{Axioms of BAT}\\
      \hline No. &Axioms\\
      \hline
      \endfirsthead
      \multicolumn{2}{c}{\bfseries continuing: Axioms of BAT} \\
      \hline No. &Axioms\\
      \hline
      \endhead
      \hline
      \multicolumn{2}{r}{to be continued\ldots} \\
      \endfoot
      \hline
      \multicolumn{2}{r}{end of Axioms of BAT} \\
      \endlastfoot
      $A1$ & $x+ y = y+ x$\\
      $A2$ & $(x+ y)+ z = x+ (y+ z)$\\
      $A3$ & $x+ x = x$\\
      $A4$ & $(x+ y)\circ z = x\circ z + y\circ z$\\
      $A5$ & $(x\circ y)\circ z = x\circ(y\circ z)$\\
      $A6$ & $x+\mathbf{0}=x$\\
      $A7$ & $\mathbf{0}\circ x=\mathbf{0}$\\
      $A8$ & $\mathbf{I}\circ x=x$\\
      $A9$ & $x\circ \mathbf{I}=x$\\
      $G1$ & $\phi\circ\neg\phi = \mathbf{0}$\\
      $G2$ & $\phi+\neg\phi = \mathbf{I}$\\
      $G3$ & $\phi\circ\mathbf{0} = \mathbf{0}$\\
      $G4$ & $\phi\circ(x+y)=\phi\circ x+\phi\circ y$\\
      $G5$ & $\phi\circ(x\circ y)= (\phi\circ x)\circ y$\\
      $G6$ & $(\phi+\psi)\circ x = \phi\circ x + \psi\circ x$\\
      $G7$ & $(\phi\circ \psi)\circ x = \phi\circ(\psi\circ x)$\\
      $G8$ & $\phi=\mathbf{I}$ if $\forall s\in S.test(\phi,s)$\\
      $G9$ & $\phi_0\circ\cdots\circ\phi_n = \mathbf{0}$ if $\forall s\in S,\exists i\leq n.test(\neg\phi_i,s)$\\
      $G10$ & $wp(U,\phi)U\phi=wp(U,\phi)U$\\
      $G11$ & $\neg wp(U,\phi)U\neg\phi=\neg wp(U,\phi)U$
      \label{AxiomsForBAT}
   \end{longtable}
\end{center}

\begin{definition}[Basic terms of BAT]\label{BTBAT}
The set of basic terms of BAT, $\mathcal{B}(BAT)$, is inductively defined as follows:

\begin{enumerate}
  \item $\mathsf{Act}\cup\{\mathbf{0}\}\cup\{\mathbf{I}\}\subset\mathcal{B}(BAT)$.
  \item $\mathbf{Grd}\subset\mathcal{B}(BAT)$.
  \item If $U\in \mathsf{Pom}, t\in\mathcal{B}(BAT)$ then $U\circ t\in\mathcal{B}(BAT)$.
  \item If $\phi\in \mathsf{Grd}, t\in\mathcal{B}(BAT)$ then $\phi\circ t\in\mathcal{B}(BAT)$.
  \item If $t,s\in\mathcal{B}(BAT)$ then $t+ s\in\mathcal{B}(BAT)$.
\end{enumerate}
\end{definition}

\begin{theorem}[Elimination theorem of BAT]\label{ETBAT}
Let $p$ be a closed BAT term. Then there is a basic BAT term $q$ such that $BAT\vdash p=q$.
\end{theorem}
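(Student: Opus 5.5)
We follow the standard route for elimination theorems in process algebra: orient the axioms of Table \ref{AxiomsForBAT} into a term rewriting system, show that it is strongly normalizing via Theorem \ref{SN}, and check that every closed normal form is a basic term in the sense of Definition \ref{BTBAT}.

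\textbf{The rewriting system.} We take the following left-to-right rules:
\begin{itemize}
\item A3 ($x+x\rightarrow x$);
\item A4 ($(x+y)\circ z\rightarrow x\circ z+y\circ z$);
\item A5 ($(x\circ y)\circ z\rightarrow x\circ(y\circ z)$);
\item A6, A7, A8, A9;
\item G1, G2 oriented towards $\mathbf{0}$ and $\mathbf{I}$;
\item G3;
\item G6 ($(\phi+\psi)\circ x\rightarrow\phi\circ x+\psi\circ x$);
\item G7 ($(\phi\circ\psi)\circ x\rightarrow\phi\circ(\psi\circ x)$).
\end{itemize}
We do \emph{not} use A1, A2, G4, G5 as rules. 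Commutativity is handled modulo AC of $+$, and G4/G5 would only undo distribution or re-associate in the wrong direction. The semantic axioms G8--G11 are side-conditioned and also not needed for elimination.

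\textbf{Termination.} We choose the precedence $\circ>+$ on the signature, with all constants, atomic actions, pomsets and guards below $+$. For each rule we verify $s>_{lpo}t$:
\begin{itemize}
\item A4 and G6: the left side has head $\circ$, which dominates $+$, and each argument on the right is a proper lexicographic reduct.
\item A5 and G7: the standard lexicographic comparison works, since the first argument $x\circ y$ strictly exceeds $x$, and $y\circ z$ is dominated by $(x\circ y)\circ z$.
\item The collapsing rules (A3, A6--A9, G1--G3) trivially decrease, because the right side is a subterm or a minimal constant.
\end{itemize}
Theorem \ref{SN} then gives strong normalization, so every closed term $p$ rewrites to a normal form $q$. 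Every rule is an instance of an axiom, so $BAT\vdash p=q$.

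\textbf{Normal forms are basic.} We argue by structural induction on a closed normal form $q$.
\begin{itemize}
\item If $q$ is an atom (an action, $\mathbf{0}$, $\mathbf{I}$, a guard, or a pomset $U$), it is basic by clauses 1--2 of Definition \ref{BTBAT}. For pomsets we use the identification of $U$ with its sequential factorization given by the Factorization lemma.
\item If $q=q_1+q_2$, both summands are normal, hence basic by induction, and clause 5 applies.
\item If $q=q_1\circ q_2$, then $q_1$ cannot be a sum (A4 or G6 would apply), cannot be a composition (A5 or G7 would apply), and cannot be $\mathbf{0}$ or $\mathbf{I}$ (A7 or A8 would apply). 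Hence $q_1$ is an atomic action, pomset or guard, and $q_2$ is basic by induction. Clause 3 or 4 then gives $q\in\mathcal{B}(BAT)$.
\end{itemize}
Extended guards built by $\neg$ are treated as atoms of $\mathsf{Grd}$ after pushing them through; if needed, we first show that every extended basic guard is provably equal to a sum of compositions of atomic guards using G6 and G7.

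\textbf{Main obstacle.} The delicate step is the termination argument in the presence of guards and pomsets. We must fix how pomsets and compound guards in $G$ sit in the precedence, and make sure that the guard rules G6 and G7 interact correctly with A4 and A5 under the lexicographic path order. If a direct LPO fails for some guard rule, we will fall back on a weight function counting nested $\circ$-occurrences above $+$-occurrences. The implicit composability function $\gamma$ only ever replaces a composition by an action or by $\mathbf{0}$, so it only shrinks terms and does not threaten termination.
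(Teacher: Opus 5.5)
Your proposal is correct and is essentially the paper's proof: orient the axioms into a rewriting system, get strong normalization from the LPO with $\circ>+$ (lexicographic status for $\circ$) via Theorem~\ref{SN}, and then show that every closed normal form is basic by inspecting the left argument of $\circ$. The one substantive difference favours you. By dropping G5 you avoid the paper's rule RG5, $\phi\circ(x\circ y)\rightarrow(\phi\circ x)\circ y$, which reverses RA5 and gives the loop $\phi\circ(a\circ b)\rightarrow(\phi\circ a)\circ b\rightarrow\phi\circ(a\circ b)$, so the paper's claimed $>_{lpo}$ decrease fails for that rule. Your rule set, by contrast, really does decrease under the LPO as plain rewriting; you need no AC matching, so the fallback you mention is unnecessary.
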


\begin{proof}
(1) Firstly, suppose that the following ordering on the signature of BAT is defined: $\circ > +$ and the symbol $\circ$ is given the lexicographical status for the first argument, then for each rewrite rule $p\rightarrow q$ in Table \ref{TRSForBAT} relation $p>_{lpo} q$ can easily be proved. We obtain that the term rewrite system shown in Table \ref{TRSForBAT} is strongly normalizing, for it has finitely many rewriting rules, and $>$ is a well-founded ordering on the signature of BAT, and if $s>_{lpo} t$, for each rewriting rule $s\rightarrow t$ is in Table \ref{TRSForBAT} (see Theorem \ref{SN}).

\begin{center}
    \begin{longtable}{|c|c|}
      \caption{Rewriting rules of BAT}\\
      \hline No. &Rewriting rules\\
      \hline
      \endfirsthead
      \multicolumn{2}{c}{\bfseries continuing: Rewriting rules of BAT} \\
      \hline No. &Rewriting rules\\
      \hline
      \endhead
      \hline
      \multicolumn{2}{r}{to be continued\ldots} \\
      \endfoot
      \hline
      \multicolumn{2}{r}{end of Axioms of BAT} \\
      \endlastfoot
      $RA3$ & $x+ x \rightarrow x$\\
      $RA4$ & $(x+ y)\circ z \rightarrow x\circ z + y\circ z$\\
      $RA5$ & $(x\circ y)\circ z \rightarrow x\circ(y\circ z)$\\
      $RA6$ & $x+\mathbf{0}\rightarrow x$\\
      $RA7$ & $\mathbf{0}\circ x\rightarrow \mathbf{0}$\\
      $RA8$ & $\mathbf{I}\circ x\rightarrow x$\\
      $RA9$ & $x\circ \mathbf{I}\rightarrow x$\\
      $RG1$ & $\phi\circ\neg\phi \rightarrow \mathbf{0}$\\
      $RG2$ & $\phi+\neg\phi \rightarrow \mathbf{I}$\\
      $RG3$ & $\phi\circ\mathbf{0} \rightarrow \mathbf{0}$\\
      $RG4$ & $\phi\circ(x+y)\rightarrow\phi\circ x+\phi\circ y$\\
      $RG5$ & $\phi\circ(x\circ y)\rightarrow (\phi\circ x)\circ y$\\
      $RG6$ & $(\phi+\psi)\circ x \rightarrow \phi\circ x + \psi\circ x$\\
      $RG7$ & $(\phi\circ \psi)\circ x \rightarrow \phi\circ(\psi\circ x)$\\
      $RG8$ & $\phi\rightarrow\mathbf{I}$ if $\forall s\in S.test(\phi,s)$\\
      $RG9$ & $\phi_0\circ\cdots\circ\phi_n \rightarrow \mathbf{0}$ if $\forall s\in S,\exists i\leq n.test(\neg\phi_i,s)$\\
      $RG10$ & $wp(U,\phi)U\phi\rightarrow wp(U,\phi)U$\\
      $RG11$ & $\neg wp(U,\phi)U\neg\phi\rightarrow\neg wp(U,\phi)U$
      \label{TRSForBAT}
    \end{longtable}
\end{center}

(2) Then we prove that the normal forms of closed BAT terms are basic BAT terms.

Suppose that $p$ is a normal form of some closed $BAT$ term and suppose that $p$ is not a basic term. Let $p'$ denote the smallest sub-term of $p$ which is not a basic term. It implies that each sub-term of $p'$ is a basic term. Then we prove that $p$ is not a term in normal form. It is sufficient to induct on the structure of $p'$:

\begin{itemize}
  \item Case $p'\equiv U, U\in \mathsf{Pom}$. $p'$ is a basic term, which contradicts the assumption that $p'$ is not a basic term, so this case should not occur.
  \item Case $p'\equiv \phi, \phi\in \mathsf{Grd}$. $p'$ is a basic term, which contradicts the assumption that $p'$ is not a basic term, so this case should not occur.
  \item Case $p'\equiv p_1\circ p_2$. By induction on the structure of the basic term $p_1$:
      \begin{itemize}
        \item Subcase $p_1\in \mathsf{Pom}$. $p'$ would be a basic term, which contradicts the assumption that $p'$ is not a basic term;
        \item Subcase $p_1\in \mathsf{Grd}$. $p'$ would be a basic term, which contradicts the assumption that $p'$ is not a basic term;
        \item Subcase $p_1\equiv U\circ p_1'$. $RA5$ or $RA9$ rewriting rule can be applied. So $p$ is not a normal form;
        \item Subcase $p_1\equiv \phi\circ p_1'$. $RG1$, $RG3$, $RG4$, $RG5$, $RG7$, or $RG8-9$ rewriting rules can be applied. So $p$ is not a normal form;
        \item Subcase $p_1\equiv p_1'+ p_1''$. $RA4$, $RA6$, $RG2$, or $RG6$ rewriting rules can be applied. So $p$ is not a normal form.
      \end{itemize}
  \item Case $p'\equiv p_1+ p_2$. By induction on the structure of the basic terms both $p_1$ and $p_2$, all subcases will lead to that $p'$ would be a basic term, which contradicts the assumption that $p'$ is not a basic term.
\end{itemize}
\end{proof}

\subsection{Operational Semantics}

We give the operational transition rules for $\mathbf{I}$, atomic action $\alpha\in\mathsf{Act}$, pomset $U\in\mathsf{Pom}$, guard $\phi\in\mathsf{Grd}$ and operators $\circ$ and $+$ as Table \ref{TRForBAT} shows. And the predicate $\xrightarrow{U}\surd$ represents successful termination after execution of $U$.

\begin{center}
    \begin{table}
        \[\frac{}{\langle\mathbf{I},s\rangle\rightarrow\langle\surd,s\rangle}\]
        \[\frac{}{\langle U,s\rangle\xrightarrow{U}\langle\surd,s'\rangle}\mbox{ if }s'\in effect(U,s)\]
        \[\frac{}{\langle\phi,s\rangle\rightarrow\langle\surd,s\rangle}\mbox{ if }test(\phi,s)\]
        \[\frac{\langle x,s\rangle\xrightarrow{U}\langle\surd,s'\rangle}{\langle x+ y,s\rangle\xrightarrow{U}\langle\surd,s'\rangle} \quad\frac{\langle x,s\rangle\xrightarrow{U}\langle x',s'\rangle}{\langle x+ y,s\rangle\xrightarrow{U}\langle x',s'\rangle}\]
        \[\frac{\langle y,s\rangle\xrightarrow{U}\langle\surd,s'\rangle}{\langle x+ y,s\rangle\xrightarrow{U}\langle\surd,s'\rangle} \quad\frac{\langle y,s\rangle\xrightarrow{U}\langle y',s'\rangle}{\langle x+ y,s\rangle\xrightarrow{U}\langle y',s'\rangle}\]
        \[\frac{\langle x,s\rangle\xrightarrow{U}\langle\surd,s'\rangle}{\langle x\circ y,s\rangle\xrightarrow{U} \langle y,s'\rangle} \quad\frac{\langle x,s\rangle\xrightarrow{U}\langle x',s'\rangle}{\langle x\circ y,s\rangle\xrightarrow{U}\langle x'\circ y,s'\rangle}\]
        \caption{Transition rules of BAT}
        \label{TRForBAT}
    \end{table}
\end{center}

\begin{theorem}[Congruence of BAT with respect to pomset bisimulation equivalence]
Pomset bisimulation equivalence is a congruence with respect to BAT.
\end{theorem}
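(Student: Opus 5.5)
We prove the statement by first checking that pomset bisimilarity $\pb$ is an equivalence relation on closed BAT terms. After that, it remains to show that it is preserved by each operator of the signature, namely $\circ$ and $+$ (the constants $\mathsf{Act}$, $\mathbf{0}$, $\mathbf{I}$ and the guards need no argument). For the equivalence part, reflexivity is witnessed by the identity relation, symmetry by the converse of a pomset bisimulation, and transitivity by relational composition of two pomset bisimulations. The clauses of the definition (pomset transitions carrying the same state $s\to s'$, and the termination predicate $\surd$) are each checked directly.

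\textbf{Case $+$.} Suppose $p_1\pb q_1$ via $R_1$ and $p_2\pb q_2$ via $R_2$. We take
\[R=\{(\langle p_1+p_2,s\rangle,\langle q_1+q_2,s\rangle)\mid s\in S\}\cup R_1\cup R_2\cup\{(\langle\surd,s\rangle,\langle\surd,s\rangle)\mid s\in S\}.\]
By the transition rules for $+$ in Table \ref{TRForBAT}, any move $\langle p_1+p_2,s\rangle\xrightarrow{U}\langle p',s'\rangle$ comes from a move of either $p_1$ or $p_2$. That move is matched by the corresponding $q_i$ using $R_i$, and the matching move is lifted back through the same rule for $+$. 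The resulting pair lies in $R_i$, or is a pair of $\surd$'s when the move terminates. The converse direction is symmetric.

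\textbf{Case $\circ$.} We take
\[R=\{(\langle p_1'\circ p_2,s\rangle,\langle q_1'\circ q_2,s\rangle)\mid (\langle p_1',s\rangle,\langle q_1',s\rangle)\in R_1\}\cup R_2\cup\{(\langle\surd,s\rangle,\langle\surd,s\rangle)\mid s\in S\}.\]
A transition of $\langle p_1'\circ p_2,s\rangle$ arises by one of two rules. In the first, $p_1'$ terminates with $U$; then $q_1'$ also terminates with $U$ and reaches the same state $s'$. The residuals $\langle p_2,s'\rangle$ and $\langle q_2,s'\rangle$ are related: $R_2$ witnesses $p_2\pb q_2$ from every state $s'$, because the definition's quantification uses the shared state component. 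In the second rule, $p_1'$ moves to $p_1''$; then $q_1'$ moves to some $q_1''$ with $(\langle p_1'',s'\rangle,\langle q_1'',s'\rangle)\in R_1$, and the resulting pair is back in the first component of $R$. The unlabelled transitions of $\mathbf{I}$ and of guards $\phi$ (which fire when $test(\phi,s)$ holds) are handled the same way, treating them as transitions with the empty pomset. Guards are closed under $\neg,+,\circ$; these cases are handled by the same clauses, since their transitions are derived only through the rules for $+$ and $\circ$.

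The main obstacle is the bookkeeping of the data state $s$. The definition of pomset bisimulation relates configurations $\langle p,s\rangle$ and $\langle q,s\rangle$ with the \emph{same} state. So, for the $\circ$ case, we must read $p\pb q$ as holding for all initial states $s$; otherwise $p_2$ and $q_2$ could not be compared at the intermediate state $s'$ produced by $effect(U,s)$. We will make this uniformity explicit at the start, and then note that every transition rule in Table \ref{TRForBAT} is in a panth/de Simone-like format: the state is threaded identically through premises and conclusion. Alternatively, one may simply remark that the rules are in such a format and appeal to the standard congruence result, with the explicit relations above serving as the concrete verification.
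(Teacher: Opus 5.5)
Your argument is correct, but it takes a different route from the paper. The paper's proof is a single sentence: the rules of Table~\ref{TRForBAT} are positive and in panth format, so the general congruence theorem for that format applies. You instead build the witnessing bisimulations for $+$ and $\circ$ explicitly.

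The format route is shorter and is reused unchanged for $\oplus$, recursion and abstraction later on. However, format theorems concern bisimulation on \emph{terms}. To invoke them here one must encode the data state in the labels, reading $\langle p,s\rangle\xrightarrow{U}\langle p',s'\rangle$ as $p\xrightarrow{(s,U,s')}p'$. The bisimilarity obtained that way requires derivatives to be bisimilar from \emph{every} state. That is in general finer than the configuration-based $\pb$ the paper actually defines, where $\langle p',s'\rangle$ and $\langle q',s'\rangle$ need only be related at $s'$. The paper never bridges this difference.

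Your construction works directly with the configuration-based definition, and it exposes the two assumptions the one-liner leaves implicit:
\begin{itemize}
\item $\pb$ must be read uniformly over initial states. This is exactly what lets you relate $\langle p_2,s'\rangle$ and $\langle q_2,s'\rangle$ at an arbitrary intermediate state in the $\circ$ case.
\item Termination must be observable, so that $\surd$ is matched only by $\surd$.
\end{itemize}
Your construction also shows why the argument works for this signature: the right argument of $\circ$ starts only after the left one has terminated, and the discarded summand of $+$ is never resumed.

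One small inaccuracy to fix concerns $\neg\phi$. It does not get its transitions from the rules for $+$ and $\circ$; only the guard rule, with side condition $test(\neg\phi,s)$, can apply to it. If you count $\neg$ as an operator, add the following step. Uniformly bisimilar guards reach $\surd$ from exactly the same initial states. Under the evident Boolean reading of $test$ on compound guards, they therefore pass their tests in the same states, so their negations have identical transitions.
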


\begin{proof}
Since the PTSS of BAT in Table \ref{TRForBAT} is positive after reduction and in panth format, pomset bisimulation equivalence is a congruence with respect to BAT.
\end{proof}

\subsection{Soundness and Completeness}

\begin{theorem}[Soundness of BAT modulo pomset bisimulation equivalence]\label{SBAT}
The axiomatization of BAT is sound modulo pomset bisimulation equivalence, i.e., let $x$ and $y$ be BAT terms. If $BAT\vdash x=y$, then $x\pb y$.
\end{theorem}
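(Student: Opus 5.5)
Since pomset bisimulation equivalence $\pb$ is an equivalence relation, and by the congruence theorem just proved it is a congruence with respect to $\circ$ and $+$, the equality relation induced by the axioms of Table \ref{AxiomsForBAT} is contained in $\pb$ as soon as each individual axiom is sound. Closure under substitution, reflexivity, symmetry, transitivity and contexts then follows from the equivalence and congruence properties. So the plan is to check, axiom by axiom, that for every closed instance $s=t$ the two sides are pomset bisimilar. For each axiom I would exhibit the relation
\[R=\{(\langle s,\sigma\rangle,\langle t,\sigma\rangle)\}\cup\{(\langle p,\sigma\rangle,\langle p,\sigma\rangle)\}\]
over closed instances and all states $\sigma$, extended by the obvious pairs of residual terms. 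I would then verify the four transfer conditions using the rules of Table \ref{TRForBAT}.

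For the structural axioms $A1$--$A9$ the argument is the standard one for basic process algebra. Commutativity, associativity and idempotence of $+$ follow because the $+$ rules are symmetric and merely select a summand's transition. For $A4$, a transition of $(x+y)\circ z$ must come from one of $x$ or $y$, which is matched by the corresponding summand $x\circ z$ or $y\circ z$, with identical residuals $x'\circ z$ or $z$. For $A5$, I relate $(x'\circ y)\circ z$ with $x'\circ(y\circ z)$ coinductively, so $R$ is closed under this shape. For $A6$ and $A7$, $\mathbf{0}$ has no transitions. For $A8$ and $A9$, $\mathbf{I}$ performs a silent, state-preserving step to $\surd$. I would treat this step as an unobservable identity, so that $\mathbf{I}\circ x$ and $x\circ\mathbf{I}$ exhibit exactly the pomset transitions of $x$. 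This needs a brief remark that unlabelled $\rightarrow$ steps are absorbed, which is consistent with how $\mathbf{I}$ is used as the empty pomset.

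The guard axioms $G1$--$G7$ follow from the rule for $\phi$: it moves to $\surd$ without changing the state exactly when $test(\phi,\sigma)$ holds. For $G1$, $test(\phi,\sigma)$ and $test(\neg\phi,\sigma)$ cannot both hold, so $\phi\circ\neg\phi$ is stuck like $\mathbf{0}$. For $G2$, exactly one of them holds, so $\phi+\neg\phi$ behaves like $\mathbf{I}$. $G3$--$G7$ are distributivity and associativity laws already covered by the $A$-case arguments, because a guard does not alter the state. $G8$ and $G9$ reduce to the same observation under their side conditions: all tests succeed for $G8$, and some test fails in every state for $G9$. For $G10$ and $G11$, I would use the definition of $wp(U,\phi)$. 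Whenever $wp(U,\phi)$ passes and $\langle U,\sigma\rangle\xrightarrow{U}\langle\surd,\sigma'\rangle$, the test $\phi$ holds in $\sigma'=effect(U,\sigma)$, so the trailing $\phi$ is an invisible successful step. Symmetrically, in $G11$ the trailing $\neg\phi$ is an invisible successful step.

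The main obstacle is the treatment of the unlabelled transitions of $\mathbf{I}$ and of guards. The definition of pomset bisimulation only speaks of $\xrightarrow{U}$ steps, so $A8$, $A9$, $G2$, $G8$, $G10$ and $G11$ hold only if these silent state-preserving steps are identified with the empty pomset $\mathbf{I}$ and composed away. I would first try to make this precise by reading $\rightarrow$ as $\xrightarrow{\mathbf{I}}$ and using $\mathbf{I}\cdot U=U\cdot\mathbf{I}=U$ for pomsets. If this identification is not enough, I would fall back on checking bisimilarity with silent steps collapsed, which requires a short separate argument that the congruence result still applies.
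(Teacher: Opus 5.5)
Your reduction (equivalence plus congruence, then one check per axiom) is the same as the paper's, which calls the checks trivial and omits them. However, the checks you supply break down at exactly the obstacle you flag, and your justification of $G4$ is wrong. $G4$, $\phi\circ(x+y)=\phi\circ x+\phi\circ y$, is \emph{left} distributivity. $A4$ does not give it, and it fails in any bisimulation semantics in which the prefix is a move the other side must answer, even a silent one (recall $\tau\circ(a+b)\neq\tau\circ a+\tau\circ b$ under branching bisimilarity). Concretely, take your first fix (read $\rightarrow$ as $\xrightarrow{\mathbf{I}}$ and compose via $\mathbf{I}\cdot U=U$), distinct actions $a,b$, and a state with $test(\phi,\sigma)$. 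The move $\langle\phi\circ a+\phi\circ b,\sigma\rangle\rightarrow\langle a,\sigma\rangle$ resolves the choice. Every unlabelled move of $\langle\phi\circ(a+b),\sigma\rangle$ leads either to $a+b$ or, if idling is allowed, to itself, and both can still perform $b$, which $a$ cannot. So that fix cannot work, and $G4$ belongs on your list of axioms that depend on the treatment of unlabelled steps. So do $G1$, $G3$ and $G9$ if idling $p\xrightarrow{\mathbf{I}}p$ is not allowed: in a state satisfying $\phi$, the term $\phi\circ\neg\phi$ makes a step and $\mathbf{0}$ does not.

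Only your fallback survives. Absorb the unlabelled, state-preserving steps into the next labelled step, so that $p\rightarrow^{*}\xrightarrow{U}p'$ counts as a single $U$-step, and observe $p\rightarrow^{*}\surd$ through a termination predicate in the predicate clauses of pomset bisimulation. Then every axiom checks out; for instance, at $\sigma$ both sides of $G4$ have exactly the collapsed steps of $x+y$ when $test(\phi,\sigma)$ holds, and none otherwise. But this is a different transition system from Table~\ref{TRForBAT}. The congruence theorem you invoke in your first sentence was proved for the raw rules, under which some of these axioms fail, so it no longer licenses the reduction to single axioms. You must prove congruence for the collapsed relation. This is routine, because the collapsed steps and termination of $p+q$ and $p\circ q$ are determined by those of $p$ and $q$. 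For $p\circ q$ they are the steps of $p$ with residual $p'\circ q$ (or $q$ if $p'\equiv\surd$), plus the steps and termination of $q$ when $p$ terminates. Still, this is the actual content of the proof, and it is missing. Finally, your ``symmetric'' argument for $G11$ needs two assumptions, which you should state. First, $wp(U,\phi)$ must be read pointwise: $\phi$ holds in the $U$-successor of the current state. Second, $effect$ must be deterministic. With the paper's literal global definition of $wp$, or with several successors (the rule writes $s'\in effect(U,s)$), $\neg wp(U,\phi)$ does not force $\neg\phi$ after $U$, and $G11$ fails.
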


\begin{proof}
Since pomset bisimulation $\pb$ is both an equivalent and a congruent relation, we only need to check if each axiom in Table \ref{AxiomsForBAT} is sound modulo pomset bisimulation equivalence $\pb$. The proof is trivial and we omit it.
\end{proof}

\begin{theorem}[Completeness of BAT modulo pomset bisimulation equivalence]\label{CBAT}
The axiomatization of BAT is complete modulo pomset bisimulation equivalence, i.e., let $p$ and $q$ be closed BAT terms, if $p\pb q$ then $p=q$.
\end{theorem}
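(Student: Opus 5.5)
By the elimination theorem (Theorem \ref{ETBAT}), every closed BAT term equals a basic BAT term. By soundness (Theorem \ref{SBAT}), this equation preserves pomset bisimilarity. So it suffices to prove the statement for basic terms: if $p$ and $q$ are basic BAT terms with $p\pb q$, then $BAT\vdash p=q$. Using $A1$, $A2$, $A3$ and $A6$, I will treat a basic term as a finite set of summands, modulo associativity, commutativity and idempotence of $+$ and the unit $\mathbf{0}$. By the shape of basic terms (Definition \ref{BTBAT}), each summand has one of the forms $\mathbf{I}$, $U$, $\phi$, $U\circ t$ or $\phi\circ t$, where $t$ is again basic.

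The main argument is induction on the sum of the sizes of $p$ and $q$. I first show $p=p+q$; $q=q+p$ follows symmetrically, and then $p=q$ by $A1$. Take a summand of $q$ and read off its transition in the rules of Table \ref{TRForBAT}.
\begin{itemize}
\item A summand $U$ gives $\langle q,s\rangle\xrightarrow{U}\langle\surd,s'\rangle$. Since $p\pb q$, $p$ must have the same terminating transition, and from the rules the only way this arises is from a summand $U$ of $p$.
\item A summand $U\circ t$ gives $\langle q,s\rangle\xrightarrow{U}\langle t,s'\rangle$. Then $p$ has a matching summand $U\circ t'$ with $t\pb t'$, and the induction hypothesis gives $t=t'$. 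Hence $U\circ t$ is provably a summand of $p$, and $A3$ absorbs it.
\item Summands that are guards, $\phi$ or $\phi\circ t$, move without a label and are matched through the $test$ predicate, since the bisimulation clauses compare states $\langle p,s\rangle,\langle q,s\rangle$ over the same data state. Where the matching is only modulo the guard's truth set, I use $G8$ and $G9$ to replace guards that hold everywhere by $\mathbf{I}$ and jointly unsatisfiable guard sequences by $\mathbf{0}$. I use $G1$, $G2$ and $G6$ to split or merge guards so that both sides carry the same guard prefixes before comparing them.
\item Similarly, $G10$ and $G11$ let me remove a trailing guard after $U$ whenever its truth is determined by the weakest precondition, so that guards are normalized relative to the preceding pomset.
\end{itemize}

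Pomset labels enter only as single composite labels. Equal pomsets are identified up to $\sim$, using the factorization lemma, so matching a label $U$ is syntactic equality of pomsets.

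I expect the main obstacle to be the guard summands. The transitions for $\phi$ are silent, state-dependent steps rather than labelled ones, so a naive summand-by-summand matching fails. For example, $\phi\circ U+\neg\phi\circ U$ is bisimilar to $U$ but is not syntactically a summand-match. My plan is to first bring basic terms into a guard normal form: each summand is prefixed by a maximal conjunction of atomic guards or their negations, forming an atom of the Boolean algebra generated by $\mathsf{Grd}$, with unsatisfiable atoms deleted via $G9$. I will push guards outward using $G4$, $G5$ and $G7$, and expand with $G2$ and $A4$. After this, two bisimilar terms must agree atom by atom. I then apply the labelled-summand argument above within each atom, and recombine the atoms with $G2$ and $G6$.
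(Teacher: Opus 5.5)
Your skeleton (elimination, soundness, then summand matching by induction on size) is the paper's. Your worry about guards is well founded. The paper just asserts that a guard summand $\phi$ or $\phi\circ t_2$ of a normal form $n$ must reappear in $n'$, and concludes $n=_{AC}n'$. Your example refutes this: $\phi\circ U+\neg\phi\circ U$ and $U$ are both normal forms of the rewrite system, yet they are provably equal by $G6$, $G2$, $A8$.

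The gap is in your repair. The axioms you use to push guards outward ($G4$, $G5$, $G7$, with $G2$, $G6$, $A4$) only distribute, reassociate or split a guard that already stands at the front of a summand. None of them moves a guard across a pomset, and BAT has no left distributivity $U\circ(x+y)=U\circ x+U\circ y$. So your atoms fix the guards' truth values in the current state $s$. Guards inside continuations, as in $U\circ(\phi\circ V+\neg\phi\circ W)$, stay put and are evaluated in $effect(U,s)$, about which the atom says nothing.

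This breaks the step ``apply the labelled-summand argument within each atom''. From $\langle q,s\rangle\xrightarrow{U}\langle t,s'\rangle$ you only get $t\pb t'$ from the particular state $s'=effect(U,s)$. The matching summand $U\circ t'$ of $p$ may vary with $s$ inside one atom. Even a fixed $t'$ need not be bisimilar to $t$ from every state, so the induction hypothesis is not applicable.

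For instance, let $\phi$ hold on the whole image of $U$, let $\neg\phi$ be satisfiable, and let $V\neq W$. Then $U\circ V=U\circ(\phi\circ V+\neg\phi\circ W)$ is provable: get $U=U\circ\phi$ from $G8$, $A8$, $G10$, then use $A5$, $G4$, $G7$, $G1$, $A7$, $A6$ and the derivable $\phi\circ\phi=\phi$. So the two terms are bisimilar, but $V$ and $\phi\circ V+\neg\phi\circ W$ are not.

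What you need is two further ingredients:
\begin{itemize}
\item[(i)] A systematic way to move guards out of continuations. Split $U=(wp(U,\phi)+\neg wp(U,\phi))\circ U$ by $G2$, $A8$, then use $G10$, $G11$ with $G1$, $G9$ to discard disabled branches. The atoms must range over the finitely many guards of $p,q$, closed under $wp$ for the pomset prefixes that occur. This also requires reading $wp(U,\phi)$ as a state-dependent guard, not the global predicate of the paper's definition.
\item[(ii)] An induction hypothesis relativized to guards, e.g.\ $A\circ p\pb A\circ q$ implies $A\circ p=A\circ q$ for satisfiable atoms $A$.
\end{itemize}
Your $G10$/$G11$ bullet only treats the case where the $wp$ prefix is already present, and (ii) is absent. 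So the claim that bisimilar normal forms ``agree atom by atom'' and can then be matched summand by summand is unsupported.
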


\begin{proof}
Firstly, by the elimination theorem of BAT, we know that for each closed BAT term $p$, there exists a closed basic BAT term $p'$, such that $BAT\vdash p=p'$, so, we only need to consider closed basic BAT terms.

The basic terms (see Definition \ref{BTBAT}) modulo associativity and commutativity (AC) of conflict $+$ (defined by axioms $A1$ and $A2$ in Table \ref{AxiomsForBAT}), and this equivalence is denoted by $=_{AC}$. Then, each equivalence class $s$ modulo AC of $+$ has the following normal form

\[s_1+\cdots+ s_k\]

with each $s_i$ either an atomic event or of the form $t_1\circ t_2$, and each $s_i$ is called the summand of $s$.

Now, we prove that for normal forms $n$ and $n'$, if $n\pb n'$ then $n=_{AC}n'$. It is sufficient to induct on the sizes of $n$ and $n'$.

\begin{itemize}
  \item Consider a summand $U$ of $n$. Then $\langle n,s\rangle\xrightarrow{U}\langle \surd,s'\rangle$, so $n\sim_p n'$ implies $\langle n',s\rangle\xrightarrow{U}\langle \surd,s\rangle$, meaning that $n'$ also contains the summand $U$.
  \item Consider a summand $\phi$ of $n$. Then $\langle n,s\rangle\rightarrow\langle \surd,s\rangle$, if $test(\phi,s)$ holds, so $n\sim_p n'$ implies $\langle n',s\rangle\rightarrow\langle \surd,s\rangle$, if $test(\phi,s)$ holds, meaning that $n'$ also contains the summand $\phi$.
  \item Consider a summand $t_1\circ t_2$ of $n$. Then $\langle n,s\rangle\xrightarrow{t_1}\langle t_2,s'\rangle$, so $n\sim_p n'$ implies $\langle n',s\rangle\xrightarrow{t_1}\langle t_2',s'\rangle$ with $t_2\sim_p t_2'$, meaning that $n'$ contains a summand $t_1\circ t_2'$. Since $t_2$ and $t_2'$ are normal forms and have sizes smaller than $n$ and $n'$, by the induction hypotheses $t_2\pb t_2'$ implies $t_2=_{AC} t_2'$.
\end{itemize}

So, we get $n=_{AC} n'$.

Finally, let $s$ and $t$ be basic terms, and $s\pb t$, there are normal forms $n$ and $n'$, such that $s=n$ and $t=n'$. The soundness theorem of BAT modulo pomset bisimulation equivalence (see Theorem \ref{SBAT}) yields $s\pb n$ and $t\pb n'$, so $n\pb s\pb t\pb n'$. Since if $n\pb n'$ then $n=_{AC}n'$, $s=n=_{AC}n'=t$, as desired.
\end{proof}

\newpage\section{Algebra of Parallelism for Transformers}\label{apt}

In this section, we will discuss parallel composition. The resulted algebra is called Algebra of Parallelism for Transformers, abbreviated APT.

\subsection{Operational Semantics}

Firstly, we give the transition rules for parallelism as \cref{TRForParallel} shows.

\begin{center}
    \begin{table}
        \[\frac{x\xrightarrow{\alpha}\surd\quad y\xrightarrow{\alpha'}\surd}{x\oplus y\xrightarrow{\mset{\alpha,\alpha'}}\surd} \quad\frac{x\xrightarrow{\alpha}x'\quad y\xrightarrow{\alpha'}\surd}{x\oplus y\xrightarrow{\mset{\alpha,\alpha'}}x'}\]
        \[\frac{x\xrightarrow{\alpha}\surd\quad y\xrightarrow{\alpha'}y'}{x\oplus y\xrightarrow{\mset{\alpha,\alpha'}}y'} \quad\frac{x\xrightarrow{\alpha}x'\quad y\xrightarrow{\alpha'}y'}{x\oplus y\xrightarrow{\mset{\alpha,\alpha'}}x'\oplus y'}\]
        \caption{Transition rules of APT}
        \label{TRForParallel}
    \end{table}
\end{center}

\begin{theorem}[Congruence theorem of APT]
Pomset bisimulation equivalences $\pb$ is a congruence with respect to APT.
\end{theorem}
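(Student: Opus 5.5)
The approach mirrors the congruence theorem for BAT: show that the pomset transition system specification of APT has a congruence format, and then appeal to the general result that bisimulation is a congruence for such specifications. The specification is the union of the rules of Table \ref{TRForBAT} with those of Table \ref{TRForParallel}. First I will check that every rule of Table \ref{TRForParallel} has the right shape. Each rule has as source the term $x\oplus y$, whose variables $x,y$ are distinct. Its premises are positive transitions $x\xrightarrow{\alpha}x'$ or $x\xrightarrow{\alpha}\surd$ (and the same for $y$), whose left-hand sides are exactly the source variables and whose targets $x',y'$ are fresh, pairwise distinct variables. Its target $\surd$, $x'$, $y'$ or $x'\oplus y'$ contains only these variables. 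There are no negative premises and no lookahead, so the rules are in panth (indeed tyft/de Simone) format. The BAT rules were already observed to be positive and in panth format, so the combined PTSS is positive and in panth format. Pomset bisimulation equivalence is therefore a congruence for $\circ$, $+$ and $\oplus$.

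The panth format theorem is stated for single-action labels, while the parallel rules produce the multiset label $\mset{\alpha,\alpha'}$. The main obstacle is therefore to justify transferring the format result to pomset labels. I would handle this by treating each pomset $U$ as an atomic label symbol of the PLTS, so the label set is $\mathsf{Pom}$. The label in each conclusion is a fixed function of the premise labels, which is allowed in the format as long as the labels range over a fixed alphabet.

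As a backup, I would argue directly for the new operator, since congruence for $\circ$ and $+$ is inherited from BAT. Suppose $x_1\pb x_2$ and $y_1\pb y_2$ via bisimulations $R_1$ and $R_2$. Let
\[R=\{(\langle p_1\oplus q_1,s\rangle,\langle p_2\oplus q_2,s\rangle)\mid p_1R_1p_2,\ q_1R_2q_2\}\cup R_1\cup R_2\cup\{(\surd,\surd)\}.\]
A transition of $x_1\oplus y_1$ arises from exactly one of the four rules, using component transitions $x_1\xrightarrow{\alpha}\cdot$ and $y_1\xrightarrow{\alpha'}\cdot$. These are matched in $x_2$ and $y_2$ by the same labels with related residuals, where $\surd$ is matched by $\surd$ because bisimulation preserves termination. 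Applying the same rule then yields a matching transition with label $\mset{\alpha,\alpha'}$ whose residuals lie in $R$. The symmetric clause and the predicate clauses are handled the same way. This shows that $R$ is a pomset bisimulation and completes the proof.
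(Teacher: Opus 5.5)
Your proposal is correct and takes the same route as the paper. The paper's entire proof is the observation that the PTSS of APT is positive after reduction and in panth format, hence $\pb$ is a congruence. Your treatment of each pomset as an atomic label and your direct bisimulation construction for $\oplus$ just make explicit details the paper leaves unstated.
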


\begin{proof}
Since the PTSS of APT in Table \ref{TRForParallel} is positive after reduction and in panth format, pomset bisimulation equivalence $\pb$ is a congruence with respect to APT.
\end{proof}

\subsection{Axiomatic System}

We design the axioms of APT in Table \ref{AxiomsForParallelism}.

\begin{center}
    \begin{longtable}{|c|c|}
      \caption{Axioms of APT}\\
      \hline No. &Axioms\\
      \hline
      \endfirsthead
      \multicolumn{2}{c}{\bfseries continuing: Axioms of APT} \\
      \hline No. &Axioms\\
      \hline
      \endhead
      \hline
      \multicolumn{2}{r}{to be continued\ldots} \\
      \endfoot
      \hline
      \multicolumn{2}{r}{end of Axioms of APT} \\
      \endlastfoot
      $P1$ & $\alpha\oplus (\alpha'\circ y) = (\alpha\oplus \alpha')\circ y$\\
      $P2$ & $(\alpha\circ x)\oplus \alpha' = (\alpha\oplus \alpha')\circ x$\\
      $P3$ & $(\alpha\circ x)\oplus (\alpha'\circ y) = (\alpha\oplus \alpha')\circ (x\oplus y)$\\
      $P4$ & $(x+ y)\oplus z = (x\oplus z)+ (y\oplus z)$\\
      $P5$ & $x\oplus (y+z) = (x\oplus y)+ (x\oplus z)$\\
      $P6$ & $\mathbf{0}\oplus x = \mathbf{0}$\\
      $P7$ & $x\oplus \mathbf{0}= \mathbf{0}$\\
      $P8$ & $\mathbf{I}\oplus x = x$\\
      $P9$ & $x\oplus \mathbf{I} = x$\\
      $G12$ & $\phi\circ(x\oplus y) =\phi\circ x\oplus \phi\circ y$\\
      $G13$ & $\phi\oplus \mathbf{0} = \mathbf{0}$\\
      $G14$ & $\mathbf{0}\oplus \phi = \mathbf{0}$\\
      $G15$ & $\phi\oplus \mathbf{I} = \phi$\\
      $G16$ & $\mathbf{I}\oplus \phi = \phi$\\
      $G17$ & $\phi\oplus\neg\phi = \mathbf{0}$\\
      $G18$ & $\phi_0\oplus\cdots\oplus\phi_n = \mathbf{0}$ if $\forall s_0,\cdots,s_n\in S,\exists i\leq n.test(\neg\phi_i,s_0\cup\cdots\cup s_n)$
      \label{AxiomsForParallelism}
    \end{longtable}
\end{center}

\begin{definition}[Basic terms of APT]\label{BTAPT}
The set of basic terms of APT, $\mathcal{B}(APT)$, is inductively defined as follows:

\begin{enumerate}
  \item $\mathsf{Act}\cup\{\mathbf{0}\}\cup\{\mathbf{I}\}\subset\mathcal{B}(APT)$.
  \item $\mathsf{Grd}\subset\mathcal{B}(APT)$.
  \item If $U\in \mathsf{Pom}, t\in\mathcal{B}(APT)$ then $U\circ t\in\mathcal{B}(APT)$.
  \item If $\phi\in \mathsf{Grd}, t\in\mathcal{B}(APT)$ then $\phi\circ t\in\mathcal{B}(APT)$.
  \item If $t,s\in\mathcal{B}(APT)$ then $t+ s\in\mathcal{B}(APT)$.
  \item If $t,s\in\mathcal{B}(APT)$ then $t\oplus s\in\mathcal{B}(APT)$.
\end{enumerate}
\end{definition}

\begin{theorem}[Elimination theorem of parallelism]\label{ETParallelism}
Let $p$ be a closed APT term. Then there is a basic APT term $q$ such that $APT\vdash p=q$.
\end{theorem}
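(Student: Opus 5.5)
The plan is to follow the proof of the elimination theorem of BAT (Theorem \ref{ETBAT}). First orient the axioms of Table \ref{AxiomsForParallelism} from left to right, giving rewriting rules $RP1$--$RP9$ and $RG12$--$RG18$, and add them to the rewriting rules of Table \ref{TRSForBAT}. I would then show that the combined term rewriting system is strongly normalizing via Theorem \ref{SN}, using the precedence $\oplus > \circ > +$ on the signature. As in the BAT case, $\circ$ and $\oplus$ get the lexicographical status for the first argument.

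The rules then need to be checked one by one. The distributive rules $RP4$ and $RP5$ push $\oplus$ below $+$, so they decrease in the lexicographic path ordering because $\oplus > +$. The rules $RP6$--$RP9$ and $RG13$--$RG18$ rewrite a term to a proper subterm or to a constant, so they decrease trivially. For $RP1$--$RP3$, the head symbol changes from $\oplus$ to $\circ$ and the arguments become strictly smaller subterms, so $\oplus>\circ$ settles them. The delicate rule is $RG12$, namely $\phi\circ(x\oplus y)\rightarrow \phi\circ x\oplus\phi\circ y$. Its left side has head $\circ$ and its right side has head $\oplus$, so with the chosen precedence it is not obviously decreasing.

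I expect this rule to be the main obstacle. The first thing to try is to order guards specially: treat $\phi\circ -$ as a unary prefix operator of higher precedence than $\oplus$, so that the right-hand side is dominated. If that is inconsistent with $RP3$, the fallback is a weight or size argument, counting occurrences of $\oplus$ above action prefixes. $RG12$ pushes guards inward and strictly decreases the number of $\circ$-nodes that have an $\oplus$-node below them, while $RP1$--$RP3$ pull $\oplus$ down past prefixes. Combining the two measures lexicographically should give termination.

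Next I would show that normal forms are basic APT terms (Definition \ref{BTAPT}). Take a normal form $p$ that is not basic, and let $p'$ be its smallest non-basic subterm, so all proper subterms of $p'$ are basic. The cases $U$, $\phi$, $p_1\circ p_2$ and $p_1+p_2$ are handled exactly as in Theorem \ref{ETBAT}. The new case is $p'\equiv p_1\oplus p_2$ with $p_1,p_2$ basic. Since clause 6 of Definition \ref{BTAPT} already makes $p_1\oplus p_2$ basic whenever $p_1$ and $p_2$ are, this case cannot occur. Even so, I would also note the sharper statement: inducting on the structures of $p_1$ and $p_2$, if either is a sum, $\mathbf{0}$, $\mathbf{I}$ or an action prefix, then one of $RP1$--$RP9$ or $RG13$--$RG17$ applies. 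So in a normal form the only remaining $\oplus$ are parallel compositions of atomic actions or guards, which are pomsets. Hence every closed term reduces to a basic term, and since every rewriting rule is an instance of an axiom, $APT\vdash p=q$.
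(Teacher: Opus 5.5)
Your outline matches the paper's proof: the same TRS, the same precedence $\oplus>\circ>+$, and the same minimal-non-basic-subterm argument. You are right that $RG12$ is not $>_{lpo}$-decreasing under that precedence; the paper only asserts the check is easy, so this is a real defect. But the termination step cannot be completed as you set it up.

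First, the rules you import from the BAT table already loop. $RG5$ reverses the guard instance of $RA5$: $(\phi\circ\alpha)\circ\beta\rightarrow_{RA5}\phi\circ(\alpha\circ\beta)\rightarrow_{RG5}(\phi\circ\alpha)\circ\beta$. So no ordering can make the combined system strongly normalizing. $RG5$ has to be dropped, which is harmless because $G5$ is an instance of $A5$.

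Second, your repairs of $RG12$ do not work:
\begin{itemize}
\item On the binary signature, $RP1$ forces $\oplus>\circ$, since neither argument of $\alpha\oplus(\alpha'\circ y)$ dominates $(\alpha\oplus\alpha')\circ y$. $RG12$ forces $\circ>\oplus$. No precedence makes both decrease in $>_{lpo}$.
\item Making $\phi\circ-$ a unary symbol $g_\phi>\oplus$ only moves the clash. You still need the guard instance of $RA5$ so that $(\phi\circ t)\circ s$ is reducible. It becomes $\circ(g_\phi(y),z)\rightarrow g_\phi(\circ(y,z))$, which requires $\circ>g_\phi$, giving the cycle $\circ>g_\phi>\oplus>\circ$.
\item The fallback count is not monotone. $RG12$ sends $\phi\circ((\alpha\oplus\beta)\oplus(\gamma\oplus\delta))$ to $(\phi\circ(\alpha\oplus\beta))\oplus(\phi\circ(\gamma\oplus\delta))$, which raises the number of $\circ$-nodes with an $\oplus$ below them from $1$ to $2$.
\end{itemize}
So strong normalization is not established.

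The normal-form step also fails, exactly in the new subcase. That subcase sits inside the case $p'\equiv p_1\circ p_2$ with $p_1\equiv p_1'\oplus p_1''$; the case $p'\equiv p_1\oplus p_2$ indeed cannot occur, by clause 6. You claim some rule fires whenever a parallel component is a sum, $\mathbf{0}$, $\mathbf{I}$ or a prefix, but this overlooks guards. $RP1$--$RP3$ need atomic actions or \emph{action} prefixes as components, and no rule moves a guard out of a parallel component ($RG12$ only pushes guards in). Moreover, a parallel composition involving a guard is neither a pomset over $\mathsf{Act}\cup\{\mathbf{0},\mathbf{I}\}$ nor an element of $\mathsf{Grd}$.

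Hence, for a guard $\phi$ that is neither valid nor unsatisfiable (so $RG8$ and $RG9$ do not apply), $(\alpha\oplus\phi)\circ\beta$ and $((\phi\circ\alpha)\oplus\beta)\circ\gamma$ are irreducible. Neither is basic in the sense of Definition~\ref{BTAPT}. The paper's proof makes the same unsupported claim that ``$RP1$-$RP9$ can be applied'' here, so following it does not close the gap. You need either extra axioms for guards under $\oplus$ or a larger class of basic terms.

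Since clauses 5 and 6 already close basic terms under $+$ and $\oplus$, you could instead show by direct induction that $t\circ s$ is provably basic for basic $t,s$, using $A4$, $A5$, $A7$, $A8$. This avoids the termination question entirely and isolates the case $t\equiv t_1\oplus t_2$, where the guard problem lives.
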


\begin{proof}
(1) Firstly, suppose that the following ordering on the signature of APT is defined: $\oplus > \circ > +$ and the symbol $\circ$ is given the lexicographical status for the first argument, then for each rewrite rule $p\rightarrow q$ in Table \ref{TRSForAPT} relation $p>_{lpo} q$ can easily be proved. We obtain that the term rewrite system shown in Table \ref{TRSForAPT} is strongly normalizing, for it has finitely many rewriting rules, and $>$ is a well-founded ordering on the signature of APT, and if $s>_{lpo} t$, for each rewriting rule $s\rightarrow t$ is in Table \ref{TRSForAPT} (see Theorem \ref{SN}).

\begin{center}
    \begin{longtable}{|c|c|}
      \caption{Rewriting rules of APT}\\
      \hline No. &Rewriting rules\\
      \hline
      \endfirsthead
      \multicolumn{2}{c}{\bfseries continuing: Rewriting rules of APT} \\
      \hline No. &Rewriting rules\\
      \hline
      \endhead
      \hline
      \multicolumn{2}{r}{to be continued\ldots} \\
      \endfoot
      \hline
      \multicolumn{2}{r}{end of Rewriting rules of APT} \\
      \endlastfoot
      $RP1$ & $\alpha\oplus (\alpha'\circ y) \rightarrow (\alpha\oplus \alpha')\circ y$\\
      $RP2$ & $(\alpha\circ x)\oplus \alpha' \rightarrow (\alpha\oplus \alpha')\circ x$\\
      $RP3$ & $(\alpha\circ x)\oplus (\alpha'\circ y) \rightarrow (\alpha\oplus \alpha')\circ (x\oplus y)$\\
      $RP4$ & $(x+ y)\oplus z \rightarrow (x\oplus z)+ (y\oplus z)$\\
      $RP5$ & $x\oplus (y+z) \rightarrow (x\oplus y)+ (x\oplus z)$\\
      $RP6$ & $\mathbf{0}\oplus x \rightarrow \mathbf{0}$\\
      $RP7$ & $x\oplus \mathbf{0}\rightarrow \mathbf{0}$\\
      $RP8$ & $\mathbf{I}\oplus x \rightarrow x$\\
      $RP9$ & $x\oplus \mathbf{I} \rightarrow x$\\
      $RG12$ & $\phi\circ(x\oplus y) \rightarrow\phi\circ x\oplus \phi\circ y$\\
      $RG13$ & $\phi\oplus \mathbf{0} \rightarrow \mathbf{0}$\\
      $RG14$ & $\mathbf{0}\oplus \phi \rightarrow \mathbf{0}$\\
      $RG15$ & $\phi\oplus \mathbf{I} \rightarrow \phi$\\
      $RG16$ & $\mathbf{I}\oplus \phi \rightarrow \phi$\\
      $RG17$ & $\phi\oplus\neg\phi \rightarrow \mathbf{0}$\\
      $RG18$ & $\phi_0\oplus\cdots\oplus\phi_n \rightarrow \mathbf{0}$ if $\forall s_0,\cdots,s_n\in S,\exists i\leq n.test(\neg\phi_i,s_0\cup\cdots\cup s_n)$
      \label{TRSForAPT}
    \end{longtable}
\end{center}

(2) Then we prove that the normal forms of closed APT terms are basic APT terms.

Suppose that $p$ is a normal form of some closed APT term and suppose that $p$ is not a basic APT term. Let $p'$ denote the smallest sub-term of $p$ which is not a basic APT term. It implies that each sub-term of $p'$ is a basic APT term. Then we prove that $p$ is not a term in normal form. It is sufficient to induct on the structure of $p'$:

\begin{itemize}
  \item Case $p'\equiv U, U\in \mathsf{Pom}$. $p'$ is a basic APT term, which contradicts the assumption that $p'$ is not a basic APT term, so this case should not occur.
  \item Case $p'\equiv \phi, \phi\in \mathsf{Grd}$. $p'$ is a basic term, which contradicts the assumption that $p'$ is not a basic term, so this case should not occur.
  \item Case $p'\equiv p_1\circ p_2$. By induction on the structure of the basic APT term $p_1$:
      \begin{itemize}
        \item Subcase $p_1\in \mathsf{Pom}$. $p'$ would be a basic APT term, which contradicts the assumption that $p'$ is not a basic APT term;
        \item Subcase $p_1\in \mathsf{Grd}$. $p'$ would be a basic term, which contradicts the assumption that $p'$ is not a basic term;
        \item Subcase $p_1\equiv U\circ p_1'$. $RA5$ or $RA9$ rewriting rules in Table \ref{TRSForBAT} can be applied. So $p$ is not a normal form;
        \item Subcase $p_1\equiv \phi\cdot p_1'$. $RG1$, $RG3$, $RG4$, $RG5$, $RG7$, or $RG8-9$ rewriting rules can be applied. So $p$ is not a normal form;
        \item Subcase $p_1\equiv p_1'+ p_1''$. $RA4$, $RA6$, $RG2$, or $RG6$ rewriting rules in Table \ref{TRSForBAT} can be applied. So $p$ is not a normal form;
        \item Subcase $p_1\equiv p_1'\oplus p_1''$. $RP1$-$RP9$ rewrite rules in Table \ref{TRSForAPT} can be applied. So $p$ is not a normal form.
      \end{itemize}
  \item Case $p'\equiv p_1+ p_2$. By induction on the structure of the basic $APT$ terms both $p_1$ and $p_2$, all subcases will lead to that $p'$ would be a basic $APT$ term, which contradicts the assumption that $p'$ is not a basic $APT$ term.
  \item Case $p'\equiv p_1\oplus p_2$. By induction on the structure of the basic $APT$ terms both $p_1$ and $p_2$, all subcases will lead to that $p'$ would be a basic $APT$ term, which contradicts the assumption that $p'$ is not a basic $APT$ term.
\end{itemize}
\end{proof}

\subsection{Soundness and Completeness}

\begin{theorem}[Generalization of APT with respect to BAT]
APT is a generalization of BAT.
\end{theorem}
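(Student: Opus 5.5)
The statement ``APT is a generalization of BAT'' is to be read as the operational conservative extension claim from the Preliminaries. Let $T_0$ be the PTSS of BAT in Table \ref{TRForBAT} over the signature $\Sigma_0$ with constants $\mathsf{Act}\cup\mathsf{Grd}\cup\{\mathbf{0},\mathbf{I}\}$ and operators $\circ,+$. Let $T_1$ be the PTSS of Table \ref{TRForParallel} over the signature $\Sigma_1$, which adds the binary operator $\oplus$. The goal is to show that $T_0\oplus T_1$ is an operational conservative extension of $T_0$. Then every transition (and predicate) derivable for a closed BAT term in APT is already derivable in BAT. Combined with the soundness and completeness of BAT (Theorems \ref{SBAT} and \ref{CBAT}), this shows that APT restricted to BAT terms coincides with BAT.

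The argument will invoke the Operational conservative extension theorem and check its two hypotheses in order.

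\textbf{Source-dependency of $T_0$.} I go through the rules of Table \ref{TRForBAT} one by one.
\begin{itemize}
\item The axiom-like rules for $\mathbf{I}$, for $U\in\mathsf{Pom}$ and for $\phi\in\mathsf{Grd}$ contain no variables, so they are trivially source-dependent. The side conditions $s'\in effect(U,s)$ and $test(\phi,s)$ concern only the state component and introduce no term variables.
\item For the rules of $+$ and $\circ$, the source $x+y$ or $x\circ y$ contains $x$ and $y$. Each premise has source $x$ or $y$, so its target variables $x'$ or $y'$ become source-dependent by the inductive clause.
\end{itemize}
Hence every variable in every rule of $T_0$ is source-dependent.

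\textbf{Freshness of $T_1$.} Every rule in Table \ref{TRForParallel} has a conclusion whose source is $x\oplus y$. This contains $\oplus\in\Sigma_1\setminus\Sigma_0$, so the source is fresh and the first disjunct of hypothesis (2) holds for all four rules.

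The theorem then yields the conservative extension. I would close by remarking that, because the transitions of BAT terms are unchanged, pomset bisimilarity between BAT terms is the same in both systems. The BAT axioms form a subset of the APT axioms, so APT proves every BAT equation.

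\textbf{Main obstacle.} The only delicate point is that the rules are formulated on configurations $\langle x,s\rangle$ carrying a state together with semantic side conditions. The conservative-extension theorem is stated for plain PTSSs. I would handle this by treating the state as part of the label/configuration, i.e.\ instantiating the rules for each fixed pair $(s,s')$. The side conditions then just select which closed rules exist, and the source-dependency and freshness checks above are unaffected.
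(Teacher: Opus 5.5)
Your proposal is correct and follows the paper's own route: the paper also relies on the operational conservative extension theorem, using source-dependency of the BAT rules in Table \ref{TRForBAT} and the occurrence of $\oplus$ in the source of every rule of Table \ref{TRForParallel} (it additionally lists source-dependency of the APT rules, which this step does not need). One small slip in your side remark: to conclude that APT proves no new equations between BAT terms you need soundness of APT (Theorem \ref{SAPT}) together with completeness of BAT (Theorem \ref{CBAT}), not soundness of BAT.
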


\begin{proof}
It follows from the following three facts.

\begin{enumerate}
  \item The transition rules of BAT Table \ref{TRForBAT} are all source-dependent;
  \item The sources of the transition rules for APT in Table \ref{TRForParallel} contain an occurrence of $\oplus$;
  \item The transition rules of APT are all source-dependent.
\end{enumerate}

So, APT is a generalization of BAT, i.e., BAT is an embedding of APT, as desired.
\end{proof}

\begin{theorem}[Soundness of APT modulo pomset bisimulation equivalence]\label{SAPT}
The axiomatization of APT is sound modulo pomset bisimulation equivalence, i.e., let $x$ and $y$ be APT terms. If $APT\vdash x=y$, then $x\pb y$.
\end{theorem}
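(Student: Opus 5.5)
The plan follows the same pattern as the soundness theorem of BAT (Theorem \ref{SBAT}). By the congruence theorem of APT, pomset bisimulation equivalence $\pb$ is an equivalence relation that is also a congruence with respect to all operators of APT ($\circ$, $+$ and $\oplus$). Derivability $APT\vdash x=y$ is generated from the axioms by substitution, equivalence closure and context closure. It therefore suffices to check that every closed instance of each axiom in Table \ref{AxiomsForParallelism} is sound modulo $\pb$. The BAT axioms are already sound by Theorem \ref{SBAT}, and APT is a generalization of BAT, so the BAT transition relation on BAT terms is unchanged and those instances remain sound.

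For each new axiom $s=t$ I will exhibit the relation $R=\{(\langle s,\varsigma\rangle,\langle t,\varsigma\rangle)\}\cup\mathrm{Id}$, where $\varsigma$ ranges over states. I then show it is a pomset bisimulation by matching the transition rules of Table \ref{TRForParallel} against those of Table \ref{TRForBAT}. The cases go as follows.

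\begin{itemize}
\item For $P1$ to $P3$: $\langle\alpha\oplus(\alpha'\circ y),\varsigma\rangle$ can only perform $\mset{\alpha,\alpha'}$ and move to $\langle y,\varsigma'\rangle$. By the rule for $\circ$, $(\alpha\oplus\alpha')\circ y$ performs the same step to the same target. $P2$ is symmetric. For $P3$, both sides move to $x\oplus y$.
\item For $P4$ and $P5$: the transitions of $(x+y)\oplus z$ are exactly the union of those of $x\oplus z$ and $y\oplus z$, because each $\oplus$-rule consumes one step of the left operand, and the $+$-rules just select a summand.
\item For $P6$, $P7$, $G13$, $G14$ and $G17$: both sides have no transitions and satisfy no predicates. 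This holds because $\mathbf{0}$ has no rule, and $\phi$ and $\neg\phi$ cannot both pass $test$ in a state.
\item For $P8$, $P9$, $G15$ and $G16$: I use the silent termination step of $\mathbf{I}$, and of $\phi$ under $test$, to identify the synchronized step with the other operand's step.
\item For $G12$: I use that a guard does not change the state, so testing $\phi$ once before the parallel composition coincides with testing it in each branch.
\item For $G18$: I argue from the side condition that no combined state passes all tests, so both sides deadlock.
\end{itemize}

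The main obstacle will be the axioms mixing $\mathbf{I}$ and guards with $\oplus$ ($P8$, $P9$, $G12$, $G15$, $G16$). Table \ref{TRForParallel} only has rules for labelled steps $\xrightarrow{\alpha}$, not for the unlabelled steps of $\mathbf{I}$ and guards. To handle this, I will treat the unlabelled step as the empty pomset $\mathbf{I}$ and use $\mset{\mathbf{I},\alpha}=\alpha$ as pomsets, since $\mathbf{I}$ is the empty pomset. The synchronized step then carries the same label as the other operand's step, so the bisimulation closes. With all axioms verified, soundness follows by congruence.
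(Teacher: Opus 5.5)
Your skeleton is the paper's. Its whole proof is the reduction, via $\pb$ being an equivalence and a congruence, to checking the axioms of Table~\ref{AxiomsForParallelism}; it then calls the check trivial and omits it. So the paper never has to decide how the unlabelled steps of $\mathbf{I}$ and of guards interact with $\oplus$. You do decide, and your choice breaks the argument.

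If that step is an empty-pomset move that synchronizes with the partner's move in the rules of Table~\ref{TRForParallel}, the BAT axioms you import are no longer sound under $\oplus$. For $\alpha,\beta\in\mathsf{Act}$, axiom $A8$ and closure under the context $(\,\cdot\,)\oplus\beta$ give $APT\vdash(\mathbf{I}\circ\alpha)\oplus\beta=\alpha\oplus\beta$. In your semantics $\mathbf{I}\circ\alpha\xrightarrow{\mathbf{I}}\alpha$ and $\beta\xrightarrow{\beta}\surd$. So the only initial move of the left side is labelled $\mset{\mathbf{I},\beta}=\beta$ and leads to $\alpha$. The only move of the right side is labelled $\mset{\alpha,\beta}$ and leads to $\surd$. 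The two are not pomset bisimilar.

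This leaves two options, and both fail:
\begin{itemize}
\item If $\mathbf{I}$-moves must be matched label for label, then $\mathbf{I}\circ\alpha$ and $\alpha$ are already not bisimilar, so Theorem~\ref{SBAT} fails in your model. $G4$ fails too, since the guard move commits to a summand in $\phi\circ\alpha+\phi\circ\beta$ but not in $\phi\circ(\alpha+\beta)$.
\item If $\mathbf{I}$-moves may be matched by idling, then $\pb$ is not a congruence for $\oplus$.
\end{itemize}
Your appeal to APT generalizing BAT cannot catch this. Conservativity only says that BAT terms keep their transitions. It says nothing about instances of BAT axioms placed inside $\oplus$-contexts, or with $\oplus$-terms substituted for their variables.

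The repair is to read $\mathbf{I}$ and a satisfied guard as immediate termination rather than as a move. Set $\langle\mathbf{I},s\rangle\downarrow$, and $\langle\phi,s\rangle\downarrow$ if $test(\phi,s)$. Add rules that let $x\circ y$ perform the moves $\xrightarrow{U}$ of $y$ when $x\downarrow$. Let $x\oplus y$ perform the moves $\xrightarrow{U}$ (for arbitrary pomsets $U$) of one operand when the other has terminated. Propagate $\downarrow$ through $+$, $\circ$ and $\oplus$ in the evident way.

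With this reading:
\begin{itemize}
\item $\mathbf{I}\circ x$ and $x$ have the same moves and termination, so no context separates them.
\item The bisimulation witnesses for the BAT axioms are uniform in the operands and carry over to APT terms.
\item Your checks of $P8$, $P9$, $G12$ and $G15$--$G17$ go through with ``is terminated'' in place of ``performs an $\mathbf{I}$-move''. No convention is needed for combining an idle operand's state with the other's.
\end{itemize}
Any sound reading also needs $\alpha,\alpha'$ in $P1$--$P3$ to range over $\mathsf{Act}$ only, as in the BAT axiom section. With $\alpha'=\mathbf{I}$, axioms $P1$, $A8$ and $P9$ would derive $\beta\oplus\gamma=\beta\circ\gamma$.
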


\begin{proof}
Since pomset bisimulations $\pb$ is both an equivalent and a congruent relation, we only need to check if each axiom in Table \ref{AxiomsForParallelism} is sound modulo pomset bisimulation equivalence $\pb$. The proof is trivial and we omit it.
\end{proof}

\begin{theorem}[Completeness of APT modulo pomset bisimulation equivalence]\label{CAPT}
The axiomatization of APT is complete modulo pomset bisimulation equivalence, i.e., let $p$ and $q$ be closed APT terms, if $p\pb q$ then $p=q$.
\end{theorem}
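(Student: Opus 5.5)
The plan follows the completeness proof for BAT (Theorem \ref{CBAT}): reduce to basic terms, pass to a normal form modulo associativity and commutativity of $+$, and show that bisimilar normal forms are AC-equal by induction on size. The only new ingredient is the parallel operator $\oplus$.

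\textbf{Step 1: reduce to basic terms.} By the elimination theorem of parallelism (Theorem \ref{ETParallelism}), every closed APT term $p$ is provably equal to a basic APT term $p'$. By soundness (Theorem \ref{SAPT}), $p\pb p'$. So it suffices to show that bisimilar closed basic APT terms are provably equal.

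\textbf{Step 2: sharpen the normal form.} The rewrite rules $RP1$--$RP9$ and $RG12$--$RG18$ push every occurrence of $\oplus$ inward. Whenever both sides of $\oplus$ begin with an atomic action, the rules merge them into a multiset prefix $(\alpha\oplus\alpha')\circ(\cdots)$. Under the reading of $\alpha\oplus\alpha'$ as the pomset $\mset{\alpha,\alpha'}\in\mathsf{Pom}$, the rewriting normal forms are therefore sums built from prefixes. Modulo AC of $+$ (axioms $A1$, $A2$), every normal form is $s_1+\cdots+s_k$, where each summand $s_i$ is one of the following:
\begin{itemize}
\item a pomset $U$;
\item a guard $\phi$;
\item a prefix $U\circ t$ or $\phi\circ t$, with $t$ again a normal form.
\end{itemize}
I will state this as an auxiliary claim and check it from the case analysis in the proof of Theorem \ref{ETParallelism}. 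For the subcase $p_1\equiv p_1'\oplus p_1''$, I will verify that some rule from $RP1$--$RP9$ or $RG12$--$RG18$ always applies unless both arguments are atomic.

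\textbf{Step 3: bisimilar normal forms are AC-equal.} Let $n$ and $n'$ be normal forms with $n\pb n'$. I proceed by induction on the sum of their sizes, matching summands through the transition rules of Table \ref{TRForBAT} and Table \ref{TRForParallel}.
\begin{itemize}
\item A summand $U$ of $n$ gives $\langle n,s\rangle\xrightarrow{U}\langle\surd,s'\rangle$. Bisimilarity forces a matching terminating $U$-step in $n'$, so $n'$ also has the summand $U$. When $U=\mset{\alpha,\alpha'}$, this step comes from the first rule of Table \ref{TRForParallel}.
\item A summand $\phi$ of $n$ is handled through the test predicate, exactly as in BAT.
\item A summand $U\circ t$ of $n$ gives $\langle n,s\rangle\xrightarrow{U}\langle t,s'\rangle$. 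This is matched by some $\langle n',s\rangle\xrightarrow{U}\langle t',s'\rangle$ with $t\pb t'$, so $n'$ has a summand $U\circ t'$. Since $t$ and $t'$ are smaller normal forms, the induction hypothesis gives $t=_{AC}t'$.
\end{itemize}
The argument is symmetric in $n$ and $n'$. Because $A3$ removes duplicate summands, the two sets of summands coincide, and hence $n=_{AC}n'$.

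\textbf{Step 4: conclude.} For basic terms $s\pb t$ with normal forms $n$ and $n'$, soundness gives $n\pb s\pb t\pb n'$. By Step 3, $n=_{AC}n'$, and therefore $s=n=n'=t$.

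The main obstacle is Step 2. I must justify that a residual $\oplus$ between non-atomic basic terms cannot survive rewriting. In particular, $RG12$ moves guards outward across $\oplus$ in a way that might block the merging rules. My first approach is to show that guard-prefixed parallel compositions reduce, via $G18$ and $G17$, either to $\mathbf{0}$ or to a single guard prefix. If that fails, I will treat $\phi\circ(x\oplus y)$ as a separate summand shape and match it in Step 3 using the test predicate together with the synchronous step rule of Table \ref{TRForParallel}.
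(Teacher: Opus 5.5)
Your outline is the paper's proof. It runs elimination to basic terms, then an AC-normal form for $+$ with prefix-shaped summands. It then does induction on size, matching summands through their transitions; the paper's cases are a summand $U$, a summand $\phi$, and $t_1\circ t_2$ with $t_1$ an action, a guard, $\alpha_1\oplus\cdots\oplus\alpha_\ell$, or $\phi_1\oplus\cdots\oplus\phi_\ell$. It finishes with an appeal to soundness. The paper, however, never proves your Step 2: it simply asserts the normal form. So the step you singled out is a genuine gap, and the ways you propose to close it fail.

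The case analysis of Theorem \ref{ETParallelism} cannot supply the normal form, because Definition \ref{BTAPT} closes basic terms under arbitrary $\oplus$. Your claim that some rule fires on $x\oplus y$ unless both arguments are atomic is false. For example, $((\alpha\circ\beta)\oplus(\alpha'\circ\beta'))\oplus(\alpha\circ\beta)$ rewrites by $RP3$ to $((\alpha\oplus\alpha')\circ(\beta\oplus\beta'))\oplus(\alpha\circ\beta)$, which is irreducible. The reason is that $RP1$--$RP3$ only match prefixes from $\mathsf{Act}\cup\{\mathbf{0},\mathbf{I}\}$, and the merged prefix $\alpha\oplus\alpha'$ is no longer such a prefix. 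Reading $\alpha\oplus\alpha'$ as $\mset{\alpha,\alpha'}$ helps only if you also have $P1$--$P3$ for pomset-valued prefixes, which are not axioms of APT and which you do not derive.

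Guards make this worse. $RG12$ pushes the guard \emph{into} both arguments rather than outward. So for a non-trivial guard, $\phi\circ(\alpha\oplus\beta)\rightarrow(\phi\circ\alpha)\oplus(\phi\circ\beta)$, again an irreducible $\oplus$ between non-atomic terms. Your first fallback cannot repair this, since $G17$ and $G18$ only match parallel compositions of bare guards and only ever yield $\mathbf{0}$. Your second fallback means giving up TRS normal forms and folding with $G12$ from right to left. It still says nothing about $(\phi\circ\alpha)\oplus(\psi\circ\beta)$ with distinct guards.

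Step 3 has a second, independent hole, which the paper shares. The inference that $n'$ also has the summand $U$ identifies summands by their transition labels. But $=_{AC}$ only concerns $+$, and APT has no axiom making $\oplus$ commutative or associative. For instance, $\alpha\oplus\beta$ and $\beta\oplus\alpha$ each have a single transition to $\surd$ labelled $\mset{\alpha,\beta}$, so they are bisimilar. Yet nothing in your argument or in the axioms identifies them.

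To make the argument work you would need three additions:
\begin{itemize}
\item commutativity and associativity of $\oplus$;
\item $P1$--$P3$ stated for arbitrary pomset prefixes $U,V$;
\item some law moving guards out of $(\phi\circ x)\oplus(\psi\circ y)$, matching the paper's prefix shape $\phi_1\oplus\cdots\oplus\phi_\ell$.
\end{itemize}
As it stands, neither your Step 2 nor the normal form the paper asserts is justified by the axioms of APT.
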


\begin{proof}
Firstly, by the elimination theorem of APT (see Theorem \ref{ETParallelism}), we know that for each closed APT term $p$, there exists a closed basic APT term $p'$, such that $APT\vdash p=p'$, so, we only need to consider closed basic APT terms.

The basic terms (see Definition \ref{BTAPT}) modulo associativity and commutativity (AC) of conflict $+$ (defined by axioms $A1$ and $A2$ in Table \ref{AxiomsForBAT}), and these equivalences is denoted by $=_{AC}$. Then, each equivalence class $s$ modulo AC of $+$ has the following normal form

\[s_1+\cdots+ s_k\]

with each $s_i$ either an atomic event or of the form

\[t_1\circ\cdots\circ t_m\]

with each $t_j$ either an atomic event or of the form

\[u_1\oplus\cdots\oplus u_{\ell}\]

with each $u_{\ell}$ an atomic event, and each $s_i$ is called the summand of $s$.

Now, we prove that for normal forms $n$ and $n'$, if $n\pb n'$ then $n=_{AC}n'$. It is sufficient to induct on the sizes of $n$ and $n'$.

\begin{itemize}
  \item Consider a summand $U$ of $n$. Then $\langle n,s\rangle\xrightarrow{U}\langle \surd,s'\rangle$, so $n\pb n'$ implies $\langle n',s\rangle\xrightarrow{U}\langle \surd,s\rangle$, meaning that $n'$ also contains the summand $U$.
  \item Consider a summand $\phi$ of $n$. Then $\langle n,s\rangle\rightarrow\langle \surd,s\rangle$, if $test(\phi,s)$ holds, so $n\pb n'$ implies $\langle n',s\rangle\rightarrow\langle \surd,s\rangle$, if $test(\phi,s)$ holds, meaning that $n'$ also contains the summand $\phi$.
  \item Consider a summand $t_1\cdot t_2$ of $n$,
  \begin{itemize}
    \item if $t_1\equiv \alpha'$, then $\langle n,s\rangle\xrightarrow{\alpha'}\langle t_2,s'\rangle$, so $n\pb n'$ implies $\langle n',s\rangle\xrightarrow{\alpha'}\langle t_2',s'\rangle$ with $t_2\pb t_2'$, meaning that $n'$ contains a summand $\alpha'\cdot t_2'$. Since $t_2$ and $t_2'$ are normal forms and have sizes smaller than $n$ and $n'$, by the induction hypotheses if $t_2\pb t_2'$ then $t_2=_{AC} t_2'$;
    \item if $t_1\equiv \phi'$, then $\langle n,s\rangle\rightarrow\langle t_2,s\rangle$, if $test(\phi',s)$ holds, so $n\pb n'$ implies $\langle n',s\rangle\rightarrow\langle t_2',s\rangle$ with $t_2\pb t_2'$, if $test(\phi',s)$ holds, meaning that $n'$ contains a summand $\phi'\cdot t_2'$. Since $t_2$ and $t_2'$ are normal forms and have sizes smaller than $n$ and $n'$, by the induction hypotheses if $t_2\pb t_2'$ then $t_2=_{AC} t_2'$;
    \item if $t_1\equiv \alpha_1\oplus\cdots\oplus \alpha_{ell}$, then $\langle n,s\rangle\xrightarrow{\{\alpha_1,\cdots,\alpha_{ell}\}}\langle t_2,s'\rangle$, so $n\pb n'$ implies $\langle n',s\rangle\xrightarrow{\{\alpha_1,\cdots,\alpha_{ell}\}}\langle t_2',s'\rangle$ with $t_2\pb t_2'$, meaning that $n'$ contains a summand $(\alpha_1\oplus\cdots\oplus \alpha_{ell})\cdot t_2'$. Since $t_2$ and $t_2'$ are normal forms and have sizes smaller than $n$ and $n'$, by the induction hypotheses if $t_2\pb t_2'$ then $t_2=_{AC} t_2'$;
    \item if $t_1\equiv \phi_1\oplus\cdots\oplus \phi_{ell}$, then $\langle n,s\rangle\rightarrow\langle t_2,s\rangle$, if $test(\phi_1,s),\cdots,test(\phi_{ell},s)$ hold, so $n\pb n'$ implies $\langle n',s\rangle\rightarrow\langle t_2',s\rangle$ with $t_2\pb t_2'$, if $test(\phi_1,s),\cdots,test(\phi_{ell},s)$ hold, meaning that $n'$ contains a summand $(\phi_1\oplus\cdots\oplus \phi_{ell})\cdot t_2'$. Since $t_2$ and $t_2'$ are normal forms and have sizes smaller than $n$ and $n'$, by the induction hypotheses if $t_2\pb t_2'$ then $t_2=_{AC} t_2'$.
  \end{itemize}
\end{itemize}

So, we get $n=_{AC} n'$.

Finally, let $s$ and $t$ be basic APT terms, and $s\pb t$, there are normal forms $n$ and $n'$, such that $s=n$ and $t=n'$. The soundness theorem of parallelism modulo pomset bisimulation equivalence (see Theorem \ref{SAPT}) yields $s\pb n$ and $t\pb n'$, so $n\pb s\pb t\pb n'$. Since if $n\pb n'$ then $n=_{AC}n'$, $s=n=_{AC}n'=t$, as desired.
\end{proof}

\newpage\section{Recursion}\label{rec}

In this section, we introduce recursion to capture infinite computations based on APT.

In the following, $E,F,G$ are recursion specifications, $X,Y,Z$ are recursive variables.

\begin{definition}[Recursive specification]
A recursive specification is a finite set of recursive equations

\[X_1=t_1(X_1,\cdots,X_n)\]
\[\cdots\]
\[X_n=t_n(X_1,\cdots,X_n)\]

where the left-hand sides of $X_i$ are called recursion variables, and the right-hand sides $t_i(X_1,\cdots,X_n)$ are process terms in APT with possible occurrences of the recursion variables $X_1,\cdots,X_n$.
\end{definition}

\begin{definition}[Solution]
Processes $p_1,\cdots,p_n$ are solutions for a recursive specification $\{X_i=t_i(X_1,\cdots,X_n)|i\in\{1,\cdots,n\}\}$ (with respect to pomset bisimulation equivalence $\pb$) if $p_i\pb t_i(p_1,\cdots,p_n)$ for $i\in\{1,\cdots,n\}$.
\end{definition}

\begin{definition}[Guarded recursive specification]
A recursive specification

\[X_1=t_1(X_1,\cdots,X_n)\]
\[...\]
\[X_n=t_n(X_1,\cdots,X_n)\]

is guarded if the right-hand sides of its recursive equations can be adapted to the form by applications of the axioms in APT and replacing recursion variables by the right-hand sides of their recursive equations, and there does not exist an infinite sequence of $\mathbf{I}$-transitions $\langle X|E\rangle\rightarrow\langle X'|E\rangle\rightarrow\langle X''|E\rangle\rightarrow\cdots$,

\[U_1\circ s_1(X_1,\cdots,X_n)+\cdots+U_k\circ s_k(X_1,\cdots,X_n)+V_1+\cdots+V_{\ell}\]

where $U_1,\cdots,U_k,V_1,\cdots,V_{\ell}\in \mathsf{Pom}$, and the sum above is allowed to be empty, in which case it represents the deadlock $\mathbf{0}$.
\end{definition}

\begin{definition}[Linear recursive specification]\label{LRS}
A recursive specification is linear if its recursive equations are of the form

\[U_1\circ X_1+\cdots+U_k\circ X_k+V_1+\cdots+V_{\ell}\]

where $U_1,\cdots,U_k,V_1,\cdots,V_{\ell}\in \mathsf{Pom}$, and the sum above is allowed to be empty, in which case it represents the deadlock $\mathbf{0}$.
\end{definition}

For a guarded recursive specifications $E$ with the form

\[X_1=t_1(X_1,\cdots,X_n)\]
\[\cdots\]
\[X_n=t_n(X_1,\cdots,X_n)\]

the behavior of the solution $\langle X_i|E\rangle$ for the recursion variable $X_i$ in $E$, where $i\in\{1,\cdots,n\}$, is exactly the behavior of their right-hand sides $t_i(X_1,\cdots,X_n)$, which is captured by the two transition rules in Table \ref{TRForGR}.

\subsection{Operational Semantics}

\begin{center}
    \begin{table}
        \[\frac{t_i(\langle X_1|E\rangle,\cdots,\langle X_n|E\rangle)\xrightarrow{U}\surd}{\langle X_i|E\rangle\xrightarrow{U}\surd}\]
        \[\frac{t_i(\langle X_1|E\rangle,\cdots,\langle X_n|E\rangle)\xrightarrow{U} y}{\langle X_i|E\rangle\xrightarrow{U} y}\]
        \caption{Transition rules of guarded recursion}
        \label{TRForGR}
    \end{table}
\end{center}

\begin{theorem}[Conservativity of APT with guarded recursion]
APT with guarded recursion is a conservative extension of APT.
\end{theorem}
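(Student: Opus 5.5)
The plan is to apply the Operational conservative extension theorem from Section \ref{plts}. Take $T_0$ to be the PTSS of APT: the rules of Table \ref{TRForBAT} together with those of Table \ref{TRForParallel}, over the signature $\Sigma_0$ of APT. Take $T_1$ to be the two rules of Table \ref{TRForGR}, over the signature $\Sigma_1$ obtained by adding the constants $\langle X_i|E\rangle$ for every guarded recursive specification $E$ and every recursion variable $X_i$ of $E$. By that theorem, $T_0\oplus T_1$ is an operational conservative extension of $T_0$ as soon as its two conditions hold.

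\textbf{Condition 1.} Every rule of $T_0$ must be source-dependent. This is the same check already made in the proof of the generalization theorem of APT with respect to BAT, so I will cite that argument and only spell out the pattern.
\begin{itemize}
\item For the axiom-like rules (for $\mathbf{I}$, for $U\in\mathsf{Pom}$, and for guards $\phi$), there are no premises, and all variables, if any, occur in the source.
\item For the rules of $+$ and $\circ$, the source is $x+y$ or $x\circ y$, and the premise has left-hand side $x$ or $y$. Hence the variable $x'$ or $y'$ appearing in the target is source-dependent by the second clause of the definition.
\item For the four $\oplus$ rules, the source $x\oplus y$ contains both $x$ and $y$, and the premises $x\xrightarrow{\alpha}x'$ and $y\xrightarrow{\alpha'}y'$ make $x',y'$ source-dependent.
\end{itemize}
The state components $s,s'$ are treated as data attached to the configuration rather than term variables, exactly as in the earlier congruence and generalization arguments.

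\textbf{Condition 2.} For each rule in $T_1$, the source is $\langle X_i|E\rangle$, a constant in $\Sigma_1\setminus\Sigma_0$. The source is therefore fresh, and the first alternative of condition 2 is satisfied directly. The second alternative, with its more delicate premise requirements, is never needed. Combining conditions 1 and 2, the theorem yields that APT with guarded recursion is an operational conservative extension of APT. Since pomset bisimulation is defined purely in terms of the transitions and predicates derivable from closed terms, transitions of closed APT terms are unchanged. Consequently, pomset bisimilarity between APT terms is also preserved, which is the intended meaning of conservativity.

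\textbf{Main obstacle.} The only nontrivial point is that the rules of Table \ref{TRForGR} contain the term $t_i(\langle X_1|E\rangle,\cdots,\langle X_n|E\rangle)$ in their premises, which is not a term of $\Sigma_0$. This would be fatal under the second alternative, but it is irrelevant under the freshness alternative. I will state explicitly that freshness of the source suffices regardless of premise shape. Guardedness is not needed for conservativity itself; it is only needed later for the existence and uniqueness of solutions. A secondary subtlety is that the theorem is formulated for rules whose conclusions carry pomset labels $U$ while the rules of $T_0$ also carry state pairs $\langle\cdot,s\rangle$. I will handle this, as the paper does elsewhere, by reading each rule as parametrised by the states, so that source-dependency is judged on the process-term components only.
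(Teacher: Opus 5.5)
Your proposal is correct and follows the same route as the paper: apply the operational conservative extension theorem, using that every APT rule is source-dependent and that the recursion rules in Table~\ref{TRForGR} have the fresh constant $\langle X_i|E\rangle$ as their source. You carry out the rule-by-rule checks and the remark about the non-$\Sigma_0$ premise explicitly, whereas the paper's one-line proof leaves them implicit.
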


\begin{proof}
Since the transition rules of APT are source-dependent, and the transition rules for guarded recursion in Table \ref{TRForGR} contain only a fresh constant in their source, so the transition rules of APT with guarded recursion are conservative extensions of those of APT.
\end{proof}

\begin{theorem}[Congruence theorem of APT with guarded recursion]
Pomset bisimulation equivalence $\pb$ is a congruences with respect to APT with guarded recursion.
\end{theorem}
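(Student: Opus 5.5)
The plan is to reuse the format-based argument that gave the congruence theorems for BAT and APT. The PTSS of APT with guarded recursion is the union of the rules in Table \ref{TRForBAT}, Table \ref{TRForParallel} and Table \ref{TRForGR}. For each constant $\langle X_i|E\rangle$, the new rules in Table \ref{TRForGR} have a single positive premise, $t_i(\langle X_1|E\rangle,\cdots,\langle X_n|E\rangle)\xrightarrow{U}\surd$ or $\xrightarrow{U}y$. Their conclusion has the constant $\langle X_i|E\rangle$ as source and either $\surd$ or the variable $y$ as target.

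The first step is to check that these rules are in panth format. The source is a constant, so there are no source variables that must be distinct. The only variable, $y$, occurs as the target of a premise and is not repeated. The premise's left-hand side is a closed term, which the format allows, since panth restricts only the targets of positive premises and the source of the conclusion. Combined with the earlier observation that the BAT and APT rules are positive (after reduction) and in panth format, the whole PTSS is in panth format.

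The second step is to make sure the PTSS is meaningful, that is, it determines a unique transition relation; this is needed for the congruence meta-theorem to apply. Since all rules are positive, the least transition relation generated by the rules is well defined, and the reduction step is trivial. The side conditions on $effect$ and $test$ are only closed conditions on data states, so they can be read as families of rules indexed by $s$. I would then invoke the congruence theorem for panth-format PTSSs, lifted to pomset labels in the same way as in the BAT and APT cases, to conclude that $\pb$ is a congruence for $+$, $\circ$ and $\oplus$ over terms that may contain the constants $\langle X_i|E\rangle$.

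The main obstacle I expect is the well-foundedness and well-definedness of the transition relation for recursion constants. Premises of the form $t_i(\ldots)\xrightarrow{U}y$ can unfold $\langle X_j|E\rangle$ again. Guardedness, including the ban on infinite $\mathbf{I}$-transition sequences, ensures that every derivation of an initial transition of $\langle X_i|E\rangle$ is finite, so that finitely many unfoldings suffice. If the general meta-theorem proves awkward in the pomset/state setting, the fallback is a direct argument. Given $p_k\pb q_k$, take the closure of the witnessing bisimulations under the term contexts of APT, and show by induction on the depth of the transition derivation that this relation is a pomset bisimulation. For the recursion rules, that depth is finite by guardedness, and the cases for $+$, $\circ$ and $\oplus$ follow rule by rule from Tables \ref{TRForBAT} and \ref{TRForParallel}.
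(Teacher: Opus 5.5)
Your proposal is correct and follows the paper's route. The paper's proof is exactly the one-line appeal to the PTSS being positive (after reduction) and in panth format, and you flesh that out by checking the recursion rules of Table \ref{TRForGR} against the format.

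One small remark: guardedness is not needed for this step, so your ``main obstacle'' dissolves. For a positive PTSS the transition relation is just the set of provable transitions, and derivations are well-founded proof trees by definition. An unguarded constant simply has no transitions, and guardedness only becomes relevant later, for $RSP$.
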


\begin{proof}
Since the PTSS of APT with guarded recursion in Table \ref{TRForGR} is positive after reduction and in panth format, pomset bisimulation equivalence is a congruence with respect to APT with guarded recursion.
\end{proof}

\subsection{Axiomatic System}

The $RDP$ (Recursive Definition Principle) and the $RSP$ (Recursive Specification Principle) are shown in Table \ref{RDPRSP}.

\begin{center}
    \begin{longtable}{|c|c|}
      \caption{Recursive definition and specification principle}\\
      \hline No. &Axioms\\
      \hline
      \endfirsthead
      \multicolumn{2}{c}{\bfseries continuing: Recursive definition and specification principle} \\
      \hline No. &Axioms\\
      \hline
      \endhead
      \hline
      \multicolumn{2}{r}{to be continued\ldots} \\
      \endfoot
      \hline
      \multicolumn{2}{r}{end of Recursive definition and specification principle} \\
      \endlastfoot
      $RDP$ & $\langle X_i|E\rangle = t_i(\langle X_1|E\rangle,\cdots,\langle X_n|E\rangle)\quad (i\in\{1,\cdots,n\})$\\
      $RSP$ & if $y_i=t_i(y_1,\cdots,y_n)$ for $i\in\{1,\cdots,n\}$, then $y_i=\langle X_i|E\rangle \quad(i\in\{1,\cdots,n\})$
      \label{RDPRSP}
   \end{longtable}
\end{center}

$RDP$ follows immediately from the two transition rules for guarded recursion, which express that $\langle X_i|E\rangle$ and $t_i(\langle X_1|E\rangle,\cdots,\langle X_n|E\rangle)$ have the same initial transitions for $i\in\{1,\cdots,n\}$. $RSP$ follows from the fact that guarded recursive specifications have only one solution.

\subsection{Soundness and Completeness}

\begin{theorem}[Soundness of APT with guarded recursion]\label{SAPTR}
Let $x$ and $y$ be APT with guarded recursion terms. If $APT\mbox{ with guarded recursion}\vdash x=y$, then $x\pb y$.
\end{theorem}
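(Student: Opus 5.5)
The plan follows the pattern of Theorem \ref{SBAT} and Theorem \ref{SAPT}. By the congruence theorem of APT with guarded recursion, $\pb$ is both an equivalence and a congruence with respect to all operators of APT with guarded recursion. Any derivation of $x=y$ is built from instances of axioms via substitution, reflexivity, symmetry, transitivity and context closure. So by induction on the derivation it suffices to show that every axiom instance is sound modulo $\pb$. The axioms of BAT and APT are already covered by Theorem \ref{SBAT} and Theorem \ref{SAPT}. Their soundness transfers to the extended signature because APT with guarded recursion is a conservative extension of APT: transitions of terms not containing $\langle X|E\rangle$ are unchanged. For instances whose substituted terms contain recursion constants, we reuse the same bisimulation relations as before, now ranging over the larger term set. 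The remaining work is therefore $RDP$ and $RSP$.

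For $RDP$, we take the relation
\[R=\{(\langle\langle X_i|E\rangle,s\rangle,\langle t_i(\langle X_1|E\rangle,\cdots,\langle X_n|E\rangle),s\rangle)\mid i\in\{1,\cdots,n\},s\in S\}\]
together with the identity relation on configurations. The two transition rules in Table \ref{TRForGR} say that $\langle X_i|E\rangle$ has exactly the outgoing pomset transitions and termination behaviour of $t_i(\langle X_1|E\rangle,\cdots,\langle X_n|E\rangle)$. Each step $\xrightarrow{U}$ on one side is therefore matched by the same step on the other, leading to identical targets (or to $\surd$) with the same resulting state $s'$. Hence $R$ is a pomset bisimulation, and $RDP$ is sound.

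For $RSP$, suppose $y_i\pb t_i(y_1,\cdots,y_n)$ for all $i$, with $E$ guarded; we must show $y_i\pb\langle X_i|E\rangle$. The plan is to prove that guarded specifications have unique solutions modulo $\pb$. By guardedness we first rewrite each $t_i$, using APT axioms and unfolding, into the form $U_1\circ s_1+\cdots+U_k\circ s_k+V_1+\cdots+V_\ell$, in which every occurrence of a recursion variable sits behind a pomset $U_j$. The candidate relation is the closure under contexts $C$ (terms with holes for the variables) of pairs $\langle C(y_1,\cdots,y_n),s\rangle$ and $\langle C(\langle X_1|E\rangle,\cdots,\langle X_n|E\rangle),s\rangle$, taken up to $\pb$. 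A transition of $C(\vec y)$ is then traced back to a transition of $C(\vec t(\vec y))$. Guardedness ensures the first step is a visible pomset $U_j$ not depending on which solution is plugged in, so the residuals are again of the form $C'(\vec y)$ versus $C'(\langle\vec X|E\rangle)$, and the relation closes. The absence of infinite $\mathbf{I}$-transition sequences, built into the definition of guardedness, is used here: it lets us unfold silent identity steps finitely often until a visible pomset step appears.

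The main obstacle is this $RSP$ step. The uniqueness-of-solutions argument must handle contexts containing $\oplus$ and guards, so the matching of a pomset $\mset{\alpha,\alpha'}$ performed by $C(\vec y)$ needs the rules of Table \ref{TRForParallel} to decompose it into component steps, each guarded. I would first try to reduce to linear specifications (Definition \ref{LRS}) by showing that every guarded specification is provably equivalent to a linear one via the APT axioms; for linear specifications the bisimulation $\{(\langle y_i,s\rangle,\langle\langle X_i|E\rangle,s\rangle)\}$ closed under $\pb$ can be checked summand by summand, as in the completeness proofs above.
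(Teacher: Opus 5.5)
Your reduction to the rules of Table \ref{RDPRSP} via equivalence and congruence, your treatment of $RDP$ (identical transitions by Table \ref{TRForGR}), and your plan to derive $RSP$ from uniqueness of solutions of guarded specifications all follow the paper's route. The paper only asserts these facts and omits the details.

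The step that would fail is the one you choose for the ``main obstacle'': it is not true that every guarded specification is provably equal to a linear one. Guardedness in this paper only asks that each right-hand side be brought into the form $U_1\circ s_1+\cdots+U_k\circ s_k+V_1+\cdots+V_\ell$, with \emph{arbitrary} terms $s_j$ behind the prefixes. Take the guarded specification $X=\alpha\circ(X\circ\beta)+\alpha'$. Its solution $\langle X|E\rangle$ reaches the states $\langle X|E\rangle\circ\beta\circ\cdots\circ\beta$ with $n$ copies of $\beta$. These are pairwise non-bisimilar, because after $\alpha'$ each can perform exactly $n$ steps $\beta$ and then terminate. The solution of a finite linear specification reaches only finitely many process terms $\langle X_j|E'\rangle$ and $\surd$. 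So no linear specification has a solution bisimilar to $\langle X|E\rangle$, and no derivation from the sound APT axioms can produce one. The paper's own completeness result, Theorem \ref{CAPTR}, is likewise stated only for linear recursion. Soundness of $RSP$ is needed for all guarded $E$, so linearization cannot be the route.

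Instead, finish the direct up-to-context argument you already sketched; there $\oplus$ is not actually an obstacle.

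\begin{itemize}
\item By congruence and soundness of the APT axioms, you may replace each unguarded occurrence of $y_i$ in a context by the guarded form of $t_i(y_1,\cdots,y_n)$, up to $\pb$.
\item Then prove, by induction on a context $C$ whose holes are all guarded, that every step $\langle C(\vec p),s\rangle\xrightarrow{W}\langle D(\vec p),s'\rangle$ has label $W$, target state $s'$ and residual context $D$ independent of the tuple $\vec p$ plugged in.
\item For $\circ$, $+$ and guards this is immediate from Table \ref{TRForBAT}.
\item For $\oplus$, the rules of Table \ref{TRForParallel} are lock-step. A joint move of $C_1\oplus C_2$ therefore leaves $D_1\oplus D_2$, or a single $D_i$ when the other side terminates, which is again a context applied to the same tuple.
\item After each step, re-guard any newly exposed holes up to $\pb$, using the finiteness of $\mathbf{I}$-sequences as you describe.
\end{itemize}

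This makes $\{(C(\vec y),C(\langle X_1|E\rangle,\cdots,\langle X_n|E\rangle))\}$ a pomset bisimulation up to $\pb$, which yields $y_i\pb\langle X_i|E\rangle$ without any linearization.
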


\begin{proof}
Since pomset bisimulation equivalence is both an equivalent and a congruent relation with respect to APT with guarded recursion, we only need to check if each axiom in Table \ref{RDPRSP} is sound modulo pomset bisimulation equivalence. The proof is trivial and we omit it.
\end{proof}

\begin{theorem}[Completeness of APT with linear recursion]\label{CAPTR}
Let $p$ and $q$ be closed APT with linear recursion terms, then, if $p\pb q$ then $p=q$.
\end{theorem}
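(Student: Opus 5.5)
The plan is the standard three-step argument for completeness with linear recursion, as in ACP-style process algebras. Everything is reduced to linear recursive specifications, and the bisimulation between the two terms is turned into one combined linear specification that both terms solve; $RSP$ then identifies them.

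\textbf{Step 1: every closed term equals a linear recursion constant.} I will show by structural induction on a closed APT-with-linear-recursion term $p$ that there is a linear recursive specification $E$ (in the sense of Definition \ref{LRS}) and a variable $X$ with APT $+$ $RDP$ $+$ $RSP$ $\vdash p=\langle X|E\rangle$. The cases are as follows.
\begin{itemize}
\item For an atomic action, $\mathbf{0}$, $\mathbf{I}$ or a guard, the specification is a single equation.
\item For $\langle X|E\rangle$ with $E$ linear, there is nothing to prove.
\item For $p_1+p_2$, I take the disjoint union of the two specifications and add a fresh variable whose right-hand side is the sum of the right-hand sides of the roots.
\item For $p_1\circ p_2$, I replace every terminating summand $V$ of $E_1$ by $V\circ Y$, where $Y$ is the root of $E_2$.
\item For $p_1\oplus p_2$, I introduce a variable $Z_{XY}$ for each pair of variables of $E_1$ and $E_2$. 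Using axioms $P1$--$P9$ and $P4$, $P5$, I expand $t_X\oplus t_Y$ into summands $(U\oplus U')\circ Z_{X'Y'}$, $(U\oplus U')\circ X'$, $(U\oplus U')\circ Y'$ and $U\oplus U'$.
\end{itemize}
In each case $RDP$ shows that the constructed tuple satisfies the new specification, and $RSP$ yields the equality. Guards are handled like actions with $\mathbf{I}$-labels, and guardedness guarantees there are no infinite $\mathbf{I}$-chains.

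\textbf{Step 2: bisimilar linear constants are provably equal.} Suppose $\langle X_1|E_1\rangle\pb\langle Y_1|E_2\rangle$ with $E_1,E_2$ linear, witnessed by a pomset bisimulation $R$. I build a new linear specification $E$ with one variable $Z_{XY}$ for each related pair $(X,Y)\in R$ of recursion variables. The right-hand side of $Z_{XY}$ is the sum of $U\circ Z_{X'Y'}$ over the transitions $X\xrightarrow{U}X'$ matched by $Y\xrightarrow{U}Y'$ with $(X',Y')\in R$, together with the terminating summands $V$ common to both (the bisimulation clauses force them to coincide).

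I then show, using $RDP$ and the AC axioms $A1$--$A3$, that the assignment $Z_{XY}\mapsto\langle X|E_1\rangle$ is a solution of $E$, and likewise $Z_{XY}\mapsto\langle Y|E_2\rangle$. The key fact is that the right-hand side of $X$ equals that of $Z_{XY}$ after this substitution, because every summand of $t_X$ is matched, and $A3$ absorbs duplicates. Since $E$ is linear and hence guarded, $RSP$ gives $\langle X_1|E_1\rangle=\langle Z_{X_1Y_1}|E\rangle=\langle Y_1|E_2\rangle$.

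\textbf{Step 3: conclude.} Given closed $p\pb q$, Step 1 yields linear constants $p=\langle X|E_1\rangle$ and $q=\langle Y|E_2\rangle$. Soundness (Theorem \ref{SAPTR}) transfers the bisimilarity to these constants, and Step 2 gives $p=q$.

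The main obstacle I expect is the $\oplus$ case of Step 1, that is, showing that the parallel composition of two linear specifications is again provably linear. Summands with terminating parts and interactions with guards (axioms $G12$--$G18$) must all be accounted for. I would first treat only pomset summands and use $P1$--$P3$ systematically, then treat guard summands via $G12$ and $G15$--$G17$, reducing to $\mathbf{0}$ where $G17$ or $G18$ applies.
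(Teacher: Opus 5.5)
Your proposal is correct and follows the paper's route. The paper's proof is exactly your Step 2: it builds a linear specification $E$ with variables $Z_{XY}$ for bisimilar pairs, whose summands are the matched $(\alpha_1\oplus\cdots\oplus\alpha_m)Z_{X'Y'}$ together with the common terminating summands; it uses $RDP$ to show that $Z_{XY}\mapsto\langle X|E_1\rangle$ and, symmetrically, $Z_{XY}\mapsto\langle Y|E_2\rangle$ are solutions, and concludes $\langle X_1|E_1\rangle=\langle Z_{X_1Y_1}|E\rangle=\langle Y_1|E_2\rangle$ by $RSP$. The only difference is that the paper merely asserts your Step 1 (every closed term equals some $\langle X_1|E\rangle$ with $E$ linear), whereas you sketch the structural induction for it and rightly single out the $\oplus$ case as the delicate one.
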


\begin{proof}
Firstly, we know that each process term in APT with linear recursion is equal to a process term $\langle X_1|E\rangle$ with $E$ a linear recursive specification.

Let $E_1$ consist of recursive equations $X=t_X$ for $X\in \mathcal{X}$ and $E_2$
consists of recursion equations $Y=t_Y$ for $Y\in\mathcal{Y}$. Let the linear recursive specification $E$ consist of recursion equations $Z_{XY}=t_{XY}$, and $\langle X|E_1\rangle\sim_s\langle Y|E_2\rangle$, and $t_{XY}$ consists of the following summands:

\begin{enumerate}
  \item $t_{XY}$ contains a summand $(\alpha_1\oplus\cdots\oplus \alpha_m)Z_{X'Y'}$ iff $t_X$ contains the summand $(\alpha_1\oplus\cdots\oplus \alpha_m)X'$ and $t_Y$ contains the summand $(\alpha_1\oplus\cdots\oplus \alpha_m)Y'$ such that $\langle X'|E_1\rangle\sim_s\langle Y'|E_2\rangle$;
  \item $t_{XY}$ contains a summand $\beta_1\oplus\cdots\oplus \beta_n$ iff $t_X$ contains the summand $\beta_1\oplus\cdots\oplus \beta_n$ and $t_Y$ contains the summand $\beta_1\oplus\cdots\oplus \beta_n$.
\end{enumerate}

Let $\sigma$ map recursion variable $X$ in $E_1$ to $\langle X|E_1\rangle$, and let $\psi$ map recursion variable $Z_{XY}$ in $E$ to $\langle X|E_1\rangle$. So, $\sigma((\alpha_1\oplus\cdots\oplus \alpha_m)X')\equiv(\alpha_1\oplus\cdots\oplus \alpha_m)\langle X'|E_1\rangle\equiv\psi((\alpha_1\oplus\cdots\oplus \alpha_m)Z_{X'Y'})$, so by $RDP$, we get $\langle X|E_1\rangle=\sigma(t_X)=\psi(t_{XY})$. Then by $RSP$, $\langle X|E_1\rangle=\langle Z_{XY}|E\rangle$, particularly, $\langle X_1|E_1\rangle=\langle Z_{X_1Y_1}|E\rangle$. Similarly, we can obtain $\langle Y_1|E_2\rangle=\langle Z_{X_1Y_1}|E\rangle$. Finally, $\langle X_1|E_1\rangle=\langle Z_{X_1Y_1}|E\rangle=\langle Y_1|E_2\rangle$, as desired.
\end{proof} 
\newpage\section{Abstraction}\label{abs}

To abstract away from the internal implementations of a transformer, and verify that the transformer exhibits the desired external behaviors, the silent step $\tau$ and abstraction operator $\tau_I$ are introduced, where $I\subseteq \mathsf{Act}\cup \mathsf{Grd}$ denotes the internal events or guards.

\subsection{Silent Step}

The silent step $\tau$ represents the internal actions, and $\tau_{\phi}$ for internal guards, when we consider the external behaviors of a process, $\tau$ steps can be removed, i.e., $\tau$ steps must keep silent. The transition rule of $\tau$ is shown in Table \ref{TRForTauG}. In the following, let the atomic action $a$ range over $\mathsf{Act}\cup\{\mathbf{0}\}\cup\{\mathbf{I}\}\cup\{\tau\}$, and $\phi$ range over $\mathsf{Grd}\cup \{\tau\}$. We use $\tau(s)$ to denote $effect(\tau,s)$, for the fact that $\tau$ only change the state of internal data environment, that is, for the external data environments, $s=\tau(s)$.

\subsubsection{Operational Semantics}

\begin{center}
    \begin{table}
        \[\frac{}{\langle\tau_{\phi},s\rangle\rightarrow\langle\surd,s\rangle}\mbox{ if }test(\tau_{\phi},s)\]
        \[\frac{}{\langle\tau,s\rangle\xrightarrow{\tau}\langle\surd,\tau(s)\rangle}\]
        \caption{Transition rule of the silent step}
        \label{TRForTauG}
    \end{table}
\end{center}

\begin{definition}[Weak pomset transitions]
Let $p,q\in\mathsf{Proc}$ and $U\in\mathsf{Pom}$, $p\xRightarrow{U} q$ is called a weak pomset transition from $p$ to $q$, where we define $\xRightarrow{a}\triangleq\xrightarrow{\tau^*}\xrightarrow{a}\xrightarrow{\tau^*}$. And $\xRightarrow{U}\triangleq\xrightarrow{\tau^*}\xrightarrow{U}\xrightarrow{\tau^*}$, for every $a\in U$. 
\end{definition}

\begin{definition}[Branching pomset bisimulation]
Assume a special termination predicate $\downarrow$, and let $\surd$ represent a state with $\surd\downarrow$. Let $S$ be a set of states, $p,q\in\mathsf{Proc}$, $U\in\mathsf{Pom}$ and $s,s'\in S$. A branching pomset bisimulation is a relation $R\subseteq\langle\mathsf{Proc},S\rangle\times\langle\mathsf{Proc},S\rangle$, such that:

 \begin{enumerate}
   \item if $(\langle p,s\rangle,\langle q,s\rangle)\in R$, and $\langle p,s\rangle\xrightarrow{U}\langle p',s'\rangle$ then
   \begin{itemize}
     \item either $U\equiv \tau^*$, and $(\langle p',s'\rangle,\langle q,s\rangle)\in R$ with $s'\in \tau(s)$;
     \item or there is a sequence of (zero or more) $\tau$-transitions $\langle q,s\rangle\xrightarrow{\tau^*} \langle q^0,s^0\rangle$, such that $(\langle p,s\rangle,\langle q^0,s^0\rangle)\in R$ and $\langle q^0,s^0\rangle\xRightarrow{U}\langle q',s'\rangle$ with $(\langle p',s'\rangle,\langle q',s'\rangle)\in R$;
   \end{itemize}
   \item if $(\langle p,s\rangle,\langle q,s\rangle)\in R$, and $\langle q,s\rangle\xrightarrow{U}\langle q',s'\rangle$ then
   \begin{itemize}
     \item either $U\equiv \tau^*$, and $(\langle p,s\rangle,\langle q',s'\rangle)\in R$;
     \item or there is a sequence of (zero or more) $\tau$-transitions $\langle p,s\rangle\xrightarrow{\tau^*} \langle p^0,s^0\rangle$, such that $(\langle p^0,s^0\rangle,\langle q,s\rangle)\in R$ and $\langle p^0,s^0\rangle\xRightarrow{U}\langle p',s'\rangle$ with $(\langle p',s'\rangle,\langle q',s'\rangle)\in R$;
   \end{itemize}
   \item if $(\langle p,s\rangle,\langle q,s\rangle)\in R$ and $\langle p,s\rangle\downarrow$, then there is a sequence of (zero or more) $\tau$-transitions $\langle q,s\rangle\xrightarrow{\tau^*}\langle q^0,s^0\rangle$ such that $(\langle p,s\rangle,\langle q^0,s^0\rangle)\in R$ and $\langle q^0,s^0\rangle\downarrow$;
   \item if $(\langle p,s\rangle,\langle q,s\rangle)\in R$ and $\langle q,s\rangle\downarrow$, then there is a sequence of (zero or more) $\tau$-transitions $\langle p,s\rangle\xrightarrow{\tau^*}\langle p^0,s^0\rangle$ such that $(\langle p^0,s^0\rangle,\langle q,s\rangle)\in R$ and $\langle p^0,s^0\rangle\downarrow$.
 \end{enumerate}

We say that $p,q$ are branching pomset bisimilar, written $p\pbb q$, if there exists a branching pomset bisimulation $R$, such that $(\langle p,s\rangle,\langle q,s\rangle)\in R$.
\end{definition}

\begin{definition}[Rooted branching pomset, step bisimulation]\label{RBPSBG}
Assume a special termination predicate $\downarrow$, and let $\surd$ represent a state with $\surd\downarrow$. Let $S$ be a set of states, $p,q\in\mathsf{Proc}$, $U\in\mathsf{Pom}$ and $s,s'\in S$. A rooted branching pomset bisimulation is a relation $R\subseteq\langle\mathsf{Proc},S\rangle\times\langle\mathsf{Proc},S\rangle$, such that:

 \begin{enumerate}
   \item if $(\langle p,s\rangle,\langle q,s\rangle)\in R$, and $\langle p,s\rangle\xrightarrow{U}\langle p',s'\rangle$ then $\langle q,s\rangle\xrightarrow{U}\langle q',s'\rangle$ with $\langle p',s'\rangle\pbb\langle q',s'\rangle$;
   \item if $(\langle p,s\rangle,\langle q,s\rangle)\in R$, and $\langle q,s\rangle\xrightarrow{U}\langle q',s'\rangle$ then $\langle p,s\rangle\xrightarrow{U}\langle p',s'\rangle$ with $\langle p',s'\rangle\pbb\langle q',s'\rangle$;
   \item if $(\langle p,s\rangle,\langle q,s\rangle)\in R$ and $\langle p,s\rangle\downarrow$, then $\langle q,s\rangle\downarrow$;
   \item if $(\langle p,s\rangle,\langle q,s\rangle)\in R$ and $\langle q,s\rangle\downarrow$, then $\langle p,s\rangle\downarrow$.
 \end{enumerate}

We say that $p,q$ are rooted branching pomset bisimilar, written $p\pbrb q$, if there exists a rooted branching pomset bisimulation $R$, such that $(\langle p,s\rangle,\langle q,s\rangle)\in R$.
\end{definition}

The silent step $\tau$ as an atomic event, is introduced into $E$. Considering the recursive specification $X=\tau X$, $\tau s$, $\tau\tau s$, and $\tau\cdots s$ are all its solutions, that is, the solutions make the existence of $\tau$-loops which cause unfairness. To prevent $\tau$-loops, we extend the definition of linear recursive specification (Definition \ref{LRS}) to the guarded one.

\begin{definition}[Guarded linear recursive specification]\label{GLRS}
A recursive specification is linear if its recursive equations are of the form

\[U_1\circ X_1+\cdots+U_k\circ X_k+V_1+\cdots+V_{\ell}\]

where $U_1,\cdots,U_k,V_1,\cdots,V_{\ell}\in \mathsf{Pom}$, and the sum above is allowed to be empty, in which case it represents the deadlock $\mathbf{0}$.

A linear recursive specification $E$ is guarded if there does not exist an infinite sequence of $\mathbf{I}$-transitions $\langle X|E\rangle\rightarrow\langle X'|E\rangle\rightarrow\langle X''|E\rangle\rightarrow\cdots$ and $\tau$-transitions $\langle X|E\rangle\xrightarrow{\tau}\langle X'|E\rangle\xrightarrow{\tau}\langle X''|E\rangle\xrightarrow{\tau}\cdots$.
\end{definition}

\begin{theorem}[Conservativity of APT with silent step and guarded linear recursion]
APT with silent step and guarded linear recursion is a conservative extension of APT with linear recursion.
\end{theorem}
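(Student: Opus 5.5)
The plan is to apply the Operational conservative extension theorem from the preliminaries, taking $T_0$ to be the PTSS of APT with linear recursion and $T_1$ the new rules of Table \ref{TRForTauG} for $\tau$ and $\tau_{\phi}$. This follows the pattern of the earlier conservativity result for APT with guarded recursion. The argument has two parts: condition 1 of that theorem for $T_0$, and condition 2 for each rule of $T_1$.

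For condition 1, I will check that every rule of $T_0$ is source-dependent. These rules come from three tables.
\begin{itemize}
\item Table \ref{TRForBAT}: the axioms for $\mathbf{I}$, $U$ and $\phi$ contain no variables. In the rules for $+$ and $\circ$, the source $x+y$ or $x\circ y$ contains $x,y$. The only other variable is $x'$ or $y'$, which is the target of a premise whose source is $x$ or $y$, so it is source-dependent by the inductive clause.
\item Table \ref{TRForParallel}: the same reasoning applies to $x\oplus y$, whose premises $x\xrightarrow{\alpha}x'$ and $y\xrightarrow{\alpha'}y'$ make $x',y'$ source-dependent.
\item Table \ref{TRForGR}: the source $\langle X_i|E\rangle$ is closed. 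The premise source $t_i(\langle X_1|E\rangle,\cdots,\langle X_n|E\rangle)$ is closed, so the target variable $y$ is source-dependent.
\end{itemize}
Restricting to linear specifications does not change these rules, so condition 1 holds.

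For condition 2, both rules of Table \ref{TRForTauG} are axioms with sources $\tau$ and $\tau_{\phi}$. These are constants of $\Sigma_1\setminus\Sigma_0$, so each source is fresh and the first alternative of condition 2 is met. Two points need care.
\begin{itemize}
\item The predicate $test(\tau_\phi,s)$ and the label $\tau$ are new. Their freshness only matters for the second alternative of condition 2, which we do not use.
\item Extending the ranges of $a$ and $\phi$ to include $\tau$ adds instances of the existing rules for $+$, $\circ$, $\oplus$ and recursion. These are not new rule schemata, and their sources are unchanged, so conservativity is unaffected.
\end{itemize}

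The main obstacle is the data-state component $\langle p,s\rangle$, which the format theorem does not mention. I will treat states as part of the labels or as side conditions on closed rules, so that the panth-format argument goes through unchanged. A second issue is the guardedness condition in Definition \ref{GLRS}. Here I will argue that guarded linear specifications without $\tau$ are exactly the linear specifications of the earlier section that admit no infinite $\mathbf{I}$-chains, so nothing new is introduced on the $T_0$ side. With both points settled, the theorem yields that $T_0\oplus T_1$ is an operational conservative extension, which is the claim.
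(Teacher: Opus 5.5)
Your proposal is correct and takes the same approach as the paper. The paper's one-line proof also applies the operational conservative extension theorem, using source-dependence of the rules of APT with linear recursion and the freshness of the constant $\tau$ in the sources of the rules of Table \ref{TRForTauG}. You spell out details the paper leaves implicit: the second fresh constant $\tau_{\phi}$, the $\tau$-instances of the existing rule schemata, guardedness of the base specifications, and the data states. One small correction: the panth format belongs to the congruence results, not to this conservativity argument.
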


\begin{proof}
Since the transition rules of APT with linear recursion are source-dependent, and the transition rules for silent step in Table \ref{TRForTauG} contain only a fresh constant $\tau$ in their source, so the transition rules of APT with silent step and guarded linear recursion is a conservative extension of those of APT with linear recursion.
\end{proof}

\begin{theorem}[Congruence theorem of APT with silent step and guarded linear recursion]
Rooted branching pomset bisimulation equivalence $\pbrb$ is a congruence with respect to APT with silent step and guarded linear recursion.
\end{theorem}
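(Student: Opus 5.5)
The plan follows the pattern of the earlier congruence theorems for BAT, APT and APT with guarded recursion: reduce the claim to a property of the rule format of the transition system specification, then add a direct check for the one feature that differs here, namely rootedness and silent steps.

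First I will collect the PTSS for APT with silent step and guarded linear recursion. It is the union of three parts: the rules of BAT in Table \ref{TRForBAT} together with the rules for $\oplus$ in Table \ref{TRForParallel}; the rules for guarded recursion in Table \ref{TRForGR}; and the rules for $\tau$ and $\tau_{\phi}$ in Table \ref{TRForTauG}. I will check that each rule is positive after reduction and in panth format. The sources are either constants or a single operator applied to distinct variables. The premises concern only variables of the source. The targets contain no duplicated variables beyond those permitted by the format.

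Second, I will check that the $\tau$ rules are side-condition-free axioms whose source is a fresh constant. This means they do not disturb the format. By the conservativity theorem just proved, they also do not change the transitions of terms not containing $\tau$.

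The format argument by itself only yields congruence for (non-rooted) bisimilarity with $\tau$ treated as an ordinary label. For $\pbrb$ I will use the standard refinement: a PTSS in panth format is \emph{RBB safe} if, besides the format, its arguments are partitioned into wild and tame positions. Wild arguments must not occur as premise sources except in patient rules, and may only occur in wild positions of targets. Under that condition, rooted branching bisimilarity is a congruence.

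I will declare the following positions:
\begin{itemize}
\item the first argument of $\circ$ and both arguments of $+$ are tame, since they are tested in the root step and $+$ is resolved by the first transition;
\item the second argument of $\circ$ is wild, since it is never a premise source;
\item both arguments of $\oplus$ are tame, since both are tested simultaneously and the rules are lock-step.
\end{itemize}
The guarded recursion constants have no arguments.

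The main obstacle is $\oplus$. Its rules synchronize one step of each component. A $\tau$-step in one component therefore cannot be matched by a weak transition without a patience rule, so $\oplus$ does not obviously fit the RBB safe format. I will try first to argue directly instead. Take $p_1\pbrb q_1$ and $p_2\pbrb q_2$. I will show that the relation
\[\{(\langle p_1\oplus p_2,s\rangle,\langle q_1\oplus q_2,s\rangle)\}\cup\pbrb\]
is a rooted branching pomset bisimulation. The root condition pairs an initial step $\mset{\alpha,\alpha'}$ of $p_1\oplus p_2$ with strong initial steps of $q_1$ and $q_2$, and the residuals are branching bisimilar. For those residuals, I will show that
\[\{(p'\oplus p'',q'\oplus q'') : p'\pbb q',\ p''\pbb q''\}\]
(together with $\pbb$) is a branching pomset bisimulation, using that a multiset step $\mset{\tau,\alpha'}$ can be simulated by combining $\tau^*$ prefixes.

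The cases for $\circ$, $+$ and recursion follow the standard RBB-safe argument. Guardedness of the linear recursive specifications is used to exclude $\tau$-loops, so the recursion constants are well defined.
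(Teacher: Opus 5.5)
The paper's proof is one sentence: it asserts that the whole PTSS is positive after reduction and in RBB cool format, and invokes the format theorem. You are right to distrust this for $\oplus$. The lock-step rules of Table \ref{TRForParallel} place the premise targets $x',y'$ into argument positions of $\oplus$ in the target $x'\oplus y'$. These formats allow that only in positions that have $\tau$-patience rules, and $\oplus$ has none.

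However, your direct repair fails at its key step, the claim that a step $\mset{\tau,\alpha'}$ ``can be simulated by combining $\tau^*$ prefixes''. Take visible actions $b,c,d$ and put $p'=\tau\circ b$, $q'=b$, $p''=q''=c\circ d$. Then $p'\pbb q'$ and $p''\pbb q''$, and $p'\oplus p''\xrightarrow{\mset{\tau,c}}b\oplus d$. But $q'\oplus q''=b\oplus(c\circ d)$ has no $\tau$-steps at all, and its only transition is $\mset{b,c}$ to $d$. The label $\mset{\tau,c}$ is not $\tau^*$. Even if you identify it with $c$, as soundness of $B3$/$B4$ tacitly requires, nothing on the right matches it. If instead $\tau$ were not allowed to synchronize, $p'\oplus p''$ would be stuck, which fails too. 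In lock-step, a $\tau$ of one component consumes a visible step of its partner and permanently shifts the alignment, so there is no $\tau$-prefix to absorb it.

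This is not just a hole in your argument. Prefixing with a visible action $a$ turns it into a counterexample to the statement itself, for $\oplus$ with the rules of Table \ref{TRForParallel}. We have $a\circ\tau\circ b\pbrb a\circ b$, since after the root step $a$ the residuals $\tau\circ b$ and $b$ are branching bisimilar. Yet after the common root step $\mset{a,a}$, the terms $(a\circ\tau\circ b)\oplus(a\circ c\circ d)$ and $(a\circ b)\oplus(a\circ c\circ d)$ reach exactly the pair above. So they are not rooted branching pomset bisimilar: up to $\tau$, one continues with $c$ then $\mset{b,d}$, the other with $\mset{b,c}$ then $d$.

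The paper's own axioms show the same defect. Using $A5$, $B1$, $P1$, $P3$, $B4$ and closure under context, they derive $(a\oplus a)\circ c\circ(b\oplus d)=(a\oplus a)\circ(b\oplus c)\circ d$, an equation between $\tau$-free terms that are not bisimilar. So neither your proof nor the paper's format appeal can go through for $\oplus$ as specified. One must either change its semantics, for instance so that $\tau$-steps of a component are taken alone via patience rules rather than in lock-step, or drop $\oplus$ from the claim.

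A smaller point: your wild/tame assignment for $\circ$ is reversed. In the RBB safe format, the premise target $x'$ in $x'\circ y$ forces the \emph{first} argument to be wild; its patience rule is the BAT rule with $U=\tau$. The second argument must be tame, because it has no patience rule.
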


\begin{proof}
Since the PTSS of APT with silent step and guarded linear recursion in Table \ref{TRForTauG} is positive after reduction and in RBB cool format, rooted branching pomset bisimulation equivalence $\pbrb$ is a congruence with respect to APT with with silent step and guarded linear recursion.
\end{proof}

\subsubsection{Axiomatic System}

We design the axioms for the silent step $\tau$ in Table \ref{AxiomsForTau}.

\begin{center}
    \begin{longtable}{|c|c|}
      \caption{Axioms of silent step}\\
      \hline No. &Axioms\\
      \hline
      \endfirsthead
      \multicolumn{2}{c}{\bfseries continuing: Axioms of silent step} \\
      \hline No. &Axioms\\
      \hline
      \endhead
      \hline
      \multicolumn{2}{r}{to be continued\ldots} \\
      \endfoot
      \hline
      \multicolumn{2}{r}{end of Axioms of silent step} \\
      \endlastfoot
      $B1$ & $U\circ\tau=U$\\
      $B2$ & $U\circ(\tau\circ(x+y)+x)=U\circ(x+y)$\\
      $B3$ & $x\oplus\tau=x$\\
      $B4$ & $\tau\oplus x=x$\\
      $G26$ & $\tau_{\phi}\cdot x=x$\\
      $G19$ & $x\cdot \tau_{\phi} = x$\\
      $G20$ & $x\parallel\tau_{\phi} = x$\\
      $G21$ & $\tau_{\phi}\parallel x = x$
      \label{AxiomsForTau}
   \end{longtable}
\end{center}

\subsubsection{Soundness and Completeness}

\begin{theorem}[Soundness of APT with silent step and guarded linear recursion]\label{SAPTTAU}
Let $x$ and $y$ be APT with silent step and guarded linear recursion terms. If APT with silent step and guarded linear recursion $\vdash x=y$, then $x\pbrb y$.
\end{theorem}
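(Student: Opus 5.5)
The plan follows the pattern of the earlier soundness theorems (Theorem \ref{SBAT}, Theorem \ref{SAPT}, Theorem \ref{SAPTR}). Rooted branching pomset bisimulation $\pbrb$ is an equivalence relation, and by the congruence theorem for APT with silent step and guarded linear recursion it is a congruence. Hence the equality relation generated by the axioms is contained in $\pbrb$ as soon as every closed instance of every axiom is. The axioms of BAT, APT, $RDP$ and $RSP$ are already sound modulo $\pb$ by the earlier soundness results. Since $\pb\subseteq\pbrb$ (a pomset bisimulation is in particular a rooted branching one, because strong transitions match strong transitions and $\pb\subseteq\pbb$), they remain sound here. The one thing to check is that those proofs did not rely on the absence of $\tau$. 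The transition rules of the earlier operators are unchanged, so this holds by the conservativity theorem. So it remains to check the new axioms of Table \ref{AxiomsForTau}.

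For each new axiom $s=t$, I will fix an arbitrary state $s_0$ and exhibit a relation $R$ containing $(\langle s,s_0\rangle,\langle t,s_0\rangle)$, together with the identity on all reachable configurations, and verify the four clauses of Definition \ref{RBPSBG}.

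\begin{itemize}
\item \emph{Axiom $B1$.} $\langle U\circ\tau,s_0\rangle\xrightarrow{U}\langle\tau,s'\rangle$ is matched by $\langle U,s_0\rangle\xrightarrow{U}\langle\surd,s'\rangle$. The residuals satisfy $\tau\pbb\surd$, because the only move of $\tau$ is a silent step to $\surd$ with $\tau(s')=s'$ on the external data, which is allowed by the first alternative of branching pomset bisimulation.
\item \emph{Axiom $B2$.} After the initial $U$, I relate $\tau\circ(x+y)+x$ with $x+y$. The $\tau$-step goes to $x+y$, which is related to $x+y$ itself. A move of $x+y$ that comes from $y$ is simulated by $\tau$ followed by that move, where the intermediate state $x+y$ is related to $x+y$, which is exactly the stuttering clause. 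Moves from $x$ are matched directly.
\item \emph{Axioms $B3$ and $B4$.} By the rules of Table \ref{TRForParallel}, $x\oplus\tau$ performs $\mset{\alpha,\tau}$ exactly when $x$ performs $\alpha$. Since $\tau$ is silent, I identify $\mset{\alpha,\tau}$ with $\alpha$ up to the weak transition $\xRightarrow{}$, and the residual is the residual of $x$ (or $\surd$ when $x$ terminates).
\item \emph{Guard axioms $G19$--$G21$ and $G26$.} $\tau_{\phi}$ only makes an unlabelled step without changing the state, so it can be absorbed by the stuttering clause.
\end{itemize}

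The main obstacle is the rootedness condition in Definition \ref{RBPSBG}. The initial moves must be matched strongly, not weakly, and axioms such as $G26$ ($\tau_\phi\cdot x=x$) and $B4$ ($\tau\oplus x=x$) place the silent element at the root. For these I will use the convention that $\tau_\phi$ steps are unlabelled $\rightarrow$ transitions that do not count as pomset transitions, which the paper treats the same way as $\mathbf{I}$. Likewise, the label $\mset{\alpha,\tau}$ is identified with $\alpha$ because $\tau$ does not change the external data environment. I would state these identifications explicitly first and then complete the case check.
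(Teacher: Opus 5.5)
Your reduction (equivalence plus the congruence theorem, then an axiom-by-axiom check) is the same as the paper's, which calls the check of Table~\ref{AxiomsForTau} trivial and omits it. Several of the justifications you add do not hold, however.

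\textbf{The $B3$/$B4$ case.} Clause~1 of Definition~\ref{RBPSBG} requires a strong match carrying the \emph{same} pomset $U$. Since $\mset{a,\tau}$ is not $a$, the step $a\oplus\tau\xrightarrow{\mset{a,\tau}}\surd$ has no match from $a$. ``Identifying'' the two labels changes the transition system, and the congruence theorem you rely on would then have to be re-proved.

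No convention of this kind can make the check go through. $P8$ with $x:=\tau$ and $B3$ with $x:=\mathbf{I}$ give $\tau=\mathbf{I}\oplus\tau=\mathbf{I}$; $B1$ with $U:=\mathbf{I}$ together with $A8$ gives the same. Then $A8$ and the context rule give $\tau\circ a+b=\mathbf{I}\circ a+b=a+b$ for distinct $a,b\in\mathsf{Act}$. But $\langle\tau\circ a+b,s\rangle\xrightarrow{\tau}\langle a,\tau(s)\rangle$ has no $\tau$-labelled match from $a+b$, so clause~1 fails; the two terms are not even branching bisimilar.

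So whatever behaviour you give $\mathbf{I}$ inside $\oplus$, one of two instances is unsound:
\begin{itemize}
\item $\mathbf{I}\oplus\tau=\tau$, which you count as inherited;
\item $\mathbf{I}\oplus\tau=\mathbf{I}$, where your description of $x\oplus\tau$ via a labelled move $\alpha$ of $x$ does not apply.
\end{itemize}
The statement as written is therefore false. Neither your case analysis nor the paper's omitted one can be completed until $B1$, $B3$, $B4$ are restricted (e.g.\ to exclude $\mathbf{I}$) and the $\oplus$-rules are changed to drop $\tau$ from step labels.

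Your $G26$ case has a smaller hole of the same kind. The rule for $\tau_{\phi}$ fires only if $test(\tau_{\phi},s)$, and where that test fails, $\tau_{\phi}\cdot x$ is stuck while $x$ is not.

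\textbf{The transfer of $RSP$.} Independently of the above, the inference ``sound modulo $\pb$ and $\pb\subseteq\pbrb$, hence sound modulo $\pbrb$'' is valid only for unconditional equations. $RSP$ is a conditional rule, and its premise $y_i\pbrb t_i(y_1,\ldots,y_n)$ is weaker than $y_i\pb t_i(y_1,\ldots,y_n)$. So its soundness modulo $\pb$ tells you nothing about soundness modulo $\pbrb$.

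Concretely, treat $\tau$ as an ordinary action. Then $X=\tau\circ X$ is guarded in the earlier ($\mathbf{I}$-only) sense and has a unique solution modulo $\pb$, yet every $\tau\circ y$ solves it modulo $\pbrb$. What must actually be shown is that specifications guarded in the sense of Definition~\ref{GLRS} have unique solutions modulo $\pbrb$. That argument has to use the absence of $\tau$-loops.

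Conservativity is also the wrong tool for this transfer. It concerns closed terms over the old signature, not axiom instances with $\tau$ substituted for variables.
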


\begin{proof}
Since rooted branching pomset bisimulation $\pbrb$ is both an equivalent and a congruent relation with respect to APT with silent step and guarded linear recursion, we only need to check if each axiom in Table \ref{AxiomsForTau} is sound modulo rooted branching pomset bisimulation equivalence. The proof is trivial and we omit it.
\end{proof}

\begin{theorem}[Completeness of APT with silent step and guarded linear recursion]\label{CAPTTAU}
Let $p$ and $q$ be closed APT with silent step and guarded linear recursion terms, then, if $p\pbrb q$ then $p=q$.
\end{theorem}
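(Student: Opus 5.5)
We follow the pattern of Theorem \ref{CAPTR}, now working modulo $\pbrb$ and with the $\tau$-axioms of Table \ref{AxiomsForTau} in place of plain pomset bisimilarity.

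\textbf{Step 1 (reduction to guarded linear recursion).} We first show that every closed term of APT with silent step and guarded linear recursion is provably equal to some $\langle X_1|E\rangle$ with $E$ a guarded linear recursive specification. Closed terms without recursion are handled by Theorem \ref{ETParallelism}: a basic term of the form $U_1\circ t_1+\cdots+U_k\circ t_k+V_1+\cdots+V_\ell$ is turned into a linear specification by introducing one recursion variable per subterm $t_i$ and applying $RDP$. The axioms $B3$, $B4$, $G19$--$G21$ and $G26$ remove $\tau$ and $\tau_\phi$ from parallel components and guard positions. Terms that already involve recursion are linear by assumption.

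\textbf{Step 2 (cleaning the specifications).} Given $\langle X_1|E_1\rangle\pbrb\langle Y_1|E_2\rangle$, we modify $E_1$ and $E_2$ as follows. Using $B1$ and $B2$ together with $RDP$/$RSP$, we remove every $\tau$-summand $\tau\circ X'$ with $\langle X|E\rangle\pbb\langle X'|E\rangle$ (an inert $\tau$). Guardedness of the specifications guarantees that this elimination terminates. The only $\tau$-steps that survive are those that are not inert, i.e.\ those that change the branching-bisimulation class.

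\textbf{Step 3 (product specification).} We build $E$ with variables $Z_{XY}$ for all pairs satisfying $\langle X|E_1\rangle\pbb\langle Y|E_2\rangle$. The right-hand side $t_{XY}$ has the following summands:
\begin{enumerate}
\item $U\circ Z_{X'Y'}$ whenever $t_X$ has summand $U\circ X'$, $t_Y$ has summand $U\circ Y'$, and the two targets are branching bisimilar;
\item $\tau\circ Z_{X'Y}$ for a non-inert $\tau$-summand of $t_X$ that is matched by $Y$ idling, and symmetrically $\tau\circ Z_{XY'}$ for $t_Y$;
\item terminating summands $V$ shared by $t_X$ and $t_Y$.
\end{enumerate}
At the root pair $(X_1,Y_1)$ only clause 1 is allowed, because $\pbrb$ requires the initial moves to be matched exactly (Definition \ref{RBPSBG}). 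We must also check that $E$ is guarded, which follows because $E_1$ and $E_2$ have no infinite $\tau$-paths.

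\textbf{Step 4 (identification via $RSP$).} Let $\psi$ map $Z_{XY}$ to $\langle X|E_1\rangle$. We then show that $\psi(t_{XY})=\langle X|E_1\rangle$ using $RDP$ together with $B1$/$B2$. Summands of $t_X$ that $E$ omits are absorbed: any $U\circ X'$ of $t_X$ that is matched in $Y$ only after a preceding $\tau$ gets absorbed by $B2$, in the form $U\circ(\tau\circ(x+y)+x)=U\circ(x+y)$. $RSP$ then yields $\langle X|E_1\rangle=\langle Z_{XY}|E\rangle$. The same argument with $\psi'$ mapping $Z_{XY}$ to $\langle Y|E_2\rangle$ gives $\langle Y|E_2\rangle=\langle Z_{XY}|E\rangle$, so $p=q$.

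\textbf{Main obstacle.} We expect the hardest part to be Step 4: verifying that $\psi(t_{XY})$ really equals $\langle X|E_1\rangle$ when the two sides branch differently around $\tau$-steps. This is exactly where $B2$ has to be applied, and it needs an induction on the length of the $\tau$-paths, which is finite by guardedness. A further difficulty is that $B1$ and $B2$ are stated only under a pomset prefix $U$, so the root case has to be treated separately using the rootedness condition.
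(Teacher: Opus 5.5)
\textbf{There is a genuine gap: Step 2 is unjustified, and Step 4 only works if Step 2 succeeds.} Taken literally, Step 2 is unsound. For $X=\tau\circ X'+a$, $X'=a+b$ you have $\langle X|E\rangle\pbb\langle X'|E\rangle$, yet deleting $\tau\circ X'$ leaves $X=a$ and loses $b$. What you actually need is to merge $X$ into $X'$, i.e.\ to derive $U\circ\langle X|E\rangle=U\circ\langle X'|E\rangle$. Via $B2$ this requires $\langle X'|E\rangle=\langle X'|E\rangle+r$ for the other summands $r$ of $t_X$. That in turn requires $b\circ\langle Y|E\rangle=b\circ\langle Y'|E\rangle$ for branching bisimilar successors $Y,Y'$, which is an instance of the theorem itself, not a smaller one. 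For example, with $X=\tau\circ X'+a\circ X$ and $X'=a\circ X'$, removing the $\tau$ amounts to proving $a\circ\langle X|E\rangle=a\circ\langle X'|E\rangle$. Guardedness makes $\tau$-chains finite but does not break this circularity; only a single global $RSP$ argument resolves it, and that is what the paper's construction supplies. Your Steps 2 and 3 also do not fit together. A $\tau$-step matched by the other side idling is inert by the definition of branching bisimulation, so after a successful Step 2 your second clause would be vacuous.

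If inert $\tau$'s remain, Step 4 fails with $\psi(Z_{XY})=\langle X|E_1\rangle$. When $t_Y$ has an inert summand $\tau\circ Y'$ matched by $X$ idling, $t_{XY}$ gets $\tau\circ Z_{XY'}$. Then $\psi(t_{XY})$ contains $\tau\circ\langle X|E_1\rangle$ and lacks the summands of $t_X$ that $Y$ answers only after that $\tau$. For $t_X=a$, $t_Y=\tau\circ Y'$, $t_{Y'}=a$ you get $\psi(t_{XY})=\tau\circ a$, which is not $\pbrb$-equal to $a$, so $RSP$ cannot be applied. Those summands are not ``absorbed by $B2$''; in $E$ they are reached through $\tau\circ Z_{XY'}$.

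The missing idea is the paper's choice of solution. Let $u_{XY}$ be obtained from $t_{XY}$ by dropping the summands $\tau\circ Z_{XY'}$ contributed by $E_2$ and replacing each remaining $Z_{X'Y''}$ by $\langle X'|E_1\rangle$. Then let $\psi$ map $Z_{XY}$ to $s_{XY}=\tau\circ\langle X|E_1\rangle+u_{XY}$ if $XY$ is not the root pair and $t_Y$ has a summand $\tau\circ Y'$ with $\langle X|E_1\rangle\pbb\langle Y'|E_2\rangle$, and to $\langle X|E_1\rangle$ otherwise. Since $\langle X|E_1\rangle=\langle X|E_1\rangle+u_{XY}$ by $RDP$ and $A3$, $B2$ gives $U\circ s_{XY}=U\circ\langle X|E_1\rangle$. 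Hence every $Z$ under a prefix collapses correctly and $\psi(t_{XY})=s_{XY}$, while $s_{X_1Y_1}=\langle X_1|E_1\rangle$ because the $\tau$-clauses are excluded at the root.

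You also omit the paper's preliminary removal of equations $W=\tau+\cdots+\tau$, which replaces $U\circ W$ by $U$ via $RDP$, $A3$ and $B1$. Without it, e.g.\ $X_1=a\circ X$, $X=\tau$ against $Y_1=a$ yields an empty $t_{X_1Y_1}$ in your product.
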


\begin{proof}
Firstly, we know that each process term in APT with silent step and guarded linear recursion is equal to a process term $\langle X_1|E\rangle$ with $E$ a guarded linear recursive specification.

The recursive equation $W=\tau+\cdots+\tau$ with $W\nequiv X_1$ in $E_1$ and $E_2$, can be removed, and the corresponding summands $aW$ are replaced by $\alpha$, to get $E_1'$ and $E_2'$, by use of the axioms $RDP$, $A3$ and $B1$, and $\langle X|E_1\rangle = \langle X|E_1'\rangle$, $\langle Y|E_2\rangle = \langle Y|E_2'\rangle$.

Let $E_1$ consists of recursive equations $X=t_X$ for $X\in \mathcal{X}$ and $E_2$
consists of recursion equations $Y=t_Y$ for $Y\in\mathcal{Y}$, and are not the form $\tau+\cdots+\tau$. Let the guarded linear recursive specification $E$ consists of recursion equations $Z_{XY}=t_{XY}$, and $\langle X|E_1\rangle\approx_{rbs}\langle Y|E_2\rangle$, and $t_{XY}$ consists of the following summands:

\begin{enumerate}
  \item $t_{XY}$ contains a summand $(\alpha_1\oplus\cdots\oplus \alpha_m)Z_{X'Y'}$ iff $t_X$ contains the summand $(\alpha_1\oplus\cdots\oplus \alpha_m)X'$ and $t_Y$ contains the summand $(\alpha_1\oplus\cdots\oplus \alpha_m)Y'$ such that $\langle X'|E_1\rangle\approx_{rbs}\langle Y'|E_2\rangle$;
  \item $t_{XY}$ contains a summand $\beta_1\oplus\cdots\oplus \beta_n$ iff $t_X$ contains the summand $\beta_1\oplus\cdots\oplus \beta_n$ and $t_Y$ contains the summand $\beta_1\oplus\cdots\oplus \beta_n$;
  \item $t_{XY}$ contains a summand $\tau Z_{X'Y}$ iff $XY\nequiv X_1Y_1$, $t_X$ contains the summand $\tau X'$, and $\langle X'|E_1\rangle\approx_{rbs}\langle Y|E_2\rangle$;
  \item $t_{XY}$ contains a summand $\tau Z_{XY'}$ iff $XY\nequiv X_1Y_1$, $t_Y$ contains the summand $\tau Y'$, and $\langle X|E_1\rangle\approx_{rbs}\langle Y'|E_2\rangle$.
\end{enumerate}

Since $E_1$ and $E_2$ are guarded, $E$ is guarded. Constructing the process term $u_{XY}$ consist of the following summands:

\begin{enumerate}
  \item $u_{XY}$ contains a summand $(\alpha_1\oplus\cdots\oplus \alpha_m)\langle X'|E_1\rangle$ iff $t_X$ contains the summand $(\alpha_1\oplus\cdots\oplus \alpha_m)X'$ and $t_Y$ contains the summand $(\alpha_1\oplus\cdots\oplus \alpha_m)Y'$ such that $\langle X'|E_1\rangle\approx_{rbs}\langle Y'|E_2\rangle$;
  \item $u_{XY}$ contains a summand $\beta_1\oplus\cdots\oplus \beta_n$ iff $t_X$ contains the summand $\beta_1\oplus\cdots\oplus \beta_n$ and $t_Y$ contains the summand $\beta_1\oplus\cdots\oplus \beta_n$;
  \item $u_{XY}$ contains a summand $\tau \langle X'|E_1\rangle$ iff $XY\nequiv X_1Y_1$, $t_X$ contains the summand $\tau X'$, and $\langle X'|E_1\rangle\approx_{rbs}\langle Y|E_2\rangle$.
\end{enumerate}

Let the process term $s_{XY}$ be defined as follows:

\begin{enumerate}
  \item $s_{XY}\triangleq\tau\langle X|E_1\rangle + u_{XY}$ iff $XY\nequiv X_1Y_1$, $t_Y$ contains the summand $\tau Y'$, and $\langle X|E_1\rangle\approx_{rbs}\langle Y'|E_2\rangle$;
  \item $s_{XY}\triangleq\langle X|E_1\rangle$, otherwise.
\end{enumerate}

So, $\langle X|E_1\rangle=\langle X|E_1\rangle+u_{XY}$, and $(\alpha_1\oplus\cdots\oplus \alpha_m)(\tau\langle X|E_1\rangle+u_{XY})=(\alpha_1\oplus\cdots\oplus \alpha_m)((\tau\langle X|E_1\rangle+u_{XY})+u_{XY})=(\alpha_1\oplus\cdots\oplus \alpha_m)(\langle X|E_1\rangle+u_{XY})=(\alpha_1\oplus\cdots\oplus \alpha_m)\langle X|E_1\rangle$, hence, $(\alpha_1\oplus\cdots\oplus \alpha_m)s_{XY}=(\alpha_1\oplus\cdots\oplus \alpha_m)\langle X|E_1\rangle$.

Let $\sigma$ map recursion variable $X$ in $E_1$ to $\langle X|E_1\rangle$, and let $\psi$ map recursion variable $Z_{XY}$ in $E$ to $s_{XY}$. It is sufficient to prove $s_{XY}=\psi(t_{XY})$ for recursion variables $Z_{XY}$ in $E$. Either $XY\equiv X_1Y_1$ or $XY\nequiv X_1Y_1$, we all can get $s_{XY}=\psi(t_{XY})$. So, $s_{XY}=\langle Z_{XY}|E\rangle$ for recursive variables $Z_{XY}$ in $E$ is a solution for $E$. Then by $RSP$, particularly, $\langle X_1|E_1\rangle=\langle Z_{X_1Y_1}|E\rangle$. Similarly, we can obtain $\langle Y_1|E_2\rangle=\langle Z_{X_1Y_1}|E\rangle$. Finally, $\langle X_1|E_1\rangle=\langle Z_{X_1Y_1}|E\rangle=\langle Y_1|E_2\rangle$, as desired.
\end{proof}

\subsection{Abstraction}

The unary abstraction operator $\tau_I$ ($I\subseteq \mathsf{Act}$) renames all atomic events in $I$ into $\tau$. APT with silent step and abstraction operator is called $APT_{\tau}$. 

\subsubsection{Operational Semantics}

The transition rules of operator $\tau_I$ are shown in Table \ref{TRForAbstraction}.

\begin{center}
    \begin{table}
        \[\frac{\langle x,s\rangle\xrightarrow{\alpha}\langle\surd,s'\rangle}{\langle\tau_I(x),s\rangle\xrightarrow{\alpha}\langle\surd,s'\rangle}\quad \alpha\notin I
        \quad\quad\frac{\langle x,s\rangle\xrightarrow{\alpha}\langle x',s'\rangle}{\langle\tau_I(x),s\rangle\xrightarrow{\alpha}\langle\tau_I(x'),s'\rangle}\quad \alpha\notin I\]

        \[\frac{\langle x,s\rangle\xrightarrow{\alpha}\langle\surd,s'\rangle}{\langle\tau_I(x),s\rangle\xrightarrow{\tau}\langle\surd,\tau(s)\rangle}\quad \alpha\in I
        \quad\quad\frac{\langle x,s\rangle\xrightarrow{\alpha}\langle x',s'\rangle}{\langle\tau_I(x),s\rangle\xrightarrow{\tau}\langle\tau_I(x'),\tau(s)\rangle}\quad \alpha\in I\]
        \caption{Transition rule of the abstraction operator}
        \label{TRForAbstraction}
    \end{table}
\end{center}

\begin{theorem}[Conservativity of $APT_{\tau}$ with guarded linear recursion]
$APT_{\tau}$ with guarded linear recursion is a conservative extension of APT with silent step and guarded linear recursion.
\end{theorem}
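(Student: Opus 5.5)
The plan is to invoke the Operational conservative extension theorem from \cref{plts}. Let $T_0$ be the PTSS of APT with silent step and guarded linear recursion: the rules of Table \ref{TRForBAT}, Table \ref{TRForParallel}, Table \ref{TRForGR} and Table \ref{TRForTauG}. Let $T_1$ be the PTSS consisting of the four rules for $\tau_I$ in Table \ref{TRForAbstraction}. The signature $\Sigma_1\setminus\Sigma_0$ then contains exactly the unary operators $\tau_I$ for $I\subseteq\mathsf{Act}$. The statement follows once the two hypotheses of that theorem are checked for $T_0\oplus T_1$.

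\textbf{Step 1: every rule of $T_0$ is source-dependent.} This is already used in the earlier conservativity results: the theorems for APT with guarded recursion and for APT with silent step and guarded linear recursion rely on exactly this fact. I would briefly recheck it rule by rule.
\begin{itemize}
\item The axiom-like rules for $\mathbf{I}$, $U$, $\phi$, $\tau$ and $\tau_\phi$ contain no variables.
\item In the rules for $+$, $\circ$ and $\oplus$, the premises have sources $x$ or $y$, which occur in the source. So the targets $x'$ and $y'$ become source-dependent through the inductive clause.
\item The recursion rules have a closed source $\langle X_i|E\rangle$. Their premise source $t_i(\langle X_1|E\rangle,\cdots)$ is closed, so the target variable $y$ is source-dependent.
\end{itemize}
The state components $s,s'$ are parameters of configurations, not term variables, so they do not affect the argument.

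\textbf{Step 2: every rule of $T_1$ has a fresh source.} Each of the four rules in Table \ref{TRForAbstraction} has source $\tau_I(x)$, which contains the fresh symbol $\tau_I$. So the first alternative of hypothesis 2 of the theorem holds immediately, and the premise conditions never need to be examined.

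Combining the two steps, the Operational conservative extension theorem yields that $T_0\oplus T_1$ is an operational conservative extension of $T_0$. This is the claim.

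The only point needing care, and the one I expect to be the main obstacle, is that the guarded recursion rules have premises involving substituted right-hand sides. Source-dependency there must be read with $\langle X_i|E\rangle$ treated as a constant. Under that reading the premise source is closed and the check goes through, as in the earlier conservativity proof for guarded recursion. Everything else is a direct syntactic inspection.
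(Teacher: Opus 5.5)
Your proposal is correct and follows the same route as the paper. Both apply the operational conservative extension theorem: the rules of APT with silent step and guarded linear recursion are source-dependent, and every abstraction rule has the fresh operator $\tau_I$ in its source. Your version just spells out the rule-by-rule source-dependency check, including the recursion rules with closed source $\langle X_i|E\rangle$, which the paper leaves implicit.
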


\begin{proof}
Since the transition rules of APT with silent step and guarded linear recursion are source-dependent, and the transition rules for abstraction operator in Table \ref{TRForAbstraction} contain only a fresh operator $\tau_I$ in their source, so the transition rules of $APT_{\tau}$ with guarded linear recursion is a conservative extension of those of APT with silent step and guarded linear recursion.
\end{proof}

\begin{theorem}[Congruence theorem of $APT_{\tau}$ with guarded linear recursion]
Rooted branching pomset bisimulation equivalence $\pbrb$ is a congruence with respect to $APT_{\tau}$ with guarded linear recursion.
\end{theorem}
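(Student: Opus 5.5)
The plan follows the pattern of the earlier congruence theorems in the paper: place the combined transition system specification of $APT_{\tau}$ with guarded linear recursion into a rule format for which congruence of rooted branching bisimulation is already known, and then transfer that result to the pomset setting. The combined specification consists of Tables \ref{TRForBAT}, \ref{TRForParallel}, \ref{TRForGR}, \ref{TRForTauG} and \ref{TRForAbstraction}.

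\textbf{Step 1: the specification is positive.} None of the rules in these tables has a negative premise. The side conditions $\alpha\in I$ and $\alpha\notin I$ in Table \ref{TRForAbstraction} are conditions on labels only, not premises. So the specification is positive, and after reduction it is trivially positive after reduction, as in the earlier proofs.

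\textbf{Step 2: the specification is in RBB cool format.} The congruence theorem for APT with silent step and guarded linear recursion already gives this for all operators except $\tau_I$. So only the four new rules need checking.
\begin{itemize}
\item Each $\tau_I$ rule has source $\tau_I(x)$ and a single premise on $x$. Its target is either $\surd$ or $\tau_I(x')$, so every variable is source-dependent.
\item The argument of $\tau_I$ is \emph{active}: it is tested in a premise. By the patience requirement of the format, $\tau_I$ must then have a patience rule
\[\frac{\langle x,s\rangle\xrightarrow{\tau}\langle x',s'\rangle}{\langle\tau_I(x),s\rangle\xrightarrow{\tau}\langle\tau_I(x'),s'\rangle}.\]
This rule is the instance $\alpha=\tau$ of the second rule of Table \ref{TRForAbstraction}, which applies because $\tau\notin I$ (we have $I\subseteq\mathsf{Act}$).
\item No variable occurs more than once in a left-hand side, and every active argument in a premise is a variable. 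So the rules are also in panth/tyft-like form.
\end{itemize}
From Step 1 and Step 2, the general congruence result for the RBB cool format (Fokkink) gives that rooted branching bisimulation is a congruence.

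\textbf{Step 3: transfer to pomsets.} In all the rules, pomset labels $U$ are passed unchanged from premise to conclusion. The one exception is that $\tau_I$ renames atomic labels in $I$ to $\tau$, and this renaming is applied elementwise. For a pomset label $U$ it therefore yields $\tau_I(U)$, with the causal order preserved.

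Given $p\pbrb q$ via a relation $R$, I would take as witness
\[\{(\langle\tau_I(p'),s\rangle,\langle\tau_I(q'),s\rangle) : (\langle p',s\rangle,\langle q',s\rangle)\in R\},\]
together with the analogous lifting of the branching pomset bisimulation used for the successors in the root clauses. Then I would check the four clauses of Definition \ref{RBPSBG}:
\begin{itemize}
\item A $U$-transition of $\tau_I(p)$ is matched by a $U$-transition of $\tau_I(q)$ obtained from the corresponding match in $R$, since the renaming acts identically on both sides.
\item A step $\langle p',s\rangle\xrightarrow{\tau}\langle p'',\tau(s)\rangle$ of $p'$ remains a $\tau$ step of $\tau_I(p')$.
\item An action in $I$ that has become $\tau$ is handled by the first alternative of the branching clause, $U\equiv\tau^*$, with state $\tau(s)$.
\item The termination predicate $\downarrow$ is preserved because $\tau_I(\surd)$ reduces to $\surd$.
\end{itemize}

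\textbf{Main obstacle.} The delicate case is the second bullet in Step 3. A visible step of $p$ (in $I$) may be matched in $q$ by a sequence of $\tau$ steps followed by the same visible step. After abstraction, both sides become pure $\tau$ sequences. I must check that the intermediate state pairs remain related; the stuttering property of branching bisimulation should guarantee this. I must also check that the state component $\tau(s)$ agrees on both sides, which holds because $\tau$ changes only the internal data environment.

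For guarded linear recursion I rely on the earlier congruence theorem. Abstraction does not create $\tau$-loops inside a recursive specification, since $\tau_I$ is applied outside $\langle X|E\rangle$, so guardedness is not disturbed.
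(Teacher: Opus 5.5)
Your Steps 1--2 are exactly the paper's proof. That proof consists solely of the assertion that the PTSS is positive after reduction and in RBB cool format; the paper never attempts your Step 3.

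\textbf{The main gap, which you inherit from the paper.} It is in Step 2. You impose the patience requirement on $\tau_I$, but you delegate $\oplus$ to the earlier congruence theorem, and for $\oplus$ that requirement fails. Every rule in Table \ref{TRForParallel} makes both arguments move at once, and the successors $x'$, $y'$ reappear under $\oplus$ in the target. Yet there is no rule of the form $\frac{x\xrightarrow{\tau}x'}{x\oplus y\xrightarrow{\tau}x'\oplus y}$.

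This has semantic consequences. We have $a\cdot b\pbrb a\cdot(\tau\cdot b)$, since it is derivable by A5 and B1. After the common root step $\mset{a,c}$, the term $(a\cdot b)\oplus(c\cdot d)$ reaches $b\oplus d$, which performs $\mset{b,d}$ in one step. The term $(a\cdot(\tau\cdot b))\oplus(c\cdot d)$ instead reaches $(\tau\cdot b)\oplus d$, whose only transition is $\mset{\tau,d}$ to $b$. These two are not branching pomset bisimilar. So under the rules as given, the statement itself fails for $\oplus$. It can only be repaired by changing the semantics of $\oplus$ (at least by adding patience rules) or by excluding $\oplus$ from the claim.

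\textbf{The third bullet of Step 3 would also fail.} The root clauses of Definition \ref{RBPSBG} admit no $\tau^*$ alternative. Even in the unrooted part, an $I$-step renamed to $\tau$ cannot in general be absorbed by the stuttering alternative. Take $a\in I$ and distinct $b,c\notin I$: then $\tau_I(a\cdot b+c)\xrightarrow{\tau}\tau_I(b)$, yet $\tau_I(b)$ is not branching bisimilar to $\tau_I(a\cdot b+c)$, since only the latter can perform $c$.

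The correct case is the one in your ``main obstacle'' paragraph. Since $a\neq\tau$, the original step was matched by the second alternative $q'\xrightarrow{\tau^*}q^0\xRightarrow{a}q''$, with $(p',q^0)$ and $(p'',q'')$ related. Its image under $\tau_I$ then matches the new $\tau$-step by the second alternative. At the root, the strict $a$-match simply becomes a strict $\tau$-match. No stuttering lemma is needed, because the definition supplies the pair $(p',q^0)$ directly.

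Finally, your lifted relation does not contain the successor pairs as defined. After an abstracted step both sides are in state $\tau(s)$, whereas $R$ relates $p''$ and $q''$ at $s'=effect(a,s)$. The lifting must therefore be closed under that change of state, and the remark that $\tau$ only alters internal data does not provide this.
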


\begin{proof}
Since the PTSS of $APT_{\tau}$ with guarded linear recursion in Table \ref{TRForAbstraction} is positive after reduction and in RBB cool format, rooted branching pomset bisimulation equivalence $\pbrb$ is a congruence with respect to $APT_{\tau}$ with guarded linear recursion.
\end{proof}

\subsubsection{Axiomatic System}

We design the axioms for the abstraction operator $\tau_I$ in Table \ref{AxiomsForAbstraction}.

\begin{center}
    \begin{longtable}{|c|c|}
      \caption{Axioms of abstraction}\\
      \hline No. &Axioms\\
      \hline
      \endfirsthead
      \multicolumn{2}{c}{\bfseries continuing: Axioms of abstraction} \\
      \hline No. &Axioms\\
      \hline
      \endhead
      \hline
      \multicolumn{2}{r}{to be continued\ldots} \\
      \endfoot
      \hline
      \multicolumn{2}{r}{end of Axioms of abstraction} \\
      \endlastfoot
      $TI1$ & $\alpha\notin I\quad \tau_I(\alpha)=\alpha$\\
      $TI2$ & $\alpha\in I\quad \tau_I(\alpha)=\tau$\\
      $TI3$ & $\tau_I(\mathbf{0})=\mathbf{0}$\\
      $TI4$ & $\tau_I(\mathbf{I})=\mathbf{I}$\\
      $TI5$ & $\tau_I(x+y)=\tau_I(x)+\tau_I(y)$\\
      $TI6$ & $\tau_I(x\circ y)=\tau_I(x)\circ\tau_I(y)$\\
      $TI7$ & $\tau_I(x\oplus y)=\tau_I(x)\oplus\tau_I(y)$\\
      $G22$ & $\phi\notin I\quad \tau_I(\phi)=\phi$\\
      $G23$ & $\phi\in I\quad \tau_I(\phi)=\tau_{\phi}$
      \label{AxiomsForAbstraction}
   \end{longtable}
\end{center}

\subsubsection{Soundness and Completeness}

\begin{theorem}[Soundness of $APT_{\tau}$ with guarded linear recursion]\label{SAPTABS}
Let $x$ and $y$ be $APT_{\tau}$ with guarded linear recursion terms. If $APT_{\tau}$ with guarded linear recursion $\vdash x=y$, then $x\pbrb y$.
\end{theorem}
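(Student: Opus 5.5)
The plan follows the pattern of the earlier soundness results (Theorems \ref{SBAT}, \ref{SAPT}, \ref{SAPTR}, \ref{SAPTTAU}). By the congruence theorem of $APT_{\tau}$ with guarded linear recursion, $\pbrb$ is a congruence with respect to all operators, including $\tau_I$. It is also an equivalence relation: reflexivity and symmetry are immediate, and transitivity follows by composing relations, using that $\pbb$ is transitive. Since derivability in the equational logic is the closure of axiom instances under substitution, equivalence and context, it then suffices to check each axiom of Table \ref{AxiomsForAbstraction} separately. For each axiom, and for every closed substitution instance $s=t$, I will exhibit a rooted branching pomset bisimulation relating $\langle s,\sigma\rangle$ and $\langle t,\sigma\rangle$ for every data state $\sigma$. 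The axioms of the earlier tables are already covered by Theorem \ref{SAPTTAU} and by conservativity.

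The constant axioms $TI1$–$TI4$, $G22$ and $G23$ are direct comparisons of the rules in Table \ref{TRForAbstraction} with those in Tables \ref{TRForBAT} and \ref{TRForTauG}.
\begin{itemize}
\item For $\alpha\notin I$, $\tau_I(\alpha)$ has exactly the transition $\xrightarrow{\alpha}\langle\surd,s'\rangle$ with $s'\in effect(\alpha,s)$, matching $\alpha$.
\item For $\alpha\in I$, it has exactly $\xrightarrow{\tau}\langle\surd,\tau(s)\rangle$, matching $\tau$.
\item $\tau_I(\mathbf{0})$ has no transitions; $\tau_I(\mathbf{I})$ and the guard cases behave like their right-hand sides, using $test(\tau_\phi,s)$ for $G23$.
\end{itemize}
In each case the single pair plus $\{(\surd,\surd)\}$ is a rooted bisimulation, since the root steps are matched by identical labels.

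For the structural axioms $TI5$–$TI7$, I will take
\[R=\{(\langle\tau_I(p\,\mathrm{op}\,q),s\rangle,\langle\tau_I(p)\,\mathrm{op}\,\tau_I(q),s\rangle)\}\cup\mathrm{Id}\]
for $\mathrm{op}\in\{+,\circ,\oplus\}$, closed under the residual pairs that arise. Then I proceed by case analysis on the last rule deriving a transition.
\begin{itemize}
\item For $+$, a transition of $\tau_I(p+q)$ comes from a transition of $p$ or of $q$, relabelled by $\tau_I$. This is exactly a transition of $\tau_I(p)+\tau_I(q)$, with residuals $\tau_I(p')$ on both sides.
\item For $\circ$, the residual of $\tau_I(p\circ q)$ is $\tau_I(p'\circ q)$ or $\tau_I(q)$. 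The residual of the right-hand side is $\tau_I(p')\circ\tau_I(q)$ or $\tau_I(q)$. Hence $R$ must contain all pairs $(\tau_I(x\circ y),\tau_I(x)\circ\tau_I(y))$, which makes it an ordinary strong pomset bisimulation and therefore a fortiori rooted branching.
\item For $\oplus$, the same applies, relabelling each element of the multiset $\mset{\alpha,\alpha'}$ pointwise.
\end{itemize}

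The main obstacle is the $\oplus$ case. When some but not all components of a step $\mset{\alpha,\alpha'}$ lie in $I$, the rules of Table \ref{TRForAbstraction} are stated only for single actions. I would read them as acting elementwise on pomsets: each $\alpha\in I$ is renamed to $\tau$, giving labels $\mset{\tau,\alpha'}$. Then both sides produce the identical multiset label, and the data effect $\tau(s)$ leaves the external state unchanged. Where a renamed step must be compared with one whose $\tau$ components are absorbed, I would appeal to $B3$/$B4$ (already sound by Theorem \ref{SAPTTAU}) and to the $U\equiv\tau^*$ clause of branching pomset bisimulation. The root condition is met because the initial steps of both sides carry the same labels. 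With this convention fixed, the remaining verifications are routine, and the theorem follows.
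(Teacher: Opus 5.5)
Your plan follows the paper's own argument. The paper likewise uses that $\pbrb$ is an equivalence and a congruence for $APT_{\tau}$ with guarded linear recursion to reduce soundness to checking each axiom of Table~\ref{AxiomsForAbstraction}, and then omits those checks as trivial; your strong bisimulations for $TI5$--$TI7$ and your elementwise relabelling of multiset steps (the convention behind the hatted actions in the proof of Theorem~\ref{CCFAR}) supply what the paper leaves out.

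The one check that both your sketch and the paper pass over too quickly is $G23$. By Table~\ref{TRForAbstraction}, $\tau_I(\phi)$ can move only when $test(\phi,s)$ holds. However, $G26$ with $A9$ (or $G19$ with $A8$), which you import as sound, already yields $\tau_\phi=\mathbf{I}$. So the full theory proves $\tau_I(\phi)=\mathbf{I}$, which fails at any data state where the hidden guard $\phi$ is false. This is a defect in the paper's semantics for hidden guards that your route inherits rather than introduces.
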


\begin{proof}
Since rooted branching pomset bisimulation $\pbrb$ is both an equivalent and a congruent relation with respect to $APT_{\tau}$ with guarded linear recursion, we only need to check if each axiom in Table \ref{AxiomsForAbstraction} is sound modulo rooted branching pomset bisimulation equivalence. The proof is trivial and we omit it.
\end{proof}

Though $\tau$-loops are prohibited in guarded linear recursive specifications (see Definition \ref{GLRS}) in a specifiable way, they can be constructed using the abstraction operator, for example, there exist $\tau$-loops in the process term $\tau_{\{\alpha\}}(\langle X|X=aX\rangle)$. To avoid $\tau$-loops caused by $\tau_I$ and ensure fairness, the concept of cluster and $CFAR$ (Cluster Fair Abstraction Rule) \cite{CFAR} are still valid, we introduce them below.

\begin{definition}[Cluster]\label{CLUSTER}
Let $E$ be a guarded linear recursive specification, and $I\subseteq \mathsf{Act}$. Two recursion variable $X$ and $Y$ in $E$ are in the same cluster for $I$ iff there exist sequences of transitions $\langle X|E\rangle\xrightarrow{\mset{\beta_{11},\cdots, \beta_{1i}}}\cdots\xrightarrow{\mset{\beta_{m1},\cdots, \beta_{mi}}}\langle Y|E\rangle$ and $\langle Y|E\rangle\xrightarrow{\mset{\gamma_{11},\cdots, \gamma_{1j}}}\cdots\xrightarrow{\mset{\gamma_{n1},\cdots, \gamma_{nj}}}\langle X|E\rangle$, where $\beta_{11},\cdots,\beta_{mi},\gamma_{11},\cdots,\gamma_{nj}\in I\cup\{\tau\}$.

$\alpha_1\oplus\cdots\oplus \alpha_k$ or $(\alpha_1\oplus\cdots\oplus \alpha_k) X$ is an exit for the cluster $C$ iff: (1) $\alpha_1\oplus\cdots\oplus \alpha_k$ or $(\alpha_1\oplus\cdots\oplus \alpha_k) X$ is a summand at the right-hand side of the recursive equation for a recursion variable in $C$, and (2) in the case of $(\alpha_1\oplus\cdots\oplus \alpha_k) X$, either $\alpha_{\ell}\notin I\cup\{\tau\}({\ell}\in\{1,2,\cdots,k\})$ or $X\notin C$.
\end{definition}

\begin{center}
    \begin{longtable}{|c|c|}
      \caption{Cluster fair abstraction rule}\\
      \hline No. &Axioms\\
      \hline
      \endfirsthead
      \multicolumn{2}{c}{\bfseries continuing: Cluster fair abstraction rule} \\
      \hline No. &Axioms\\
      \hline
      \endhead
      \hline
      \multicolumn{2}{r}{to be continued\ldots} \\
      \endfoot
      \hline
      \multicolumn{2}{r}{end of Cluster fair abstraction rule} \\
      \endlastfoot
      $CFAR$ & If $X$ is in a cluster for $I$ with exits \\
           & $\{(\alpha_{11}\oplus\cdots\oplus \alpha_{1i})Y_1,\cdots,(\alpha_{m1}\oplus\cdots\oplus \alpha_{mi})Y_m, \beta_{11}\oplus\cdots\oplus \beta_{1j},\cdots,\beta_{n1}\oplus\cdots\oplus \beta_{nj}\}$, \\
           & then $\tau\circ\tau_I(\langle X|E\rangle)=$\\
           & $\tau\circ\tau_I((\alpha_{11}\oplus\cdots\oplus \alpha_{1i})\langle Y_1|E\rangle+\cdots+(\alpha_{m1}\oplus\cdots\oplus \alpha_{mi})\langle Y_m|E\rangle+\beta_{11}\oplus\cdots\oplus \beta_{1j}+\cdots+\beta_{n1}\oplus\cdots\oplus \beta_{nj})$
      \label{CFAR}
   \end{longtable}
\end{center}

\begin{theorem}[Soundness of $CFAR$]\label{SCFAR}
$CFAR$ is sound modulo rooted branching pomset bisimulation equivalence $\pbrb$.
\end{theorem}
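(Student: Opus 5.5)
Since $\pbrb$ is an equivalence and a congruence for $APT_{\tau}$ with guarded linear recursion, it suffices to prove that the two sides of $CFAR$ are rooted branching pomset bisimilar. Write $L\equiv\tau\circ\tau_I(\langle X|E\rangle)$ and $R\equiv\tau\circ\tau_I(t)$, where $t$ is the sum of the exits $(\alpha_{k1}\oplus\cdots\oplus\alpha_{ki})\langle Y_k|E\rangle$ and $\beta_{l1}\oplus\cdots\oplus\beta_{lj}$ of the cluster $C$ containing $X$. Both sides start with the same $\tau$-transition $\langle L,s\rangle\xrightarrow{\tau}\langle\tau_I(\langle X|E\rangle),\tau(s)\rangle$ and $\langle R,s\rangle\xrightarrow{\tau}\langle\tau_I(t),\tau(s)\rangle$. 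By Definition \ref{RBPSBG}, the rooted condition therefore reduces to showing
\[\tau_I(\langle X|E\rangle)\pbb\tau_I(t).\]
The termination clauses are vacuous at the root, since neither side is $\downarrow$.

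For the branching part I will build the relation $\mathcal{R}$ consisting of the pairs $(\langle\tau_I(\langle Z|E\rangle),s\rangle,\langle\tau_I(t),s\rangle)$ for every $Z\in C$, together with the identity on all states. It also contains the symmetric pairs and the pairs with $\surd$ on both sides. The clauses are checked as follows.

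\emph{Moves of $\tau_I(\langle Z|E\rangle)$.} By the transition rules for guarded recursion (Table \ref{TRForGR}) and for abstraction (Table \ref{TRForAbstraction}), a move is either internal or an exit.
\begin{itemize}
\item An internal move uses a summand whose labels lie in $I\cup\{\tau\}$ and whose target $Z'$ is in $C$. After abstraction it is a $\tau^*$-step, and the target pair is in $\mathcal{R}$ because $Z'\in C$. This is the first alternative of the branching clause; the state component is unchanged externally, since $s=\tau(s)$.
\item An exit move is matched by $\tau_I(t)$ with zero $\tau$-steps, because the same exit is a summand of $t$. The residuals are syntactically identical, so they are related by the identity part of $\mathcal{R}$.
\end{itemize}

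\emph{Moves of $\tau_I(t)$.} Each such move executes some exit, and that exit belongs to the equation of some $Z'\in C$. Because $Z$ and $Z'$ lie in the same cluster, there is a path $\langle Z|E\rangle\to\cdots\to\langle Z'|E\rangle$ whose labels all lie in $I\cup\{\tau\}$. Under $\tau_I$ this path is a $\tau^*$-sequence, and every intermediate state is again related to $\tau_I(t)$ in $\mathcal{R}$. This is exactly the stuttering requirement in the second alternative of the branching clause. From $\tau_I(\langle Z'|E\rangle)$ the exit is then performed, and the residuals are identical.

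Termination of $\tau_I(t)$ through an exit of the form $\beta_{l1}\oplus\cdots\oplus\beta_{lj}$ is treated the same way, ending in $\surd$. I also have to handle the case where a parallel pomset label $\mset{\cdots}$ mixes internal and external actions. By Definition \ref{CLUSTER}, such a summand counts as an exit, so it is already covered by the exit case. The pomset itself is preserved up to renaming by $\tau_I$, using $TI7$ and $B3$/$B4$ semantically.

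The main obstacle is the reachability step. It requires every exit summand to be reachable from every member of the cluster by internal moves alone. This follows directly from the definition of cluster as a mutually reachable set under $I\cup\{\tau\}$-labelled moves. Guardedness of $E$ (Definition \ref{GLRS}) ensures that the transitions of $\langle Z|E\rangle$ are well defined and finitely branching, so the paths are finite.

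The remaining work is bookkeeping about data states $s$. Abstracted steps only change internal data, so for the external environment $s=\tau(s)$, and the state-indexed pairs in $\mathcal{R}$ stay consistent. With this, $\mathcal{R}$ is a branching pomset bisimulation. Combined with the matching initial $\tau$-transitions, this gives $L\pbrb R$, and hence the soundness of $CFAR$ modulo $\pbrb$ as stated in Theorem \ref{SCFAR}.
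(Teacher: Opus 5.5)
Your proof is correct. It rests on the same intuition as the paper's: after the rooting $\tau$, the steps $\tau_I(\langle X|E\rangle)$ takes inside the cluster are non-initial silent steps, and eventually some exit is taken.

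The routes differ in how this is justified. The paper traces only the left-hand side (a $\tau^*$-run inside the cluster followed by an exit) and discharges the silent steps by citing axiom $B1$. You instead write down a witness relation: $\tau_I(\langle Z|E\rangle)$ for every $Z\in C$ is paired with $\tau_I(t)$, together with the identity, and you check both transfer clauses. This makes explicit two things the paper's sketch leaves to the reader.

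\begin{itemize}
\item The converse direction, where $\tau_I(t)$ takes an exit belonging to another member $Z'$ of the cluster, genuinely needs mutual reachability in $C$. This is why every $Z\in C$, and not just $X$, must be paired with $\tau_I(t)$.
\item A cluster may contain $\tau$-cycles that can be traversed arbitrarily often. Finitely many applications of $B1$, a law about a single trailing $\tau$, do not absorb these. Your coinductive relation handles them directly, because branching bisimulation is insensitive to divergence.
\end{itemize}

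You also read a summand with a mixed label (some $\alpha_\ell\notin I\cup\{\tau\}$, others inside) as an exit. That is the reading under which the internal/exit dichotomy is exhaustive and $CFAR$ is sound; Definition \ref{CLUSTER} is ambiguous on this point.

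The paper's version is shorter, but the relation you wrote down is what its argument implicitly requires.
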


\begin{proof}
Let $X$ be in a cluster for $I$ with exits $\{(\alpha_{11}\oplus\cdots\oplus \alpha_{1i})Y_1,\cdots,(\alpha_{m1}\oplus\cdots\oplus \alpha_{mi})Y_m,\beta_{11}\oplus\cdots\oplus \beta_{1j},\cdots,\beta_{n1}\oplus\cdots\oplus \beta_{nj}\}$. Then $\langle X|E\rangle$ can execute a string of atomic events from $I\cup\{\tau\}$ inside the cluster of $X$, followed by an exit $(\alpha_{i'1}\oplus\cdots\oplus \alpha_{i'i})Y_{i'}$ for $i'\in\{1,\cdots,m\}$ or $\beta_{j'1}\oplus\cdots\oplus \beta_{j'j}$ for $j'\in\{1,\cdots,n\}$. Hence, $\tau_I(\langle X|E\rangle)$ can execute a string of $\tau^*$ inside the cluster of $X$, followed by an exit $\tau_I((\alpha_{i'1}\oplus\cdots\oplus \alpha_{i'i})\langle Y_{i'}|E\rangle)$ for $i'\in\{1,\cdots,m\}$ or $\tau_I(\beta_{j'1}\oplus\cdots\oplus \beta_{j'j})$ for $j'\in\{1,\cdots,n\}$. And these $\tau^*$ are non-initial in $\tau\tau_I(\langle X|E\rangle)$, so they are truly silent by the axiom $B1$, we obtain $\tau\tau_I(\langle X|E\rangle)\approx_{rbs}\tau\circ\tau_I((\alpha_{11}\oplus\cdots\oplus \alpha_{1i})\langle Y_1|E\rangle+\cdots+(\alpha_{m1}\oplus\cdots\oplus \alpha_{mi})\langle Y_m|E\rangle+\beta_{11}\oplus\cdots\oplus \beta_{1j}+\cdots+\beta_{n1}\oplus\cdots\oplus \beta_{nj})$, as desired.
\end{proof}

\begin{theorem}[Completeness of $APT_{\tau}$ with guarded linear recursion and $CFAR$]\label{CCFAR}
Let $p$ and $q$ be closed $APT_{\tau}$ with guarded linear recursion and $CFAR$ terms, then, if $p\pbrb q$ then $p=q$.
\end{theorem}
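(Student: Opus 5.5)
The proof reduces the statement to the completeness result already established for the $\tau_I$-free fragment, Theorem \ref{CAPTTAU}. The key step is to show that every closed term of $APT_{\tau}$ with guarded linear recursion and $CFAR$ is provably equal to a term $\langle X_1|E\rangle$ with $E$ a \emph{guarded} linear recursive specification that contains no abstraction operator. Once this is done, let $p\pbrb q$. We obtain $p=\langle X_1|E_1\rangle$ and $q=\langle Y_1|E_2\rangle$ with $E_1,E_2$ guarded linear and $\tau_I$-free. Soundness (Theorem \ref{SAPTABS} together with Theorem \ref{SCFAR}) gives $\langle X_1|E_1\rangle\pbrb\langle Y_1|E_2\rangle$. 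Theorem \ref{CAPTTAU} then yields $\langle X_1|E_1\rangle=\langle Y_1|E_2\rangle$, hence $p=q$.

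The elimination of $\tau_I$ goes by structural induction on closed terms. For terms built from atomic actions, guards, $\mathbf{0}$, $\mathbf{I}$, $\tau$, $\circ$, $+$ and $\oplus$, we use the elimination theorem of APT (Theorem \ref{ETParallelism}) and the fact that such terms, and linear-recursive constants combined by these operators, are again expressible as $\langle X|E\rangle$ for linear $E$. This is the same fact used at the start of the proofs of Theorems \ref{CAPTR} and \ref{CAPTTAU}. The treatment of $\oplus$ on two linear specifications uses a product construction on variables $Z_{XY}$, driven by $P1$--$P9$ and $RDP$. The essential case is $\tau_I(\langle X|E\rangle)$ with $E$ guarded linear. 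Define $E'$ by applying $TI1$, $TI2$, $TI5$--$TI7$, $G22$ and $G23$ summand-wise, so that each summand $(\alpha_1\oplus\cdots\oplus\alpha_k)Y$ becomes $(\tau_I(\alpha_1)\oplus\cdots\oplus\tau_I(\alpha_k))Y$. Then use $RDP$/$RSP$ to get $\tau_I(\langle X|E\rangle)=\langle X|E'\rangle$.

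The difficulty is that $E'$ need not be guarded: renaming can create $\tau$-loops, as in $\tau_{\{\alpha\}}(\langle X|X=\alpha X\rangle)$. This is where $CFAR$ enters, and I expect it to be the main obstacle. My plan is to group the variables of $E$ into clusters for $I$ (Definition \ref{CLUSTER}) and, for each variable $X$ in a cluster $C$, replace the equation for $X$ by one whose right-hand side lists only the exits of $C$, renamed by $\tau_I$. By $CFAR$,
\[\tau\circ\tau_I(\langle X|E\rangle)=\tau\circ\tau_I\bigl(\textstyle\sum \text{exits of }C\bigr),\]
and with $B1$/$B2$ this lets us strip the internal $\tau$-cycles while keeping a single leading $\tau$ where rootedness requires it. The resulting specification $E''$ has no $\tau$-transitions inside a cluster, so it is guarded, and we conclude by $RSP$.

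The delicate points are the following. First, handling the root variable separately, since $CFAR$ only holds under a prefix $\tau$. Second, checking that after the replacement every variable still satisfies its equation, so that $RSP$ applies. Third, terms $\tau_I(t)$ with nested abstractions, which are treated from the inside out using the induction hypothesis.
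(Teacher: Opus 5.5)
Your proposal is correct and takes essentially the same route as the paper. You reduce to Theorem \ref{CAPTTAU} and treat $\tau_I(\langle X|E\rangle)$ as the only new case: split the variables of $E$ into clusters for $I$, build the guarded linear specification whose equations list only the $\tau_I$-renamed exits, use $CFAR$ under a prefix (via $B1$) to show the renamed exit sums solve it, and conclude with $RSP$. Your separate handling of the root variable is a point the paper's proof glosses over; you should drop the preliminary step $\tau_I(\langle X|E\rangle)=\langle X|E'\rangle$ via $RSP$, which is unjustified exactly when $E'$ is unguarded and is not needed by the cluster construction.
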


\begin{proof}
Firstly, in the proof the Theorem \ref{CAPTTAU}, we know that each process term $p$ in APT with silent step and guarded linear recursion is equal to a process term $\langle X_1|E\rangle$ with $E$ a guarded linear recursive specification. And we prove if $\langle X_1|E_1\rangle\approx_{rbs}\langle Y_1|E_2\rangle$, then $\langle X_1|E_1\rangle=\langle Y_1|E_2\rangle$

The only new case is $p\equiv\tau_I(q)$. Let $q=\langle X|E\rangle$ with $E$ a guarded linear recursive specification, so $p=\tau_I(\langle X|E\rangle)$. Then the collection of recursive variables in $E$ can be divided into its clusters $C_1,\cdots,C_N$ for $I$. Let

\[(\alpha_{1i1}\oplus\cdots\oplus \alpha_{k_{i1}i1}) Y_{i1}+\cdots+(\alpha_{1im_i}\oplus\cdots\oplus \alpha_{k_{im_i}im_i}) Y_{im_i}+\beta_{1i1}\oplus\cdots\oplus \beta_{{\ell}_{i1}i1}+\cdots+\beta_{1im_i}\oplus\cdots\oplus \beta_{{\ell}_{im_i}im_i}\]

be the conflict composition of exits for the cluster $C_i$, with $i\in\{1,\cdots,N\}$.

For $Z\in C_i$ with $i\in\{1,\cdots,N\}$, we define

$s_Z\triangleq (\hat{\alpha_{1i1}}\oplus\cdots\oplus \hat{\alpha_{k_{i1}i1}}) \tau_I(\langle Y_{i1}|E\rangle)+\cdots+(\hat{\alpha_{1im_i}}\oplus\cdots\oplus \hat{\alpha_{k_{im_i}im_i}}) \tau_I(\langle Y_{im_i}|E\rangle)+\hat{\beta_{1i1}}\oplus\cdots\oplus \hat{\beta_{{\ell}_{i1}i1}}+\cdots+\hat{\beta_{1im_i}}\oplus\cdots\oplus \hat{\beta_{{\ell}_{im_i}im_i}}$

For $Z\in C_i$ and $\alpha_1,\cdots,\alpha_j\in \mathsf{Act}\cup\{\tau\}$ with $j\in\mathbb{N}$, we have

$(\alpha_1\oplus\cdots\oplus \alpha_j)\tau_I(\langle Z|E\rangle)$

$=(\alpha_1\oplus\cdots\oplus \alpha_j)\tau_I((\alpha_{1i1}\oplus\cdots\oplus \alpha_{k_{i1}i1}) \langle Y_{i1}|E\rangle+\cdots+(\alpha_{1im_i}\oplus\cdots\oplus \alpha_{k_{im_i}im_i}) \langle Y_{im_i}|E\rangle+\beta_{1i1}\oplus\cdots\oplus \beta_{{\ell}_{i1}i1}+\cdots+\beta_{1im_i}\oplus\cdots\oplus \beta_{{\ell}_{im_i}im_i})$

$=(\alpha_1\oplus\cdots\oplus \alpha_j)s_Z$

Let the linear recursive specification $F$ contain the same recursive variables as $E$, for $Z\in C_i$, $F$ contains the following recursive equation

$Z=(\hat{\alpha_{1i1}}\oplus\cdots\oplus \hat{\alpha_{k_{i1}i1}}) Y_{i1}+\cdots+(\hat{\alpha_{1im_i}}\oplus\cdots\oplus \hat{\alpha_{k_{im_i}im_i}})  Y_{im_i}+\hat{\beta_{1i1}}\oplus\cdots\oplus \hat{\beta_{{\ell}_{i1}i1}}+\cdots+\hat{\beta_{1im_i}}\oplus\cdots\oplus \hat{\beta_{{\ell}_{im_i}im_i}}$

It is easy to see that there is no sequence of one or more $\tau$-transitions from $\langle Z|F\rangle$ to itself, so $F$ is guarded.

For

$s_Z=(\hat{\alpha_{1i1}}\oplus\cdots\oplus \hat{\alpha_{k_{i1}i1}}) Y_{i1}+\cdots+(\hat{\alpha_{1im_i}}\oplus\cdots\oplus \hat{\alpha_{k_{im_i}im_i}}) Y_{im_i}+\hat{\beta_{1i1}}\oplus\cdots\oplus \hat{\beta_{{\ell}_{i1}i1}}+\cdots+\hat{\beta_{1im_i}}\oplus\cdots\oplus \hat{\beta_{{\ell}_{im_i}im_i}}$

is a solution for $F$. So, $(\alpha_1\oplus\cdots\oplus \alpha_j)\tau_I(\langle Z|E\rangle)=(\alpha_1\oplus\cdots\oplus \alpha_j)s_Z=(\alpha_1\oplus\cdots\oplus \alpha_j)\langle Z|F\rangle$.

So,

$\langle Z|F\rangle=(\hat{\alpha_{1i1}}\oplus\cdots\oplus \hat{\alpha_{k_{i1}i1}}) \langle Y_{i1}|F\rangle+\cdots+(\hat{\alpha_{1im_i}}\oplus\cdots\oplus \hat{\alpha_{k_{im_i}im_i}}) \langle Y_{im_i}|F\rangle+\hat{\beta_{1i1}}\oplus\cdots\oplus \hat{\beta_{{\ell}_{i1}i1}}+\cdots+\hat{\beta_{1im_i}}\oplus\cdots\oplus \hat{\beta_{{\ell}_{im_i}im_i}}$

Hence, $\tau_I(\langle X|E\rangle=\langle Z|F\rangle)$, as desired.
\end{proof}

\newpage\section{Application}\label{app} 

This chapter applies APT to three Transformer constructions from The Transformer Cookbook \cite{TTCB}: Most-Recent Induction, Dyck-1 recognition, and Dyck-1-2 recognition.
Each application first describes the construction and establishes its functional specification, then represents its components using the existing atomic constructions, and finally derives its external input-output behavior from recursive state-transition equations by abstraction. Functional correctness is established from the concrete computation before internal actions are hidden.

\subsection{Verification of a Transformer for Most-Recent Induction}

The subsection apply APT to the Most-Recent Induction construction presented in the Transformer Cookbook\cite{TTCB}. At each position, the construction predicts the symbol following the most recent earlier occurrence of the current symbol. If no earlier occurrence exists, it predicts the current symbol.

We first describe the construction and its functional specification. We then represent initialization, predecessor retrieval, induction retrieval, and unembedding using existing APT atomic constructions. Finally, we connect these components through recursive process equations and derive the expected external input-output behavior by abstraction. 

\subsubsection{A Transformer Construction of the Most-Recent Induction}\label{Most-recent-Cons}

Let $\Sigma$ be a finite nonempty alphabet, and let
\[
  x=x_1\cdots x_n\in\Sigma^+,\qquad n=|x|,
\]
be a nonempty input sequence, where $x_i$ denotes the symbol at position $i$, where $1\le i\le n$. Each symbol $\sigma\in\Sigma$ is represented by its one-hot encoding $e_\sigma$.
\[
\mathrm{WE}(x)=(e_{x_1},\ldots,e_{x_n}).
\]
For the example ,
\[ \mathrm{WE}(\mathtt{ACABDACDCA}) = (e_A,e_C,e_A,e_B,e_D,e_A,e_C,e_D,e_C,e_A).\]
For every symbol, use the predecessor construction to retrieve the
embedding of the predecessor symbol (returning 0 when none exists).
\[
\mathrm{Pred}(x_i)=
\begin{cases}
0,&i=1,\\
e_{x_{i-1}},&2\le i\le n.
\end{cases}
\]
Thus,
\[ \mathrm{Pred}(\mathtt{ACABDACDCA})=
(0,e_A,e_C,e_A,e_B,e_D,e_A,e_C,e_D,e_C).\]

At position $i$, attention compares $e_{x_i}$ with the predecessor embeddings at candidate positions $j\le i$. A position $j$ matches when
\[\mathrm{Pred}(x_j)=e_{x_i}.\]
For such a position, $x_{j-1}=x_i$, and the value $e_{x_j}$ represents the symbol following that earlier occurrence. Rightmost selection chooses the largest matching position index and returns its value:
\[
\mathrm{Retrieve}(x)_i=
\begin{cases}
e_{x_j}, &\text{if there exists }j\text{ with }2\le j\le i \text{ and }x_{j-1}=x_i,\\[2pt]
e_{x_i},&\text{otherwise}.
\end{cases}
\]
Here, $i$ is the current query position, and $j$ denotes the largest index satisfying $2\le j\le i$ and $x_{j-1}=x_i$, whenever such an index exists. The largest candidate index corresponds to the most recent earlier occurrence.

For the running example,
\[ \mathrm{Retrieve}(\mathtt{ACABDACDCA}) = (e_A,e_C,e_C,e_B,e_D,e_B,e_A,e_A,e_D,e_C). \]

Unembedding decodes the retrieved embeddings:
\[
\mathrm{Induct}_{\mathrm{rightmost}}(x_i)=
\begin{cases}
x_j,&\text{if a matching candidate exists},\\
x_i,&\text{otherwise},
\end{cases}
\]
with j defined above.

Thus,
\[ \mathrm{Induct}_{\mathrm{rightmost}}(\mathtt{ACABDACDCA}) = \mathtt{ACCBDBAADC}.\]

\subsubsection{Algebraic Representation}\label{Most-recent-rep}

\runhead{Input interface.} The action 
\[R(x),\quad x\in\Sigma^+\]
receives the input sequence $x = x_1\cdot x_n$ and stores it. It performs no Transformer computation. Sequential composition transfers control to initialization.

The input and output actions describe interaction with the environment; they are not additional Transformer computational primitives.
\runhead{Embedding.} 
The initialization component is represented by
\[(\alpha_1\oplus\alpha_2)\circ\alpha_5. \]
Here:
\begin{itemize}
  \item $\alpha_1$ is the word-embedding action. It supplies the one-hot current-symbol embeddings, computes $\mathrm{WE}(x_i)=e_{x_i}$ at each position.
  \item $\alpha_2$ is the position-embedding action. It supplies $i/n$ for predecessor retrieval in the first self-attention component and rightmost selection in the second self-attention component.
  \item $\alpha_5$ is the addition action, which merges the outputs of $\alpha_1$ and $\alpha_2$ coordinatewise.
\end{itemize}
The two embedding actions execute in parallel, followed by their coordinatewise addition.

\runhead{First Transformer layer.} 
The first transformer component uses the predecessor construction to retrieve the embedding of the immediately preceding symbol. It is represented by
\[ (\alpha_{12}\oplus\alpha_{16})\circ\alpha_5. \]
Here:
\begin{itemize}
  \item $\alpha_{12}$ is the sequence identity action. It preserves the current-symbol embeddings and positional information.
  \item $\alpha_{16}$ is the predecessor action. It retrieves $\mathrm{Pred}(x_i)$ and routes the result to coordinates separate from the preserved embeddings.
  \item $\alpha_5$ combines the preserved information and predecessor embeddings.
\end{itemize}
The parallel composition preserves the current information while computing the predecessor embedding. The subsequent addition retains both results for the second Transformer layer.

\runhead{Second Transformer layer.} 
The second Transformer layer retrieves the symbol following the most recent earlier occurrence of the current symbol. Its self-attention component and residual connection are represented by
\[
(\alpha_{12}\oplus\alpha_{18})\circ\alpha_5.
\]
Here:
\begin{itemize}
  \item $\alpha_{12}$ is the sequence identity action. It preserves the information produced by the first Transformer layer.
  \item $\alpha_{18}$ is the multi-head action, configured with one effective retrieval head. It uses $\mathrm{WE}(x_i)$ as the query content, $\mathrm{Pred}(x_j)$ as the key content, and $\mathrm{WE}(x_j)$ as the value at each candidate position $j\le i$. Content matching and rightmost selection produce $\mathrm{Retrieve}(x_i)$. If no match exists, the configuration returns $\mathrm{WE}(x_i)$.
  \item $\alpha_5$ adds the attention output to the preserved input, retaining the retrieved embedding for unembedding.
\end{itemize}
The second-layer FFN is configured as zero because the attention component already produces the required prediction embedding. This inactive FFN is omitted from the process representation.

\runhead{Unembedding.} 
The unembedding action 
\[\alpha_{19}\]
decodes each retrieved one-hot embedding $\mathrm{Retrieve}(x_i)$ into its corresponding symbol. Applied at every position, it produces the prediction sequence
\[
\mathrm{Induct}_{\mathrm{rightmost}}(x).
\]

\runhead{Output interface.} 
The output action reads and emits the computed prediction sequence. Its occurrence is therefore labelled
\[
W\big(\mathrm{Induct}_{\mathrm{rightmost}}(x)\big)
\]

\subsubsection{Verification of the Most-Recent Induction Transformer}\label{Most-recent-V}

We now verify that the Most-Recent Induction Transformer exhibits the expected external behavior: it receives an input sequence, computes its Most-Recent Induction predictions, emits the result, and returns to the input-waiting process.
\runhead{State-transition equations.} 
Using the component representations in Section~\ref{Most-recent-rep}, define
\[
\begin{aligned}
T &= R(x)\circ T_2,\\
T_2 &= \big((\alpha_1\oplus\alpha_2)\circ\alpha_5\big)\circ T_3,\\
T_3 &= \big((\alpha_{12}\oplus\alpha_{16})\circ\alpha_5\big)\circ T_4,\\
T_4 &= \big((\alpha_{12}\oplus\alpha_{18})\circ\alpha_5\big)\circ T_5,\\
T_5 &= \alpha_{19}\circ T_6,\\
T_6 &= W\big(\mathrm{Induct}_{\mathrm{rightmost}}(x)\big)\circ T.
\end{aligned}
\]

\runhead{Abstraction and external behavior.} 
Use the abstraction set $I$ defined earlier in the manuscript. 
\[
I\subseteq \mathsf{Act}\cup \mathsf{Grd} 
\]
It contains all configured internal actions occurring in the process equations, while the input and output actions remain visible.
\begin{theorem}[Correctness of the Most-Recent Induction Transformer]
  For every received input sequence $x$, the transformer construction of Most-Recent Induction computes and emits its Most-Recent Induction prediction sequence and returns to input-waiting state. Thus, $\tau_I(T)$ satisfies the external behavior equation:
  \[
    \tau_I(T)= R(x)\circ W\big(\mathrm{Induct}_{\mathrm{rightmost}}(x)\big)\circ\tau_I(T).
  \]

  \begin{proof}
    Section~\ref{Most-recent-Cons} establishes that the predecessor and retrieval computations return the required prediction embedding at each position, including the no-match case. Unembedding therefore produces the stated output sequence.
    
    Unfolding the state-transition equations and applying TI1, TI2, TI6, and TI7 gives
    \[
    \begin{aligned}
    \tau_I(T) &= R(x)\circ\tau_I(T_2),\\
    \tau_I(T_2) &= (\tau\oplus\tau)\circ\tau\circ\tau_I(T_3),\\
    \tau_I(T_3) &= (\tau\oplus\tau)\circ\tau\circ\tau_I(T_4),\\
    \tau_I(T_4) &= (\tau\oplus\tau)\circ\tau\circ\tau_I(T_5),\\
    \tau_I(T_5) &= \tau\circ\tau_I(T_6),\\
    \tau_I(T_6) &= W\big(\mathrm{Induct}_{\mathrm{rightmost}}(x)\big)\circ\tau_I(T).
    \end{aligned}
    \]
    By B4, \(\tau\oplus\tau=\tau\), together with A5, this gives
    \[
    \begin{aligned}
    \tau_I(T) &= R(x)\circ\tau_I(T_2),\\
    \tau_I(T_2) &= \tau\circ\tau\circ\tau_I(T_3),\\
    \tau_I(T_3) &= \tau\circ\tau\circ\tau_I(T_4),\\
    \tau_I(T_4) &= \tau\circ\tau\circ\tau_I(T_5),\\
    \tau_I(T_5) &= \tau\circ\tau_I(T_6),\\
    \tau_I(T_6) &= W\bigl(\mathrm{Induct}_{\mathrm{rightmost}}(x)\bigr)\circ\tau_I(T).
    \end{aligned}
    \]
    For each input branch, substitution gives seven sequential silent actions between input and output.
    \[
    \begin{aligned}
    \tau_I(T) = R(x) &\circ\tau\circ\tau\circ\tau\circ\tau\circ\tau\circ\tau\circ\tau\\
    &\circ W\big(\mathrm{Induct}_{\mathrm{rightmost}}(x)\big)\circ\tau_I(T).\end{aligned}
    \]
    For any continuation $Y$, A5 and B1 give
    \[
    R(x)\circ(\tau\circ Y)
    \overset{A5}{=}(R(x)\circ\tau)\circ Y
    \overset{B1}{=}R(x)\circ Y.
    \]
    Applying this identity repeatedly within each input branch removes the seven sequential silent actions. Hence,
    \[
      \tau_I(T) = R(x)\circ W\big(\mathrm{Induct}_{\mathrm{rightmost}}(x)\big)\circ\tau_I(T).
    \]
    The resulting equation exhibits the required external behavior: the transformer receives an input sequence, computes its Most-Recent Induction predictions, emits the result, and returns to the input-waiting process.
  \end{proof}
\end{theorem}

\subsection{Verification of a Transformer for Dyck-1}

In this subsection, we apply APT to the verification of the uniform two-layer Transformer for the Dyck-1 language presented in the Transformer Cookbook\cite{TTCB}. The construction considered here is uniform: its parameters are independent of the input length, and the transformer computes the quantities required for recognition rather than an internal binary acceptance bit.

The verification has three complementary parts. First, we describe the numerical construction and establish its recognition criterion. Second, we represent its embedding, attention, and feed-forward components using existing atomic constructions. Finally, we connect these components through recursive equations and derive the external input-output behavior by abstraction. 

\subsubsection{A Transformer Construction for Dyck-1}

The Dyck-1 language consists of well-nested strings over the alphabet
\[ \Sigma=\{\texttt{(},\texttt{)}\}. \] 
A string \(w\in\Sigma^*\) belongs to Dyck-1 exactly when it satisfies the following two conditions:
\begin{itemize}
  \item Property 1 (Prefix Condition): In every prefix of the string, the count of open brackets must be greater than or equal to the count of close brackets.
  \item Property 2 (Balance Condition): The total count of open and close brackets in the entire string must be equal.
\end{itemize}
The construction below is stated for nonempty inputs because it inspects the representation at the final position. The empty string may be accepted separately by convention. 
Let $w = w_1\cdot w_n\in\Sigma^+$ be a nonempty input string. Encode its symbols by
\[
  o_i=
  \begin{cases}
    1, & w_i=\texttt{(},\\
    -1, & w_i=\texttt{)}.
  \end{cases}
\]

For each \(i\in[n]\), let \(O_i\) and \(C_i\) be the numbers of opening and closing parentheses, respectively, in the prefix \(w_1\cdots w_i\). The corresponding prefix balance is
\[
  B_i = O_i-C_i = \sum_{j=1}^{i}o_j.
\]

The embedding initializes
\[
(o_i,0,0,0)^{\mathsf T}.
\]

The first attention component computes the normalized prefix balance:
\[
\frac1i\sum_{j=1}^{i}o_j=\frac{B_i}{i}.
\]

The first FFN computes the prefix-error signal
\[
E_i=\operatorname{ReLU}\left(-\frac{B_i}{i}\right).
\]

The second attention component computes its prefix average:
\[
t_i=\frac1i\sum_{j=1}^{i}E_j.
\]

The final residual representation at position \(i\) is
\[
  z_i = 
  \begin{bmatrix} z_{i,1}\\ z_{i,2}\\ z_{i,3}\\ z_{i,4} \end{bmatrix}
  =
  \begin{bmatrix} o_i \\ B_i/i \\ E_i \\ t_i \end{bmatrix}.
\]

The representation therefore evolves as
\[
  \begin{bmatrix} o_i \\ 0 \\ 0 \\ 0 \end{bmatrix}
  \xrightarrow{\text{Compute Balance}}
  \begin{bmatrix} o_i \\ B_i/i \\ 0 \\ 0 \end{bmatrix}
  \xrightarrow{\text{Negative } B_i \text{ Error?}}
  \begin{bmatrix} o_i \\ B_i/i \\ E_i \\ 0 \end{bmatrix}
  \xrightarrow{\text{Sum of Errors}}
  \begin{bmatrix} o_i \\ B_i/i \\ E_i \\ t_i \end{bmatrix}
\]

Write the final representation as
\[
z_i=(o_i,B_i/i,E_i,t_i)^{\mathsf T}.
\]

At the final position,
\[
  z_{n,2}=\frac{B_n}{n},\qquad z_{n,4}=t_n.
\]

Since \(n>0\),
\[
z_{n,2}=\frac{B_n}{n}=0
\iff
B_n=0,
\]
so \(z_{n,2}=0\) is equivalent to the balance condition. Moreover,
\[
E_i=0\iff B_i\ge0.
\]
and all error signals $E_i$ are nonnegative. Therefore,
\[
t_n=0
\iff
E_i=0\text{ for every }i\in[n]
\iff
B_i\geq0\text{ for every }i\in[n].
\]

Thus, \(z_{n,4}=0\) is equivalent to the prefix condition. It follows that
\[
w\in\mathrm{Dyck}\text{-}1
\iff
z_{n,2}=0\land z_{n,4}=0.
\]

For later use, define the expected numerical output for $w$ by
\[
  o_D := \left( \frac{B_n(w)}{n}, t_n(w) \right).
\]
Then
\[
  w\in\mathrm{Dyck}\text{-}1
  \quad\Longleftrightarrow\quad
  o_D=(0,0).
\]
The equality test is an external observation. It is not an internal comparison layer or Boolean classifier.

\subsubsection{Algebraic Representation}\label{Dyck-1-rep}

The construction uses causal uniform averaging, residual connections, and a four-dimensional residual workspace. No positional embedding or layer normalization is required. 

Repeated occurrences of the same action symbol may therefore denote differently configured instances of the same construction. For example, the two occurrences of $\alpha_{13}$ below use different source and target workspace coordinates. All such parameter choices are meta-level configurations only and introduce neither new atomic actions nor new operators into APT.

\runhead{Input interface.}The input action
\[ R(w) \]
receives the input string $w$ from the external environment and stores it. This interface action does not itself perform a Transformer computation. Once it has terminated, sequential composition transfers control to embedding.

\runhead{Embedding.} 
The first internal Transformer action is the existing word-embedding action 
\[\alpha_1\]
which reads the symbols $w_i$ from the input component, yields the first component $o_i$ at each position, and initializes the residual representation at position $i$ as
\[
\begin{bmatrix}
o_i\\ 0\\ 0\\ 0
\end{bmatrix}.
\]
No separate position-embedding action $\alpha_2$ is required in this construction.

\runhead{First self-attention component.} 
The first self-attention component represented by \[\left(\alpha_{12}\oplus\alpha_{13}\right)\circ\alpha_5.\]
Here:
\begin{itemize}
  \item $\alpha_{12}$ is the sequence-identity construction used to preserve input information;
  \item $\alpha_{13}$ is the uniform-average action, which reads the first coordinate and computes its uniform average over positions $1,\cdot ,i$. It writes the causal prefix average $B_i/i$ to the second workspace coordinate and produce $0$ in the other output coordinates;
  \item $\alpha_5$ merges the preserved input representation with the newly attention output. 
\end{itemize}
Consequently, this component performs
\[
\begin{bmatrix}
o_i\\ 0\\ 0\\ 0
\end{bmatrix}
\longmapsto
\begin{bmatrix}
o_i\\ B_i/i\\ 0\\ 0
\end{bmatrix}.
\]

\runhead{First feed-forward component.} 
The first feed-forward component and its residual connection are \[\left(\alpha_3\oplus\left(\alpha_6\circ\alpha_{10}\right)\right)\circ\alpha_5.\] Here:
\begin{itemize}
  \item $\alpha_3$ is the vector-level identity construction that preserves the residual representation while the FFN branch computes the new error signal.
  \item $\alpha_6$ reads the normalized prefix balance $B_i/i$ and multiplies it by $-1$;
  \item $\alpha_{10}$ is the continuous-piecewise-linear action, applies ReLU to produce the prefix error signal $E_i=\operatorname{ReLU}(-B_i/i)$ and routes the result to the third coordinate;
  \item $\alpha_5$ adds the computational output to the retained identity-branch output.
\end{itemize}
The computational branch produces
\[
\begin{bmatrix}
  0\\ 0\\ E_i\\ 0 
\end{bmatrix}
\]
Thus, this is the correction required to obtain
\[
\begin{bmatrix}
o_i\\ B_i/i\\ 0\\ 0
\end{bmatrix}
\longmapsto
\begin{bmatrix}
o_i\\ B_i/i\\ E_i\\ 0
\end{bmatrix}.
\]

\runhead{Second self-attention component} 
The second self-attention component is represented by \[\left(\alpha_{12}\oplus\alpha_{13}\right)\circ\alpha_5.\]
Here:
\begin{itemize}
  \item $\alpha_{12}$ preserves the residual sequence;
  \item $\alpha_{13}$ now reads the third coordinate (the error signals $E_i$) and computes its causal prefix average $t_i$. It writes this value into the fourth coordinate;
  \item $\alpha_5$ adds the routed output to the preserved representation. 
\end{itemize}
The resulting transformation is
\[
\begin{bmatrix}
o_i\\ B_i/i\\ E_i\\ 0
\end{bmatrix}
\longmapsto
\begin{bmatrix}
o_i\\ B_i/i\\ E_i\\ t_i
\end{bmatrix}.
\]

\runhead{Second feed-forward component.} 
No additional feed-forward computation is required after the second self-attention stage. In the Dyck-1 construction, the second attention head already writes the accumulated error $t_i$ into the final workspace coordinate, after which recognition is specified by inspecting the final-position values. We therefore omit the second-layer FFN.

\runhead{Output interface} 
The output action reads the second and fourth coordinates at the final position:
\[
(z_{n,2},z_{n,4})
= \left(\frac{B_n(w)}n,t_n(w)\right)
= o_D.
\]

Its occurrence is therefore labelled 
\[
\mathsf W(o_D),
\]
which emits a two-dimensional numerical result without performing the external acceptance test.

\subsubsection{Verification of the Dyck-1 Transformer}\label{Dyck-1-V}

We now verify that the Dyck-1 Transformer exhibits the expected external behavior: it receives a nonempty input string, computes and emits the corresponding numerical result, and returns to its input-waiting process.

\runhead{State-transitions equations.} 
For each $w\in\Sigma^+$, define its processes are as follows:
\[
\begin{aligned}
T &= R(w)\circ T_2,\\
T_2 &= \alpha_1\circ T_3,\\
T_3 &= \left(\left(\alpha_{12}\oplus\alpha_{13}\right)\circ\alpha_5\right)\circ T_4,\\
T_4 &= \left(\left(\alpha_3\oplus\left(\alpha_6\circ\alpha_{10}\right)\right)\circ\alpha_5\right)\circ T_5,\\
T_5 &= \left(\left(\alpha_{12}\oplus\alpha_{13}\right)\circ\alpha_5\right)\circ T_6,\\
T_6 &= W(o_D)\circ T.
\end{aligned}
\]

\runhead{Abstraction of internal actions.} 
All configured internal actions belong to $I$. The interface actions $R(w)$ and $W(o_D)$ remain visible.

\begin{theorem}[Correctness of the Dyck-1 Transformer.]
For every received input string $w$, the transformer construction of Dyck-1 computes and emits its numerical result $o_D$ and returns to input-waiting state. 
Thus, $\tau_I(T)$ satisfies the external behavior equation:
\[
\tau_I(T) = R(w)\circ W(o_D)\circ\tau_I(T).
\]
\begin{proof}
  First, for the process calculation, axiom P1 gives
  \[\alpha_3\oplus(\alpha_6\circ\alpha_{10}) = (\alpha_3\oplus\alpha_6)\circ\alpha_{10}.\]
  Together with A5, this rewrites the feed-forward residual process as
  \[
    T_4 = (\alpha_3\oplus\alpha_6)\circ\alpha_{10}\circ\alpha_5\circ T_5.
  \]
  The abstraction axioms TI1, TI2, TI6, and TI7 apply to the residual-process equations, and using B4 to simplify 
  \[\tau\oplus\tau = \tau,\] 
  we obtain
  \[
  \begin{aligned}
  \tau_I(T) &= R(w)\circ\tau_I(T_2),\\
  \tau_I(T_2) &= \tau\circ\tau_I(T_3),\\
  \tau_I(T_3) &= \tau\circ\tau\circ\tau_I(T_4),\\
  \tau_I(T_4) &= \tau\circ\tau\circ\tau\circ\tau_I(T_5),\\
  \tau_I(T_5) &= \tau\circ\tau\circ\tau_I(T_6),\\
  \tau_I(T_6) &= W(o_D)\circ\tau_I(T).
  \end{aligned}
  \]
  Substitution yields eight sequential silent actions between the visible input and output actions.
  For any continuation $Y$, A5 and B1 give
  \[
  R(x)\circ(\tau\circ Y)
  \overset{A5}{=}(R(x)\circ\tau)\circ Y
  \overset{B1}{=}R(x)\circ Y.
  \]
  Applying this equality repeatedly within each input branch yields
  \[
  \tau_I(T) = R(w)\circ W(o_D)\circ\tau_I(T)
  \]
    
  The resulting equation exhibits the required external behavior: the transformer receives a string, emits its corresponding numerical result, and returns to the input-waiting process.
  \end{proof}
\end{theorem}
  
\subsection{Verification of a Transformer for Dyck-1-2}

In this subsection, we apply APT to the depth-two bracket-matching construction for the Dyck-1-2 presented in the Transformer Cookbook\cite{TTCB}. Unlike the running-balance construction for Dyck-1, this Transformer maintains an active bit at every position and performs two rounds of nearest-neighbor matching.

We first summarize the construction and its recognition criterion. We then represent its embedding, positional encoding, attention, and feed-forward components using existing APT atomic constructions. Finally, we connect these components through recursive process equations and derive the expected external input-output behavior by abstraction.

\subsubsection{A Transformer Construction of Dyck-1-2}\label{Dyck-1-2-Con}

The Dyck-1-2 language consists of well-nested strings with one type of bracket and maximum nesting depth two. Let $\Sigma=\{\mathtt{(},\mathtt{)}\}$, and let
\[
w=w_1\cdots w_n\in\Sigma^+,\qquad n=|w|,
\]
be a nonempty input string, where $w_i$ is the bracket at position $i$.

For \(1\le i\le n\), define
\[
o_i=
\begin{cases}
+1,&w_i=\texttt{(},\\
-1,&w_i=\texttt{)}.
\end{cases}
\]

We initialize our embedding vector $x_i$ at each position i as follows:
\[
  x_i=
  \begin{bmatrix}
  o_i\\ 1\\ 0\\ 0\\ i/n
  \end{bmatrix},
\]

The second coordinate is an "active bit". Its initial value is $1$, indicating that the bracket has not yet been matched. A successful match changes this coordinate to $0$. The third and fourth coordinates provide scratch space for intermediate computations, and the fifth stores positional information.

\runhead{Identifying depth-one matches.}
The first step retrieves the bracket types immediately to the left and right of each position using the predecessor and successor constructions with the corresponding strict masks. The intermediate representation contains the current bracket type and its two neighboring bracket types. An FFN then marks adjacent opening–closing pairs as matched and clears the scratch coordinates:
\[
\begin{bmatrix}
o_i\\ 1\\ 0\\ 0\\ i/n
\end{bmatrix}
\xrightarrow{A^{\mathrm{left}},\,A^{\mathrm{right}}}
\begin{bmatrix}
o_i\\ 1\\ o_{i-1}\\ o_{i+1}\\ i/n
\end{bmatrix}
\xrightarrow{\text{FFN: Mark Matched}}
\begin{bmatrix}
o_i\\ a_i\\ 0\\ 0\\ i/n
\end{bmatrix},
\]
where
\[
a_i=
\begin{cases}
0,&o_i=+1\text{ and }o_{i+1}=-1,\\
0,&o_i=-1\text{ and }o_{i-1}=+1,\\
1,&\text{otherwise}.
\end{cases}
\]

Thus, both brackets in each adjacent opening–closing pair are marked as matched. 

\runhead{Identifying depth-two matches.}
The second step searches for the nearest unmatched bracket on each side, ignoring positions whose active bits are zero. This selection through the UHAT expression 
\[
\operatorname{UHAT}\!\left(
\left[
\frac1n-(a_1-1),\,
\frac2n-(a_2-1),\,
\ldots,\,
\frac nn-(a_n-1)
\right]
\right)_i,
\qquad i=1,\ldots,n.
\]

The source describes obtaining the required quantities through a linear transformation of the intermediate representation, using the attention query $W_{(Q)}$ and key matrices $W_{K}$ with a bias term. We then proceed analogously to first step, the workspace changes as follows:
\[
\begin{bmatrix}
o_i\\ a_i\\ 0\\ 0\\ i/n
\end{bmatrix}
\xrightarrow{\text{Select Nearest Active}}
\begin{bmatrix}
o_i\\ a_i\\ l_i\\ r_i\\ i/n
\end{bmatrix}
\xrightarrow{\text{Check Match}}
\begin{bmatrix}
o_i\\ a'_i\\ 0\\ 0\\ i/n
\end{bmatrix}.
\]

Here, $l_i$ and $r_i$ denote the types of the nearest unmatched brackets to the left and right, with 0 denoting a missing neighbor. The nearest unmatched bracket types \(l_i\) and \(r_i\) are defined as above:
\[
l_i=
\begin{cases}
o_{\max\{j\in\{1,\ldots,i-1\}:a_j=1\}},
&\text{if }\{j\in\{1,\ldots,i-1\}:a_j=1\}\neq\varnothing,\\
0,&\text{otherwise},
\end{cases}
\]

\[
r_i=
\begin{cases}
o_{\min\{j\in\{i+1,\ldots,n\}:a_j=1\}},
&\text{if }\{j\in\{i+1,\ldots,n\}\neq\varnothing,\\
0,&\text{otherwise}.
\end{cases}
\]

The maximum selects the largest active position index smaller than $i$, while the minimum selects the smallest active position index greater than $i$. 

The FFN updates the active bit according to
\[
a'_i=
\begin{cases}
0,&a_i=0,\\
0,&a_i=1,\ o_i=+1,\ r_i=-1,\\
0,&a_i=1,\ o_i=-1,\ l_i=+1,\\
1,&\text{otherwise}.
\end{cases}
\]
After the second step, the string is accepted if and only if all active bits are 0.

\runhead{Recognition criterion.}
The final numerical output is the active-bit sequence
\[
(a'_1,\ldots,a'_n).
\]
The recognition condition is
\[
w\in\mathrm{Dyck}\text{-}1\text{-}2
\iff
a'_i=0\quad\text{for every }i=1,\ldots,n.
\]

The all-zero test is performed by the external environment.

\subsubsection{Algebraic Representation}\label{Dyck-1-2-rep}

\runhead{Input interface.} 
The input action is represented by
$R(w)$
which receives the input string $w$ defined in Section~\ref{Dyck-1-2-Con} and stores it. Sequential composition then transfers control to the embedding component.

\runhead{Embedding.} 
The initialization component is represented by
\[
(\alpha_1\oplus\alpha_2)\circ\alpha_5.
\]
Here:
\begin{itemize}
  \item $\alpha_1$ is the word embedding action, which supplies the bracket type $o_i$ and initializes the active bit $a_i$.
  \item $\alpha_2$ is the position embedding action used to supply $i/n$, which enables attention to identify neighboring positions and select the nearest unmatched brackets in the second matching step.
  \item $\alpha_5$ is the addition action, which adds the outputs of $\alpha_1$ and $\alpha_2$ coordinatewise to obtain the initial workspace vector $ x_i=(o_i,1,0,0,i/n)^{\mathsf T}$.
\end{itemize}

\runhead{First attention component.} The first attention component retrieves the bracket types immediately preceding and succeeding each position and stores them in the two scratch coordinates. Its algebraic representation is
\[
(\alpha_{12}\oplus\alpha_{18})\circ\alpha_5.
\]
Here:
\begin{itemize}
  \item $\alpha_{12}$ is the sequence identity action. It preserves the current residual sequence, including the bracket types, active bits, and positional encodings.
  \item $\alpha_{18}$ is a configured multi-head attention action containing two directional heads. The left head retrieves the bracket type $o_{i-1}$ at position $i-1$, while the right head retrieves $o_{i+1}$ at position $i+1$, using the corresponding strict masks. Their outputs are routed to the third and fourth coordinates, a missing neighbor contributes 0.
  \item $\alpha_5$ adds the preserved residual sequence and the attention output coordinatewise. 
\end{itemize}
Thus, the component performs
\[
\begin{bmatrix} 
o_i\\ 1\\ 0\\ 0\\ i/n
\end{bmatrix}
\longmapsto
\begin{bmatrix}
o_i\\ 1\\ o_{i-1}\\ o_{i+1}\\ i/n
\end{bmatrix}.
\]

\runhead{First feed-forward component.} 
The first feed-forward component identifies adjacent opening–closing pairs, changes their active bits to 0, and resets the third and fourth coordinates to 0. 
The active-bit update is 
\[
U_1 = (\alpha_3\oplus\alpha_{10})\circ\alpha_5.
\]
Here:
\begin{itemize}
  \item $\alpha_3$ is the vector identity action. It preserves the input vector at each position.
  \item $\alpha_{10}$ is a configured piecewise-linear action. It reads the current bracket type and the two retrieved neighbor types, determines $a_i$ according to the first-step matching rule, and routes it only to the second coordinate.
  \item $\alpha_5$ adds the correction to the preserved input vector. It changes the active bit from $1$ to $a_i$, resets the third and fourth coordinates to $0$, and preserves the bracket type and positional encoding.
\end{itemize}
The temporary coordinates are cleared by
\[
C_3 = (\alpha_3\oplus\alpha_6)\circ\alpha_5,
\]
\[
C_4=(\alpha_3\oplus\alpha_6)\circ\alpha_5.
\]
In $C_3$:
\begin{itemize}
  \item $\alpha_3$ preserves the representation by $U_1$;
  \item $\alpha_6$ reads the third coordinate, multiplies its value by $-1$, and routes the result to the third output coordinate, with all other output coordinates set to zero;
  \item $\alpha_5$ adds the separately retained branch outputs coordinatewise. The third coordinate becomes $o_{i-1}+(-o_{i-1})=0$, while the remaining coordinates are preserved.
\end{itemize}
In $C_4$:
\begin{itemize}
  \item $\alpha_3$ preserves the representation produced by $C_3$;
  \item $\alpha_6$ reads the fourth coordinate, multiplies its value by -1, and routes the result to the fourth output coordinate, with all other output coordinates set to zero.
  \item $\alpha_5$ adds the branch outputs coordinatewise. The fourth coordinate becomes $o_{i+1}+(-o_{i+1})=0$, while the remaining coordinates are preserved.
\end{itemize}

Thus, the complete feed-forward component
\[
\begin{aligned}
F_1
&=U_1\circ C_3\circ C_4 \\ &=\big((\alpha_3\oplus\alpha_{10})\circ\alpha_5\big)\circ \big((\alpha3\oplus\alpha_6)\circ\alpha_5\big)\circ \big((\alpha_3\oplus\alpha_6)\circ\alpha_5\big). 
\end{aligned}
\]
updates the active bit and then resets the third and fourth coordinates to zero.
It performs
\[
\begin{bmatrix}
o_i\\ 1\\ o_{i-1}\\ o_{i+1}\\ i/n
\end{bmatrix}
\longmapsto
\begin{bmatrix}
o_i\\ a_i\\ 0\\ 0\\ i/n
\end{bmatrix}.
\]

\runhead{Second attention component.} 
The second attention component retrieves the nearest unmatched bracket on each side whenever such a bracket exists. Its algebraic representation is
\[
(\alpha_{12}\oplus\alpha_{18})\circ\alpha_5.
\]
Here:
\begin{itemize}
  \item $\alpha_{12}$ preserves the residual sequence produced by the first matching step, including its updated active bits.
  \item $\alpha_{18}$ is a configured multi-head attention action containing two directional heads. The left and right heads retrieve $l_i$ and $r_i$, respectively, and route them to the third and fourth coordinates.
  \item $\alpha_5$ adds the attention output to the preserved residual sequence.
\end{itemize}
Whenever the corresponding active neighbors exist, this gives
\[
\begin{bmatrix}
o_i\\ a_i\\ 0\\ 0\\ i/n
\end{bmatrix}
\longmapsto
\begin{bmatrix}
o_i\\ a_i\\ l_i\\ r_i\\ i/n
\end{bmatrix}.
\]
\runhead{Second feed-forward component.} 
The second feed-forward component matches adjacent opening–closing pairs in the remaining active sequence, preserves previously matched positions, and resets the third and fourth coordinates to $0$. 
The Second active bit update is
\[
U_2 = (\alpha_3\oplus\alpha_{10})\circ\alpha_5.
\]
Here:
\begin{itemize}
  \item $\alpha_3$ preserves the input vector at each position.
  \item $\alpha_{10}$ reads $o_i,a_i,l_i,r_i$ and implements
    \[
    a'_i=
    \begin{cases}
    0,&a_i=0,\\
    0,&a_i=1,\ o_i=+1,\ r_i=-1,\\
    0,&a_i=1,\ o_i=-1,\ l_i=+1,\\
    1,&\text{otherwise}.
    \end{cases}
    \]
    This correction updates the active bit.
  \item $\alpha_5$ adds the correction to the preserved input vector, producing $a'_i$ in the second coordinate.
\end{itemize}
Using the same clearing components $C_3$ and $C_4$ as in the first feed-forward component, the second feed-forward component is represented by
\[
\begin{aligned}
F_2
&=U_2\circ C_3\circ C_4 \\ &=\big((\alpha_3\oplus\alpha_{10})\circ\alpha_5\big)\circ \big((\alpha_3\oplus\alpha_6)\circ\alpha_5\big)\circ \big((\alpha_3\oplus\alpha_6)\circ\alpha_5\big)
\end{aligned}
\]
The resulting transformation is
\[
\begin{bmatrix}
o_i\\ a_i\\ l_i\\ r_i\\ i/n
\end{bmatrix}
\longmapsto
\begin{bmatrix}
o_i\\ a'_i\\ 0\\ 0\\ i/n
\end{bmatrix}.
\]

\runhead{Output interface.} 
The output action reads and emits the final active-bit sequence:
\[
  W\big((a'_1,\ldots,a'_n)\big).
\]
The emitted values are determined by the computed residual state. The external environment accepts exactly when every emitted entry is $0$. After output, control returns to the input-waiting process $T$.

\subsubsection{Verification of the Dyck-1-2 Transformer}

We now verify that the Dyck-1-2 Transformer exhibits the expected external behavior: it receives a nonempty input string, computes and emits the corresponding numerical result, and returns to its input-waiting process.
\runhead{State-transition equations.} 
For each $w\in\Sigma^+ $, define
\[
\begin{aligned}
T &= R(w)\circ T_2,\\
T_2 &= \big((\alpha_1\oplus\alpha_2)\circ\alpha_5\big)\circ T_3,\\
T_3 &= \big((\alpha_{12}\oplus\alpha_{18})\circ\alpha_5\big)\circ T_4,\\
T_4 &= F_1\circ T_5,\\
T_5 &= \big((\alpha_{12}\oplus\alpha_{18})\circ\alpha_5\big)\circ T_6,\\
T_6 &= F_2\circ T_7,\\
T_7 &= W\big((a'_1,\ldots,a'_n)\big)\circ T.
\end{aligned}
\]
Here, $F_1$ and $F_2$ are the composite feed-forward terms defined above.

\runhead{Abstraction and external behavior.} 
Use the abstraction set $I$ defined earlier in the manuscript, containing all configured internal actions occurring above. The interface actions $R(w)$ and $W(a'_1,\ldots,a'_n)$ remain visible.

\begin{theorem}[External behavior of the Dyck-1 Transformer.]
For every received input string $w$, the transformer construction of Dyck-1-2 computes and emits final active-bit sequence and returns to
input-waiting state. Thus, $\tau_I(T)$ satisfies the external behavior equation
\[
  \tau_I(T) = R(w)\circ W(a'_1,\ldots,a'_n)\circ\tau_I(T).
\]
\begin{proof}
Applying the abstraction axioms to the state-transition equations and expanding $F_1=U_1\circ C_3\circ C_4$ and $F_2=U_2\circ C_3\circ C_4$ gives
\[
\begin{aligned}
\tau_I(T) &= R(w)\circ\tau_I(T_2),\\
\tau_I(T_2) &= (\tau\oplus\tau)\circ\tau\circ\tau_I(T_3),\\
\tau_I(T_3) &= (\tau\oplus\tau)\circ\tau\circ\tau_I(T_4),\\
\tau_I(T_4) &= \tau_I(U_1)\circ\tau_I(C_3)\circ\tau_I(C_4)\circ\tau_I(T_5),\\
\tau_I(T_5) &= (\tau\oplus\tau)\circ\tau\circ\tau_I(T_6),\\
\tau_I(T_6) &= \tau_I(U_2)\circ\tau_I(C_3)\circ\tau_I(C_4)\circ\tau_I(T_7),\\
\tau_I(T_7) &= W\big((a'_1,\ldots,a'_n)\big)\circ\tau_I(T).
\end{aligned}
\]
Since $U_1$, $U_2$, $C_3$, and $C_4$ each consist of two parallel internal actions followed by addition, TI2, TI6, TI7, and B4 give
\[
\tau_I(U_1)=\tau_I(U_2)=\tau_I(C_3)=\tau_I(C_4)
=(\tau\oplus\tau)\circ\tau
=\tau\circ\tau.
\]
Substituting these equalities, simplifying the remaining parallel silent actions by B4, and applying A5 gives
\[
\begin{aligned}
\tau_I(T) &= R(w)\circ\tau_I(T_2),\\
\tau_I(T_2) &= \tau\circ\tau\circ\tau_I(T_3),\\
\tau_I(T_3) &= \tau\circ\tau\circ\tau_I(T_4),\\
\tau_I(T_4) &= \tau\circ\tau\circ\tau\circ\tau\circ\tau\circ\tau\circ\tau_I(T_5),\\
\tau_I(T_5) &= \tau\circ\tau\circ\tau_I(T_6),\\
\tau_I(T_6) &= \tau\circ\tau\circ\tau\circ\tau\circ\tau\circ\tau\circ\tau_I(T_7),\\
\tau_I(T_7) &= W\big((a'_1,\ldots,a'_n)\big)\circ\tau_I(T).
\end{aligned}
\]
For any continuation $Y$, A5 and B1 establish
\[
R(w)\circ(\tau\circ Y)
\overset{A5}{=}
(R(w)\circ\tau)\circ Y
\overset{B1}{=}
R(w)\circ Y.
\]
Substituting the abstracted continuation equations into each input branch and repeatedly applying this identity removes the finite sequence of internal silent actions. Consequently,
\[
\tau_I(T) = R(w)\circ W\big((a'_1,\ldots,a'_n)\big)\circ\tau_I(T).
\]

The resulting equation exhibits the required external behavior: the transformer receives an input string, emits its computed final active-bit sequence, and returns to its input-waiting process. 
\end{proof}
\end{theorem} 
%\newpage\input{section8.tex}
%\newpage\input{section9.tex}
%\newpage\input{section10.tex}
%\newpage\input{section11.tex}
\bibliographystyle{elsarticle-num}
\newpage\bibliography{Refs-CFMAI}

\end{document}